\newcommand{\plot}{reachability plot}
\newcommand{\newddg}{ordered dendrogram}
\newcommand{\wspd}{\textsc{Wspd}\xspace}
\newcommand{\minpts}{\textsf{minPts}\xspace}
\newcommand{\bccpstar}{\textsf{BCCP*}}
\newcommand{\bccp}{\textsf{BCCP}}
\newcommand{\CD}{\textsf{cd}}
\newcommand{\CDmin}{\textsf{cd}_{\min}}
\newcommand{\CDmax}{\textsf{cd}_{\max}}
\newcommand{\kdt}{$k$\text{d-tree}\xspace}
\newcommand{\knn}{$k$\text{-NN}\xspace}
\newcommand{\field}[2]{{#1}_{\textsf{#2}}}
\mathchardef\mhyphen="2D
\newcommand{\codevar}[1]{\mathit{#1}}
\newif\iffullversion
\def\dfnt@space@setup{%
  \dfnt@preskip=\parskip
    \dfnt@postskip=0pt}
\newtheoremstyle{exampstyle}
  {.05in} % Space above
  {.05in} % Space below
  {} % Body font
  {.5em} % Indent amount
  {\sc \bfseries} % Theorem head font
  {.} % Punctuation after theorem head
  {.5em} % Space after theorem head
  {} % Theorem head spec (can be left empty, meaning `normal')
\theoremstyle{exampstyle} 
\theoremstyle{exampstyle} 
\renewenvironment{proof}[1][\proofname]{\par
  \vspace{-\topsep}% remove the space after the theorem
  \pushQED{\qed}%
  \normalfont
  \topsep0pt \partopsep0pt % no space before
  \trivlist
  \item[\hskip\labelsep
        \itshape
    #1\@addpunct{.}]\ignorespaces
}{%
  \popQED\endtrivlist\@endpefalse
  \addvspace{3pt plus 3pt} % some space after
}
\crefname{section}{Sec.}{Sec.} 
\begin{document}
\fancyhead{}
% ****************** TITLE ****************************************

\title{Fast Parallel Algorithms for Euclidean Minimum Spanning Tree and Hierarchical Spatial Clustering}
\iffullversion
\else
\titlenote{The full version of this paper can be found at \julian{insert arxiv link}.}
\fi

\settopmatter{authorsperrow=4}

\author{Yiqiu Wang}
\affiliation{\institution{MIT CSAIL}}
\email{yiqiuw@mit.edu}
\author{Shangdi Yu}
\affiliation{\institution{MIT CSAIL}}
\email{shangdiy@mit.edu}
\author{Yan Gu}
\affiliation{\institution{UC Riverside}}
\email{ygu@cs.ucr.edu}
\author{Julian Shun}
\affiliation{\institution{MIT CSAIL}}
\email{jshun@mit.edu}

\begin{abstract}

This paper presents new parallel algorithms for generating Euclidean minimum
spanning trees and spatial clustering hierarchies (known as HDBSCAN$^*$). Our approach is based on generating a well-separated pair
decomposition followed by using Kruskal's minimum spanning tree algorithm
and bichromatic closest pair computations. We introduce a new notion of
well-separation to reduce the work and space of our algorithm for
HDBSCAN$^*$.
\iffullversion
We also present a parallel approximate algorithm for OPTICS based on a recent sequential algorithm by Gan and Tao.
Finally, we give a new parallel divide-and-conquer algorithm for computing the dendrogram
and reachability plots, which are used in visualizing clusters of different
scale that arise for both EMST and HDBSCAN$^*$.
\else
We also give a new parallel divide-and-conquer algorithm for computing the dendrogram
and reachability plots, which are used in visualizing clusters of different
scale that arise for both EMST and HDBSCAN$^*$.
\fi
We show that our algorithms are theoretically efficient: they 
have work (number of operations) matching their sequential
counterparts, and polylogarithmic depth (parallel time).

We implement our algorithms and propose a memory
optimization that requires only a subset of well-separated pairs to
be computed and materialized, leading to savings in both space (up to 10x) and
time (up to 8x).  Our experiments on large real-world and synthetic data sets
using a 48-core machine show that our fastest algorithms outperform the best
serial algorithms for the problems by 11.13--55.89x, and existing parallel
algorithms by at least an order of magnitude.

\end{abstract}

\settopmatter{printfolios=true}
\maketitle

\section{Introduction}\label{sec:intro}

This paper studies the two related geometric problems of Euclidean
minimum spanning tree (EMST) and hierarchical density-based spatial
clustering with added noise~\cite{Campello2015}. The problems
take as input a set of $n$ points in a $d$-dimensional space. EMST
computes a minimum spanning tree on a complete graph formed among the
$n$ points with edges between two points having the weight equal to their
Euclidean distance. EMST has many applications, including in
single-linkage clustering~\cite{gower1969minimum}, network placement
optimization~\cite{Wan2002}, and approximating the Euclidean traveling
salesman problem~\cite{Vazirani2010}.

\emph{Hierarchical density-based spatial clustering of applications with noise}
  (\emph{\hdbscan}) is a popular hierarchical clustering
algorithm~\cite{Campello2015}. The goal of density-based spatial
clustering is to cluster points that are in dense regions and close
together in proximity.  One of the most widely-used density-based
spatial clustering methods is the \emph{density-based spatial
  clustering of applications with noise} (\emph{DBSCAN}) method by Ester et
al.~\cite{Ester1996}.  DBSCAN requires two parameters, $\epsilon$ and
$\minpts$, which determine what is considered ``close'' and ``dense'',
respectively.  In practice, $\minpts$ is usually fixed to a small
constant, but many different values of $\epsilon$ need to be explored
in order to find high-quality clusters. Many efficient DBSCAN
algorithms have been designed both for the
sequential~\cite{BergGR17,GanT17,Gunawan13,Chen05} and the parallel
context (both shared memory and distributed memory)~\cite{wang2019dbscan,Song2018,Lulli2016,Hu2017,Patwary12,Gotz2015}.
To avoid repeatedly executing
DBSCAN for different values of $\epsilon$, the
OPTICS~\cite{Ankerst1999} and \hdbscan~\cite{Campello2015} algorithms
have been
proposed for constructing DBSCAN clustering hierarchies, from which clusters from
different values of $\epsilon$ can be generated. These algorithms are known to be robust to outliers in the
data set.  The algorithms are based on generating a minimum spanning
tree on the input points, where a subset of the edge weights are
determined by Euclidean distance and the remaining edge weights are
determined by a DBSCAN-specific metric known as the core distance (to
be defined in Section~\ref{sec:prelims}). Thus, the algorithms bear
some similarity to EMST algorithms.

There has been a significant amount of theoretical work on designing
fast sequential EMST algorithms
(e.g.,~\cite{Agarwal1991,Yao1982,Shamos1975Closest,Arya2016,CallahanK93}). There
have also been some practical implementations of
EMST~\cite{march2010fast,ChatterjeeCK10,Bentley78,NarasimhanZ01},
although most of them are sequential (part of the algorithm by
Chatterjee et al.~\cite{ChatterjeeCK10} is parallel).  The
state-of-the-art EMST implementations are either based on
generating a well-separated pair decomposition
(WSPD)~\cite{CallahanK95} and applying Kruskal's minimum spanning tree (MST) algorithm on
edges produced by the WSPD~\cite{ChatterjeeCK10,NarasimhanZ01}, or
dual-tree traversals on $k$-d trees integrated into Boruvka's
MST algorithm~\cite{march2010fast}. 
Much less work has been proposed for parallel \hdbscan and OPTICS~\cite{Patwary2013,santos2019hdbscanmapreduce}.
In this paper, we design new algorithms for EMST, which can also be leveraged to design a fast parallel \hdbscan algorithm.

This paper presents practical parallel in-memory algorithms for EMST
and \hdbscan, and proves that the
theoretical work (number of operations) of our implementations matches their state-of-the-art
counterparts ($O(n^2)$), while having polylogarithmic depth.\footnote{The work is the total number of operations and depth
  (parallel time) is the length of the longest sequential dependence.} Our algorithms are based on finding the WSPD and then running Kruskal's algorithm on the WSPD edges.
For our \hdbscan algorithm, we propose a new notion of
well-separation to include the notion of core distances, which enables
us to improve the space usage and work of our algorithm.

Given the MST from the EMST or the \hdbscan problem, we provide an
algorithm to generate a dendrogram, which represents the hierarchy of
clusters in our data set. For EMST, this solves the single-linkage
clustering problem~\cite{gower1969minimum}, and for \hdbscan, this gives us a
dendrogram as well as a reachability plot~\cite{Campello2015}.  We
introduce a work-efficient\footnote{A \defn{work-efficient} parallel algorithm has a work bound that matches the best sequential algorithm for the problem.}
parallel divide-and-conquer algorithm
that first generates an Euler tour on the tree, splits the tree into
multiple subtrees, recursively generates the dendrogram for each
subtree, and glues the results back together. An in-order traversal of
the dendrogram gives the reachability plot. Our algorithm takes
$O(n\log n)$ work and $O(\log^2 n\log\log n)$ depth. Our parallel dendrogram algorithm is of independent interest, as it
can also be applied to generate dendrograms for other clustering problems.

We provide optimized parallel implementations of our EMST and
\hdbscan algorithms. We introduce a
memory optimization that avoids computing and
materializing many of the WSPD pairs, which significantly improves the
performance of our algorithm (up to 8x faster and 10x less space).
We also provide optimized implementations of $k$-d
trees, which our algorithms use for spatial queries.

We perform a comprehensive set of experiments on both synthetic and
real-world data sets using varying parameters, and compare the
performance of our implementations to optimized sequential implementations as well as
existing parallel implementations.
Compared to existing EMST sequential implementations~\cite{march2010fast,mcinnes2017accelerated}, our fastest sequential
implementation is 0.89--4.17x faster (2.44x on average). % mlpack
On a 48-core machine with
hyper-threading, our EMST implementation achieves 14.61--55.89x speedup over the fastest sequential
implementations.
Our \hdbscan implementation  achieves
11.13--46.69x speedup over the fastest sequential implementations.
Compared to existing sequential and parallel implementations for \hdbscan~\cite{Gan2018, mcinnes2017accelerated,santos2019hdbscanmapreduce, Patwary2013}, our implementation is at least an order of magnitude faster.
Our source code is publicly available at {\url{https://github.com/wangyiqiu/hdbscan}}.

We summarize our contributions below:
\begin{itemize}[topsep=1pt,itemsep=0pt,parsep=0pt,leftmargin=15pt]
    \item New parallel algorithms for EMST and \hdbscan with strong theoretical guarantees.
    \item A new definition of well-separation that computes the HDBSCAN* MST using asymptotically less space.
    \item Memory-optimized parallel implementations for EMST and \hdbscan that give significant space and time improvements.
    \item A new parallel algorithm for dendrogram construction.
    \item A comprehensive experimental study of the proposed methods.
\end{itemize}

\iffullversion
In the Appendix,
we present several
additional theoretical results: (1) an EMST algorithm with
subquadratic work and polylogarithmic depth based on a
subquadratic-work sequential algorithm by Callahan and
Kosaraju~\cite{CallahanK93}; (2) an \hdbscan algorithm for two
dimensions with $O(\minpts^2 \cdot n\log n)$ work, matching the sequential algorithm by Berg et
al.~\cite{BergGR17}, and $O(\minpts \cdot
\log^2 n)$ depth; and (3) a work-efficient parallel algorithm for
approximate OPTICS based on the sequential algorithm by
Gan and Tao~\cite{Gan2018}.

The rest of the paper is organized as
follows. Section~\ref{sec:prelims} introduces relevant definitions. Section~\ref{sec:wspd} presents our parallel well-separated
pair decomposition approach and uses it to obtain parallel algorithms
for the two problems.
Section~\ref{sec:dendro} presents our
parallel dendrogram construction algorithm, which can be used to
generate the single-linkage clustering, \hdbscan dendrogram, and
reachability plot. Section~\ref{sec:experiment} presents
experimental results.  We conclude in
Section~\ref{sec:conclusion}.
\fi
\section{Preliminaries}\label{sec:prelims}

\subsection{Problem Definitions}

\myparagraph{EMST} The
\defn{Euclidean Minimum Spanning Tree (EMST)} problem takes
$n$ points $\mathcal{P}=\{p_1,\ldots, p_{n}\}$ and returns a minimum spanning
  tree (MST) of
the complete undirected Euclidean graph of $\mathcal{P}$.

\myparagraph{DBSCAN$^*$} The \defn{DBSCAN$^*$} (density-based spatial
clustering of applications with noise) problem takes as input $n$
points $\mathcal{P}=\{p_1,\ldots, p_n\}$, a distance function $d$, and
two parameters $\epsilon$ and $\minpts$~\cite{Ester1996,Campello2015}.
A point $p$ is a \defn{core point} if and only if $\left|\{p_i
~|~p_i\in \mathcal{P},d(p,p_i)\le \epsilon\}\right|\ge \minpts$. A
point is called a \defn{noise point} otherwise.  We denote the set of
core points as $\mathcal{P}_{\codevar{core}}$.  DBSCAN$^*$ computes a partition of $\mathcal{P}_{\codevar{core}}$, where each subset is referred to as a \defn{cluster}, and also returns the remaining points as noise points.
Two points $p,q\in \mathcal{P}_{\codevar{core}}$ are in
the same cluster if and only if there exists a list of points
$p=\bar{p}_1, \bar{p}_2, \ldots, \bar{p}_{k-1}, \bar{p}_k=q$ in
$\mathcal{P}_{\codevar{core}}$ such that
$d(\bar{p}_{i-1},\bar{p}_{i})\le \epsilon$ for all $1<i\leq k$.
For a given set of points
and two parameters $\epsilon$ and $\minpts$, the clusters returned are
unique.\footnote{The original DBSCAN definition includes the notion of
  border points, which are non-core points that are within a distance
  of $\epsilon$ to core points~\cite{Ester1996}. DBSCAN$^*$ chooses to
  omit this to be more consistent with a statistical interpretation of
  clusters~\cite{Campello2015}.}

\myparagraph{\hdbscan} The \defn{\hdbscan} (hierarchical DBSCAN$^*$)
problem~\cite{Campello2015} takes the same input as DBSCAN$^*$, but
without the $\epsilon$ parameter, and computes a hierarchy of
DBSCAN$^*$ clusters for all possible values of $\epsilon$. We first
introduce some definitions, and then describe how the hierarchy is
computed and represented. The \defn{core distance} of a point $p$,
$\CD(p)$, is the distance from $p$ to its $\minpts$-nearest neighbor
(including $p$ itself).  The \defn{mutual reachability distance}
between two points $p$ and $q$ is defined to be $d_m(p,q) =
max\{\CD(p),\CD(q),d(p,q)\}$.
The \defn{mutual reachability graph} $G_\codevar{MR}$ is a complete undirected
graph, where the vertices are the points in $\mathcal{P}$, and the
edges are weighted by the mutual reachability distances.\footnote{The
related OPTICS problem also generates a hierarchy of clusters but
with a definition of reachability distance that is asymmetric,
leading to a directed graph~\cite{Ankerst1999}.}

The \hdbscan hierarchy is sequentially computed in two steps~\cite{Campello2015}.
The first step computes an MST of $G_\codevar{MR}$ and then adds a self-edge to each vertex weighted by its core distance. An example MST is shown in Figure~\ref{fig:hdbscan}a.
We note that the \hdbscan MST with $\minpts=1$ is equivalent to the EMST, since the mutual reachability distance at $\minpts=1$ is equivalent to the Euclidean distance.
\iffullversion
We further elaborate on the relationship between \hdbscan and EMST in Appendix~\ref{sec:minpts_ex}.
\fi
A dendrogram representing clusters at different values of $\epsilon$
is computed by removing edges from the MST plus self-edges graph in
decreasing order of weight.  The root of the dendrogram is a cluster
containing all points. Each non-self-edge removal splits a cluster
into two, which become the two children of the cluster in the
dendrogram.
The height of the split cluster in the dendrogram
is equal to the weight of the removed edge.
If the removed edge is a self-edge, we mark the
component (point) as a noise point.  An example of a dendrogram is
shown in Figure~\ref{fig:hdbscan}b.  If we want to return the clusters
for a particular value of $\epsilon$, we can horizontally cut the
dendrogram at that value of $\epsilon$ and return the 
resulting subtrees below the cut as the clusters or noise points. This is equivalent to removing edges
from the MST of $G_\codevar{MR}$ with weight greater than $\epsilon$.

For \hdbscan, the reachability plot (OPTICS sequence)~\cite{Ankerst1999} contains all
points in $\mathcal{P}$ in some order $\{p_i |\ i = 1, \dots, n\}$,
where each point $p_i$ is represented as a bar with height $\min \{d_m(p_i,
p_j)\ |\ j < i\}$. For \hdbscan, the order of the points is the order that they are visited in an execution of Prim's algorithm
on the MST of $G_\codevar{MR}$ starting from an
arbitrary point~\cite{Ankerst1999}. An example is shown in
Figure~\ref{fig:hdbscan}c.  Intuitively, the "valleys" of the
reachability plot correspond to clusters~\cite{Campello2015}.

\begin{figure}
  \vspace{-5pt}
  \includegraphics[trim={0 0 0 40},clip,width = 0.85\columnwidth]{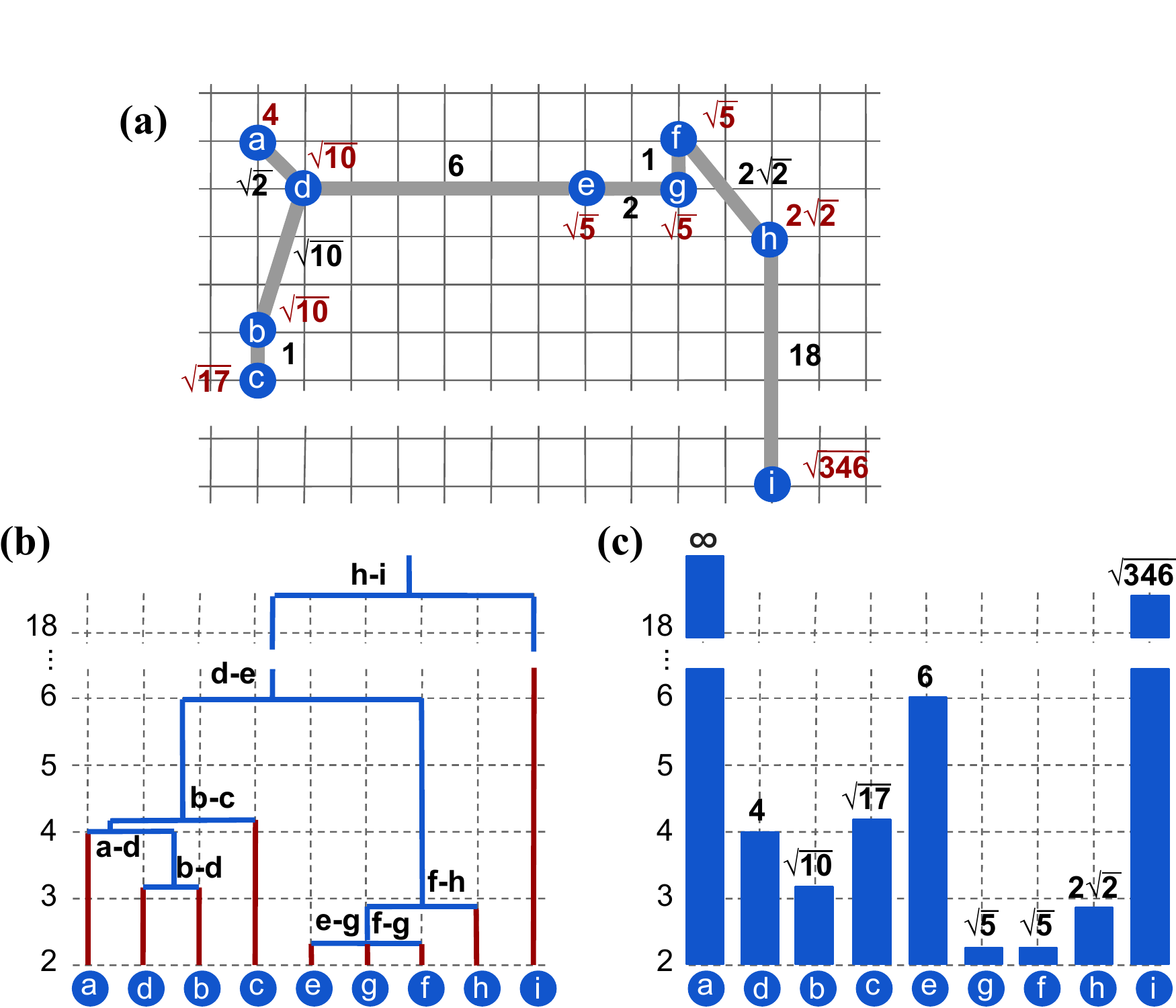}
  \caption{(a) An MST of the \hdbscan mutual reachability graph on an example data set in 2D. The red number
  next to each point is the core distance of the point for $\minpts=3$.
  The Euclidean distances between points are denoted by grey edges, whose values are marked in black.
  For example, $a$'s core distance is 4 because $b$ is $a$'s third nearest neighbor (including itself) and $d(a,b) = 4$.
  The edge weight of $(a,d)$ is $\max\{4,\sqrt{10},\sqrt{2}\} = 4$.
  (b) An \hdbscan dendrogram for the data set.
  A point becomes a noise point when its vertical line becomes red. For example, if we
  cut the dendrogram at $\epsilon=3.5$, then we have two clusters $\{d,b\}$ and $\{e,g,f,h\}$, while
  $a$, $c$ and $i$ are noise points.
  (c) A reachability plot for the data set starting at point $a$. The two ``valleys'',
  $\{a,b,c,d\}$  and $\{e,f,g,h\}$, are the two most obvious clusters.
    }%\vspace{-4mm}
  \label{fig:hdbscan}
\end{figure}

\subsection{Parallel Primitives}
We use the classic
\defn{work-depth model} for analyzing parallel shared-memory
algorithms~\cite{CLRS,KarpR90,JaJa92}.
The \defn{work} $W$ of an algorithm is the number of instructions in
the computation, and the \defn{depth} $D$ is the longest sequential
dependence.  Using Brent's scheduling theorem~\cite{Brent1974},  we can
execute a parallel computation in $W/p+D$ running time using $p$ processors.
In practice, we can use randomized work-stealing schedulers that are available in existing languages such as Cilk, TBB, X10, and Java Fork-Join.
We assume that
priority concurrent writes are supported in $O(1)$ work and depth.

\defn{Prefix sum} takes as input a sequence $[a_1, a_2, \ldots , a_{n}]$, an
associative binary operator $\oplus$, and an identity element $i$, and returns
the sequence $[i, a_1, (a_1 \oplus a_2), \ldots , (a_1 \oplus a_2
  \oplus \ldots \oplus a_{n-1})]$
as well as the overall sum (using binary operator $\oplus$) of the
elements.  
\defn{Filter} takes an array $A$ and a
predicate function $f$, and returns a new array containing $a \in A$
for which $f(a)$ is true, in the same order that they appear in $A$. Filter can be implemented using prefix sum.
\defn{Split} takes an array $A$ and a predicate
function $f$, and moves all of the ``true'' elements before the
``false'' elements.
Split can be implemented using filter.
The \defn{Euler tour} of a tree takes as input an
adjacency list representation of the tree and returns a directed
circuit that traverses every edge of the tree exactly once.  \defn{List ranking} takes a linked list
with values on each node and returns for each node the sum of values
from the node to the end of the list.
All of
the above primitives can be implemented in $O(n)$ work and $O(\log n)$
depth~\cite{JaJa92}.
\defn{Semisort}~\cite{gu15semisort} takes as input $n$ items, each with a key, and groups the items with the same key together, without any guarantee on the ordering of items with different keys. 
This algorithm takes $O(n)$ expected work and $O(\log n)$ depth with high probability.
A parallel \defn{hash table} supports $n$ inserts, deletes, and finds in $O(n)$ work and $O(\log n)$ depth with high probability~\cite{Gil91a}.
\textsc{WriteMin} is a priority concurrent write that takes as input two
arguments, where the first argument is the location to write to and the
second argument is the value to write; on concurrent writes, 
the smallest value is written~\cite{ShunBFG2013}.  

\subsection{Relevant Techniques}

\myparagraph{$k$-NN Query} A \defn{$k$-nearest neighbor ($k$-NN)
  query} takes a point data set $\mathcal{P}$ and a distance function,
and returns for each point in $\mathcal{P}$ its $k$ nearest neighbors
(including itself).  Callahan and Kosaraju~\cite{callahan1993optimal}
show that $k$-NN queries in Euclidean space for all points can be
solved in parallel in $O(kn\log n)$ work and $O(\log n)$ depth.

\myparagraph{\kdt{}} A \defn{\kdt{}} is a commonly used data structure for $k$-NN
queries~\cite{friedman1977kdtree}.  It is a binary tree that is
constructed recursively: each node in the tree represents a
set of points, which are partitioned between its two children by
splitting along one of the dimensions; this process is recursively
applied on each of its two children until a leaf node is reached (a
leaf node is one that contains at most $c$ points, for a predetermined
constant $c$). It can be constructed in parallel by processing each
child in parallel. A $k$-NN query can be answered by traversing nodes
in the tree that are close to the input point, and pruning nodes
further away that cannot possibly contain the $k$ nearest neighbors.

\myparagraph{$\bccp$ and $\bccpstar$} Existing algorithms, as well as
some of our new algorithms, use subroutines for solving the
\defn{bichromatic closest pair ($\bccp$)} problem, which takes as
input two sets of points, $A$ and $B$, and returns the pair of points
$p_1$ and $p_2$ with minimum distance between them, where $p_1 \in A$
and $p_2 \in B$. 
We also define a variant,
the \defn{$\bccpstar$} problem, that finds the pair of points with the
minimum mutual reachability distance, as defined for \hdbscan.

\myparagraph{Well-Separated Pair Decomposition} We use the same
definitions and notations as in Callahan and
Kosaraju~\cite{CallahanK95}.  Two sets of points, $A$ and $B$, are
\defn{well-separated} if $A$ and $B$ can each be contained in spheres of
radius $r$, and the minimum distance between the two spheres 
is at least $sr$, for a \defn{separation constant} $s$ (we use $s=2$ throughout the paper). 
An \defn{interaction product} of point sets $A$ and $B$ is defined to be
$A \otimes B = \{\{p,p'\}|\ p \in A,\ p' \in B,\ p\neq p'\}$.  The
set $\{\{A_1,B_1\},\ldots,\{A_k,B_k\}\}$ is a
\defn{well-separated realization} of $A \otimes B$ if:
\textbf{(1)} $A_i \subseteq A$ and $B_i \subseteq B$ for all $i=1,...,k$; \textbf{(2)} $A_i \cap B_i = \emptyset$ for all $i=1,...,k$; \textbf{(3)} $(A_i \otimes B_i) \ \bigcap\ (A_j \otimes B_j) = \emptyset$ for all $i,j$ where $1 \leq i < j \leq k$; \textbf{(4)} $A \otimes B = \bigcup_{i=1}^k A_i \otimes B_i$; \textbf{(5)} $A_i$ and $B_i$ are well-separated for all $i=1,...,k$.
%\begin{enumerate}[label=(\textbf{\arabic*}),topsep=1pt,itemsep=0pt,parsep=0pt,leftmargin=15pt]
%\item $A_i \subseteq A$ and $B_i \subseteq B$ for all $i=1,...,k$.
%\item $A_i \cap B_i = \emptyset$ for all $i=1,...,k$.
%\item $(A_i \otimes B_i) \ \bigcap\ (A_j \otimes B_j) = \emptyset$ for all $i,j$ where $1 \leq i < j \leq k$.
%\item $A \otimes B = \bigcup_{i=1}^k A_i \otimes B_i$.
%\item $A_i$ and $B_i$ are well-separated for all $i=1,...,k$.
%\end{enumerate}

For a point set $\mathcal{P}$, a \defn{well-separated pair decomposition (WSPD)} is
a well-separated realization of $\mathcal{P} \otimes \mathcal{P}$.
We discuss how to construct a WSPD using a \kdt in Section~\ref{sec:wspd}.

\myparagraph{Notation} Table~\ref{table:notation} shows notation frequently used in the paper. 

\begin{table}[]
\footnotesize
  \begin{tabular}{|c | l |}
\hline
\textbf{Notation} & \multicolumn{1}{c|}{\textbf{Definition}} \\ \hline
$d(p,q)$ & Euclidean distance between points $p$ and $q$. \\ \hline
$d_m(p,q)$ & Mutual reachability distance between points $p$ and $q$. \\ \hline
$d(A,B)$ & \makecell[cl]{Minimum distance between the bounding spheres of \\ points in tree node $A$ and points in tree node $B$. }\\ \hline
$w(u,v)$ & Weight of edge $(u,v)$. \\ \hline
$\field{A}{diam}$ & \makecell[cl]{Diameter of the bounding sphere of points in tree node $A$.}\\ \hline
$\CDmin(A)$ & Minimum core distance of points in tree node $A$. \\ \hline
$\CDmax(A)$ & Maximum core distance of points in tree node $A$. \\ \hline
\end{tabular}
  \caption{Summary of Notation}\label{table:notation}  %\vspace{-10mm}
\end{table}

\section{Parallel EMST and \hdbscan}\label{sec:wspd}

In this section, we present our new parallel algorithms for EMST and \hdbscan. We also introduce our new memory optimization to improve space usage and performance in practice.
\subsection{EMST}\label{sec:emst_wspd}

To solve EMST, Callahan and Kosaraju present an algorithm for constructing a WSPD that 
creates an edge between the $\bccp$ of each pair in the WSPD with weight equal to their distance, and then runs an MST algorithm on these edges. They show that their algorithm takes $O(T_d(n,n) \log n)$ work~\cite{CallahanK93}, where $T_d(n,n)$
refers to the work of computing $\bccp$ on two sets each of size $n$.

For our parallel EMST algorithm, we parallelize WSPD construction algorithm, 
and then develop a parallel variant of Kruskal's MST algorithm that runs on the edges formed by the pairs in the WSPD. We also propose a non-trivial optimization to make the
implementation fast and memory-efficient.

\subsubsection{Constructing a WSPD in Parallel}\label{sec:wspd-impl}

We introduce the basic parallel WSPD in Algorithm~\ref{alg:wspd}.
Prior to calling WSPD, we construct a spatial median \kdt{} $T$ in parallel
with each leaf containing one point.
Then, we call the procedure \wspd{} on \cref{line:wspd} and make the root node of
$T$ its input.  In \textsc{Wspd}, we make parallel calls to \textsc{FindPair}
on the two children of all non-leaf nodes by recursively calling \textsc{Wspd}.
The procedure
\textsc{FindPair} on \cref{line:findpair} takes as input a pair
$(P,P')$ of nodes in $T$, and checks whether $P$ and $P'$ are well-separated. If they are well-separated, then the algorithm
records them as a well-separated pair on \cref{line:wspd-check}; otherwise, the algorithm
splits the set with the larger bounding sphere into its two children
and makes two recursive calls in parallel
(Lines~\ref{line:wspd-spawn-start}--\ref{line:wspd-spawn-end}). This
process is applied recursively until the input pairs are
well-separated.
The major difference of Algorithm 1 from the serial version is the parallel thread-spawning on Lines 3--5 and 12--14. This procedure generates a WSPD with $O(n)$ pairs~\cite{CallahanK93}.

\begin{algorithm}[!t]
  \caption{Well-Separated Pair Decomposition}\label{alg:wspd}
\begin{algorithmic}[1]
  \Procedure{Wspd}{$A$} \label{line:wspd}
  \If {$|A| > 1$}
  \InParallel
  \State \Call{Wspd}{$A_{\codevar{left}}$} \Comment{parallel call on the left child of $A$}
  \State \Call{Wspd}{$A_{\codevar{right}}$} \Comment{parallel call on the right child of $A$}
  \EndParallel
  \State \Call{FindPair}{$A_{\codevar{left}}$, $A_{\codevar{right}}$}
  \EndIf
  \EndProcedure

  \Procedure{FindPair}{$P$, $P'$} \label{line:findpair}
  \If {$\field{P}{diam} < \field{P'}{diam}$}
  \State \Call{Swap}{$P$, $P'$}
  \EndIf

  \If {\Call{WellSeparated}{$P$, $P'$}} 
  \Call{Record}{$P$, $P'$}\label{line:wspd-check}
  \Else
  \InParallel
  \State \Call{FindPair}{$P_{\codevar{left}}$, $P'$} \Comment{$P_{\codevar{left}}$ is the left child of $P$} \label{line:wspd-spawn-start}
  \State \Call{FindPair}{$P_{\codevar{right}}$, $P'$} \Comment{$P_{\codevar{right}}$ is the right child of $P$} \label{line:wspd-spawn-end}
  \EndParallel
  \EndIf
  \EndProcedure
\end{algorithmic}
\end{algorithm}

\begin{algorithm}[!t]
\caption{Parallel GeoFilterKruskal}\label{alg:gfk}
\begin{algorithmic}[1]
  \Procedure{ParallelGFK}{WSPD: $S$, Edges: $E_{\codevar{out}}$, UnionFind: $\codevar{UF}$}\label{line:gfk}

  \State $\beta = 2$
  \While{$|E_{\codevar{out}}| < (n-1)$}

  \State $(S_l,S_u)=$ \Call{Split}{$S$, $f_\beta$} \Comment{For a pair $(A,B)$, $f_\beta$ checks if $|A|+|B|\leq\beta$} \label{line:betasplit}
  \State $\rho_{\codevar{hi}} = \min_{(A,B)\in S_u}d(A,B)$ \label{line:getrho}
  \State $(S_{l1},S_{l2}) = $ \Call{Split}{$S_l$, $f_{\rho_{\codevar{hi}}}$}  \Comment{For a pair $(A,B)$, $f_{\rho_{\codevar{hi}}}$ checks if $\bccp(A,B)\leq\rho_{\codevar{hi}}$} \label{line:rhosplit}
\State $E_{l1} = $ \Call{GetEdges}{$S_{l1}$} \Comment{Retrieves edges associated with pairs in $S_{l1}$}\label{line:getedges}
  \State \Call{ParallelKruskal}{$E_{l1}$, $E_{\codevar{out}}$, $\codevar{UF}$}\label{line:kruskal}
 % \State $S = S_{l2} \cup S_u$ \label{line:union}
  \State $S=$ \Call{Filter}{$S_{l2}\cup S_u$, $f_{\codevar{diff}}$} \Comment{For a pair $(A,B)$, $f_{\codevar{diff}}$ checks  points in $A$ are in different component from $B$ in $\codevar{UF}$}\label{line:filtering}
  %\If{$|T| < (n-1)$}
  %\State \Call{ParallelGFK}{$S_{\codevar{new}}$, $E_{\codevar{out}}$, $\codevar{UF}$, $\beta \times2$}
  %\EndIf
  \State $\beta=\beta\times 2$
  \EndWhile
  \EndProcedure
\end{algorithmic}
\end{algorithm}

\begin{figure}
  \vspace{-3pt}
  \centering
  \includegraphics[trim=0 0 0 0,clip,width = 0.7\columnwidth]{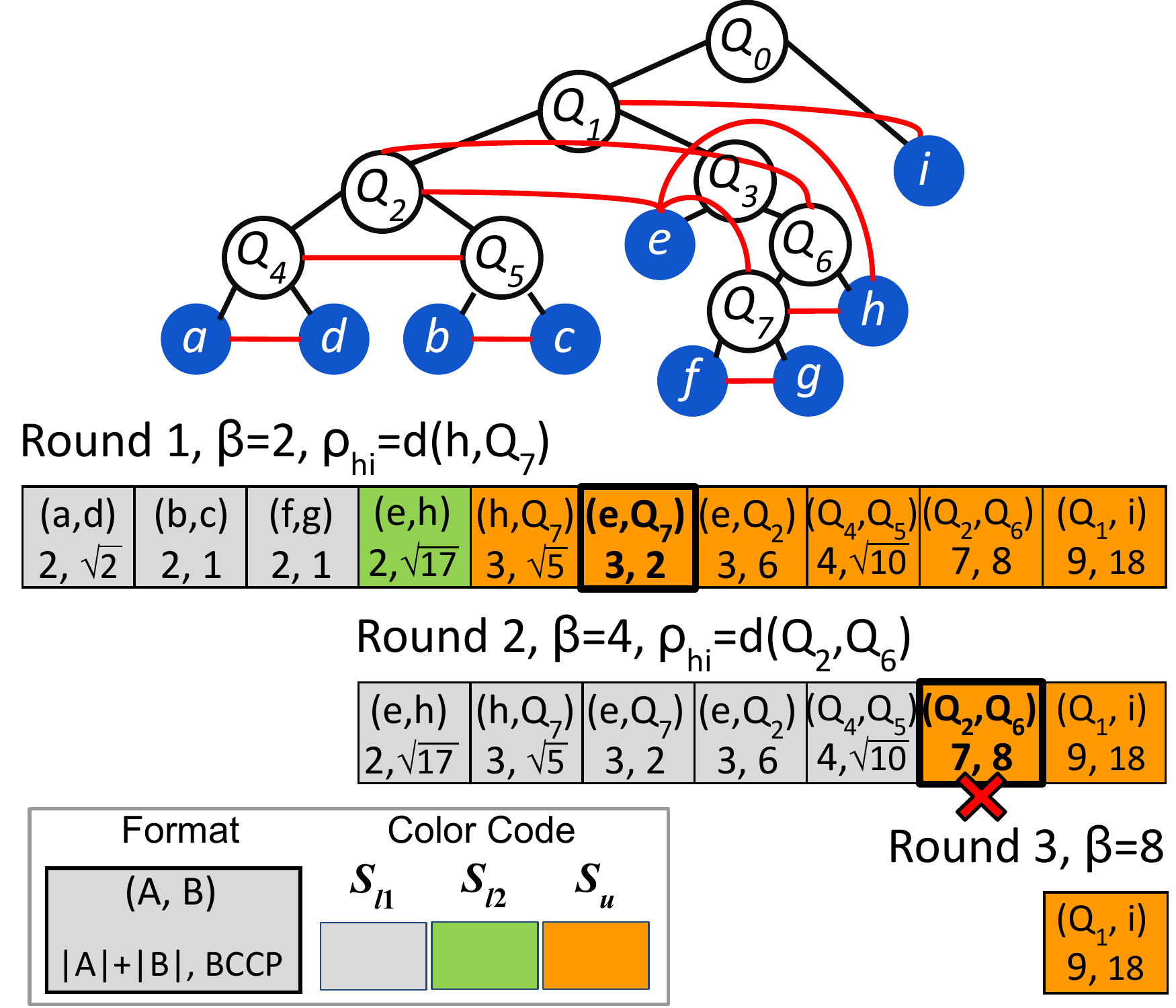}
  \caption{
    The is an example for both \textsc{GFK} (Algorithm~\ref{alg:gfk})
    and \textsc{MemoGFK} (Algorithm~\ref{alg:memogfk}) for EMST corresponding to the data set shown in Figure~\ref{fig:hdbscan}.
    The red lines linking tree nodes and the boxes drawn below represent
    well-separated pairs, 
    and the boxes also show the cardinality and \bccp{} value of the pair.
    Their correspondence with the symbols ($S_{l1}$, $S_{l2}$, $S_{u}$) in the pseudocode are color-coded.
    The pairs that generate $\rho_{\codevar{hi}}$ are bold-squared, and the pairs filtered out have a red cross.
    Using our MemoGFK optimization,
    only the pairs in $S_{l1}$ needs to be materialized, in contrast to needing to
    materialize all of the pairs in GFK.
  }
%  \vspace{-14pt}
  \label{fig:wspd_ex}
\end{figure}

\subsubsection{Parallel GFK Algorithm for EMST}\label{sec:gfk}

The original algorithm by Callahan and Kosaraju~\cite{CallahanK93}
  computes the $\bccp$ between each pair in the WSPD to
generate a graph from which an MST can be computed to obtain the EMST.
However, it is not necessary to compute the $\bccp$ for all pairs, as
observed by Chatterjee et al.~\cite{ChatterjeeCK10}.  Our
implementation only computes the $\bccp$ between a pair if their
points are not yet connected in the spanning forest generated so far.
This optimization reduces the total number of $\bccp$
calls. Furthermore, we propose a memory optimization that avoids
materializing all of the pairs in the WSPD. We will first describe how
we obtain the EMST from the WSPD, and then give details of our memory
optimization.

The original Kruskal's algorithm is an MST algorithm that takes
input edges sorted by non-decreasing weight, and processes the edges
in order, using a union-find data structure to join components for
edges with endpoints in different components.  Our implementation is
inspired by a variant of Kruskal's algorithm, GeoFilterKruskal (GFK).
This algorithm was used for sequential EMST by Chatterjee et
al.~\cite{ChatterjeeCK10}, and for MST in general graphs by Osipov et
al.~\cite{Osipov2009}.  It improves Kruskal's algorithm by avoiding
the $\bccp$ computation between pairs unless needed, and prioritizing
$\bccp$s between pairs with smaller cardinalities, which are cheaper,
with the goal of pruning more expensive $\bccp$ computations.

We propose a parallel GFK algorithm as shown in Algorithm~\ref{alg:gfk}.
It uses Kruskal's MST
algorithm as a subroutine by passing it batches of edges, where each
batch has edges with weights no less than those of edges in previous batches, and the union-find structure is shared across multiple invocations of Kruskal's algorithm.
\textsc{ParallelGFK} takes as input the WSPD
pairs $S$, an array $E_{\codevar{out}}$ to store the MST edges, and a union-find
structure $\codevar{UF}$.
On each round, given a constant $\beta$, we only consider node pairs in the WSPD with cardinality (sum of sizes)
at most $\beta$ because it is cheaper to compute their $\bccp$s.
To do so, the set of pairs $S$ is partitioned into $S_l$, containing
pairs with cardinality at most $\beta$, and $S_u$, containing the
remaining pairs (\cref{line:betasplit}).
However, it is only correct
to consider pairs in $S_l$ that produce edges lighter than any of
the pairs in $S_u$. On \cref{line:getrho},  we compute an upper bound $\rho_{\codevar{hi}}$ for the edges in
$S_l$  by setting $\rho_{\codevar{hi}}$ equal to the minimum
$d(A,B)$ for all $(A,B)\in S_u$ (this is a lower bound on the edges weights formed by these pairs).
In the example shown in Figure~\ref{fig:wspd_ex},
in the first round, with $\beta=2$, 
 the set $S_l$ contains $(a,d)$, $(b,c)$, $(f,g)$, and $(e,h)$, and the set 
$S_u$ contains $(h,Q_7)$, $(e,Q_7)$, $(e,Q_2)$, $(Q_4,Q_5)$, $(Q_2,Q_6)$, and $(Q_1,i)$.
$\rho_{\codevar{hi}}$ corresponds to $(e,Q_7)$ on Line~\ref{line:getrho}.
Then, we compute the $\bccp$ of all
elements of set $S_l$, and split it into $S_{l1}$ and $S_{l2}$,
where $S_{l1}$ has edges with weight at most $\rho_{\codevar{hi}}$
(\cref{line:rhosplit}).
On Line~\ref{line:rhosplit},  $S_{l1}$ contains 
$(a,d)$, $(b,c)$ and $(f,g)$, as their $\bccp$ distances
are smaller than $\rho_{\codevar{hi}}=d(e,Q_7)$, and $S_{l2}$ contains $(e,h)$ .
After that, $E_{l1}$, the edges corresponding to
$S_{l1}$, are passed to Kruskal's algorithm (Lines~\ref{line:getedges}--\ref{line:kruskal}).  The remaining pairs
$S_{l2}\cup S_u$ are then filtered based on the result of Kruskal's
algorithm (Line~\ref{line:filtering})---in
particular, pairs that are connected in the union-find structure of
Kruskal's algorithm can be discarded, and for many of these pairs we never have to compute their
 $\bccp$.
In Figure~\ref{fig:wspd_ex}, the second round processes $(e,h)$, $(h,Q_7)$, $(e,Q_7)$, $(e,Q_2)$, $(Q_4,Q_5)$, $(Q_2,Q_6)$, and $(Q_1,i)$, and works similarly to Round~1. However, $(Q_2,Q_6)$ gets filtered out during the second round, and we never have to compute its $\bccp$, leading to less work compared to a naive algorithm. Finally, the subsequent rounds process a single pair $(Q_1,i)$.
At the end of each round, we double the value of $\beta$
to ensure that there are logarithmic number of rounds and hence better depth (in contrast, the sequential algorithm of Chatterjee et al.~\cite{ChatterjeeCK10} increases $\beta$ by 1 every round).
Throughout the algorithm, we cache the $\bccp$ results of pairs
to avoid repeated computations.
Overall, the main difference between Algorithm 2 and sequential algorithm is the use of parallel primitives on nearly every line of the pseudocode, and the exponentially increasing value of $\beta$ on Line 11, which is crucial for achieving a low depth bound.

The following theorem summarizes the bounds of our algorithm.

\begin{theorem}
We can compute the EMST on a set of $n$ points in constant dimensions in
$O(n^2)$ work and $O(\log^2 n)$ depth.
\end{theorem}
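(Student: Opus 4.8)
The plan is to bound work and depth phase by phase, following the three components of the algorithm: (i) building the spatial-median \kdt{} $T$ with one point per leaf, (ii) running \wspd{} (Algorithm~\ref{alg:wspd}) on $T$, and (iii) running \textsc{ParallelGFK} (Algorithm~\ref{alg:gfk}) on the resulting pairs. For (i) I would use a parallel \kdt{} construction that recurses on both children in parallel and, at each node, selects the splitting coordinate by a median/sort computation; splitting at the median of the coordinate values along the widest dimension yields a tree of height $O(\log n)$, built in $O(n\log n)$ work and $O(\log^2 n)$ depth. For (ii) I would inherit from the analysis of Callahan and Kosaraju~\cite{CallahanK93} that the procedure emits $O(n)$ well-separated pairs and does $O(n\log n)$ total work in constant dimension; the only new content is the depth. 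Here I would note that the recursion tree of \wspd{} mirrors $T$, hence has depth $O(\log n)$, and that every recursive \textsc{FindPair}$(P,P')$ call replaces the node of larger diameter by one of its children, so the sum of the two arguments' tree-depths strictly increases and any \textsc{FindPair} chain has length at most $2\cdot\mathrm{height}(T)=O(\log n)$; the base case is reached no later than when both arguments are leaves, which are trivially well-separated. Composing the two recursions gives $O(\log^2 n)$ depth for the WSPD phase.

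For (iii), the quantitative facts I would assemble are: the \textbf{while} loop runs $O(\log n)$ rounds, since $\beta$ starts at $2$, doubles each round, and $|A|+|B|\le n$ for every pair; within a round, \textsc{Split}, \textsc{Filter}, the $\min$ on Line~\ref{line:getrho}, and \textsc{GetEdges} each run in $O(\log n)$ depth and $O(|S|)=O(n)$ work; the $\bccp$ of a pair $(A,B)$ is computable by brute force over all $|A|\cdot|B|$ point pairs in $O(|A||B|)$ work and $O(\log n)$ depth; and, crucially, by properties (2)--(4) of a well-separated realization the sets $\{A_i\otimes B_i\}$ partition $\mathcal{P}\otimes\mathcal{P}$, so $\sum_i|A_i||B_i|=\binom{n}{2}=O(n^2)$. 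Since \bccp{} results are cached, each pair's \bccp{} is computed at most once, giving $O(n^2)$ total \bccp{} work and $O(\log^2 n)$ total \bccp{} depth over all rounds. I would implement \textsc{ParallelKruskal} on a batch $E_{l1}$ as a parallel minimum spanning forest of $E_{l1}$ on top of the current union-find components (e.g., via Borůvka's algorithm), costing $O(|E_{l1}|\log n)$ work and $O(\log n)$ depth; since each WSPD pair contributes at most one edge and is handed to Kruskal at most once (the surviving set after Line~\ref{line:filtering} excludes $S_{l1}$), the total is $O(n\log n)$ work. Summing the three phases then yields $O(n^2)$ work and $O(\log^2 n)$ depth.

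The two points that need genuine care are correctness of \textsc{ParallelGFK} and making sure nothing in the round structure inflates the depth. For correctness, I would argue that the edges \textsc{ParallelKruskal} adds in round $j$ are exactly those a sequential Kruskal run would add among edges of weight at most $\rho_{\codevar{hi}}$: every edge in $E_{l1}$ has weight at most $\rho_{\codevar{hi}}$ (Line~\ref{line:rhosplit}), whereas every pair still alive after round $j$ lies in $S_{l2}\cup S_u$ and thus has \bccp{} value at least $\rho_{\codevar{hi}}$ --- for $S_u$ because $\bccp(A,B)\ge d(A,B)\ge\rho_{\codevar{hi}}$ by Line~\ref{line:getrho}, and for $S_{l2}$ directly by Line~\ref{line:rhosplit} --- so no unprocessed edge is ever lighter than an already-processed one and the standard Kruskal exchange argument applies; the \textsc{Filter} on Line~\ref{line:filtering} only discards pairs whose endpoints are already merged, which can never carry an MST edge. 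For depth, the only subtlety is that a pair may linger in $S_{l2}$ across several rounds, but its \bccp{} is cached so it costs no recomputation and only $O(\log n)$-depth bookkeeping per round, and there are only $O(\log n)$ rounds. The main obstacle, then, is not a single deep lemma but putting these pieces together cleanly --- in particular the interaction-product partition bound $\sum_i|A_i||B_i|=O(n^2)$ that underlies the $O(n^2)$ work, and the batch-monotonicity invariant that makes the parallel, batched Kruskal correct.
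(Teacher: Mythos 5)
Your proposal is correct and follows essentially the same route as the paper's proof: $O(\log n)$ rounds from doubling $\beta$, $O(n)$-work and $O(\log n)$-depth primitives per round, total \bccp{} work $O(n^2)$ from the fact that the WSPD interaction products exactly cover all distinct point pairs (with caching ensuring each pair's \bccp{} is computed once), and parallel Kruskal on the $O(n)$ \bccp{} edges. The only differences are presentational: the paper invokes Callahan's parallel WSPD construction ($O(n\log n)$ work, $O(\log n)$ depth, $O(n)$ pairs) as a black box where you re-derive a looser but still sufficient $O(\log^2 n)$-depth bound for the tree-building and WSPD phases, and you additionally spell out the batch-monotonicity correctness argument for the batched Kruskal that the paper leaves implicit.
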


\begin{proof}
Callahan~\cite{callahan1993optimal} shows that a WSPD with $O(n)$ well-separated pairs can be computed in $O(n\log n)$ work and $O(\log n)$ depth, which we use for our analysis.
Our parallel GeoFilterKruskal algorithm for EMST proceeds in rounds, and processes the well-separated pairs
in an increasing order of cardinality. 
Since $\beta$ doubles on each round, there can be at most $O(\log n)$
rounds since the largest pair can contain $n$ points.
Within each round, the \textsc{Split} on Line~\ref{line:betasplit}
and \textsc{Filter} on Line~\ref{line:filtering} both take $O(n)$ work and $O(\log n)$ depth.
We can compute the $\bccp$ for each pair on Line~\ref{line:rhosplit}
by computing all possible point distances between the pair,
and using \textsc{WriteMin}  to obtain the minimum distance.
Since the $\bccp$ of each pair will only be computed once and
is cached, the total work of $\bccp$ on Line~\ref{line:rhosplit} is
$\sum_{A,B\in S}|A||B|=O(n^2)$ work since the WSPD is an exact set cover
for all distinct pairs.
Therefore, Line~\ref{line:rhosplit} takes $O(n^2)$ work across all 
rounds and $O(1)$ depth for each round.
Given $n$ edges, the MST computation on Line~\ref{line:kruskal}
can be done in $O(n\log n)$ work and $O(\log n)$
depth using existing parallel algorithms~\cite{JaJa92}.
Therefore, the overall work is $O(n^2)$.
Since each round takes $O(\log n)$ depth, and there are $O(\log n)$ rounds,
the overall depth is $O(\log^2 n)$.
\end{proof}

\iffullversion
We note that there exist subquadratic work $\bccp$
algorithms~\cite{Agarwal1991}, which result in a subquadratic work
EMST algorithm. Although the algorithm is impractical and no
implementations exist, for theoretical interest we give a
work-efficient parallel algorithm with polylogarithmic depth
in Section~\ref{sec:subquadratic-emst} of the Appendix.
\fi

We implemented our own sequential and parallel versions of the GFK
algorithm as a baseline based on Algorithm~\ref{alg:gfk}, which we
found to be faster than the implementation of Chatterjee et
al.~\cite{ChatterjeeCK10} in our experiments.  In addition, because
the original GFK algorithm requires materializing the full WSPD, its
memory consumption can be excessive, limiting the algorithm's
practicality.  This issue worsens as the dimensionality of the data set
increases, as the number of pairs in the WSPD increases exponentially
with the dimension.  While Chatterjee et al.~\cite{ChatterjeeCK10}
show that their GFK algorithm is efficient, they consider
much smaller data sets than the ones in this paper.

\subsubsection{The MemoGFK Optimization} \label{sec:impl:memogfk}

To tackle the memory consumption issue, we propose an optimization to
the GFK algorithm, which reduces its space usage and improves its
running time in practice.  We call the resulting algorithm
\defn{MemoGFK} (memory-optimized GFK).  The basic idea is that, rather
than materializing the full WSPD at the beginning, we partially
traverse the \kdt on each round and retrieve only the pairs that are
needed.  The pseudocode for our algorithm is shown in
Algorithm~\ref{alg:memogfk}, where \textsc{ParallelMemoGFK} takes in
the root $R$ of a \kdt{}, an array $E_{\codevar{out}}$ to store the
MST edges, and a union-find structure $\codevar{UF}$.

\begin{algorithm}[!t]
\caption{Parallel MemoGFK}\label{alg:memogfk}
\begin{algorithmic}[1]
  \Procedure{ParallelMemoGFK}{\kdt root: $R$, Edges: $E_{\codevar{out}}$, UnionFind: $\codevar{UF}$}

  \State $\beta=2$,
   $\rho_{\codevar{lo}}=0$
  \While{$|E_{\codevar{out}}| < (n-1)$}

  \State $\rho_{\codevar{hi}}=$ \Call{GetRho}{$R$, $\beta$}\label{line:memogetrho}
  \State $S_{l1}=$ \Call{GetPairs}{$R$, $\beta$, $\rho_{\codevar{lo}}$, $\rho_{\codevar{hi}}$, $\codevar{UF}$}\label{line:memogetpairs}
  \State $E_{l1}=$ \Call{GetEdges}{$S_{l1}$}  \Comment{Retrieves edges associated with pairs in $S_{l1}$}
  \State \Call{ParallelKruskal}{$E_{l1}$, $E_{\codevar{out}}$, $\codevar{UF}$}\label{line:memokruskal}
  \State $\beta=\beta\times 2$,
   $\rho_{\codevar{lo}}=\rho_{\codevar{hi}}$
  \EndWhile
  \EndProcedure
\end{algorithmic}
\end{algorithm}

The algorithm proceeds in rounds similar to parallel GeoFilterKruskal, and
maintains lower and upper bounds ($\rho_{\codevar{lo}}$ and
$\rho_{\codevar{hi}}$) on the weight of edges to be considered each
round.  On each round, it first computes $\rho_{\codevar{hi}}$ based on
$\beta$ by a single \kdt{} traversal, which will be elaborated below
(\cref{line:memogetrho}).  Then, together with $\rho_{\codevar{lo}}$
from the previous round ($\rho_{\codevar{lo}}=0$ on the first round),
the algorithm retrieves pairs with $\bccp$ distance in the range
$[\rho_{\codevar{lo}}, \rho_{\codevar{hi}})$
via a second \kdt{} traversal on \cref{line:memogetpairs}.
The edges corresponding to these pairs are then passed to
Kruskal's algorithm on \cref{line:memokruskal}.  An example of the
first round of the algorithm with MemoGFK is illustrated in
Figure~\ref{fig:wspd_ex}.  Without the optimization, the GFK algorithm
needs to first materialize all of the pairs in Round~1.  With MemoGFK,
$\rho_{\codevar{hi}}=d(e,Q_7)$ is computed via a tree traversal
on Line~\ref{line:memogetrho}, after which only the pairs in the set $S_{l1} = \{(a,d), (b,c), (f,g)\}$ are retrieved and materialized on Line~\ref{line:memogetpairs} via a second tree traversal.
Retrieving pairs only as needed reduces memory usage and improves performance.  The correctness of the algorithm follows from the fact that each round considers non-overlapping ranges of edge weights in increasing order until all edges are considered, or when MST is completed.

\begin{figure}[!t]
  \centering
  \vspace{1.5pt}
  \includegraphics[trim=0 0 0 45,width=0.7\linewidth]{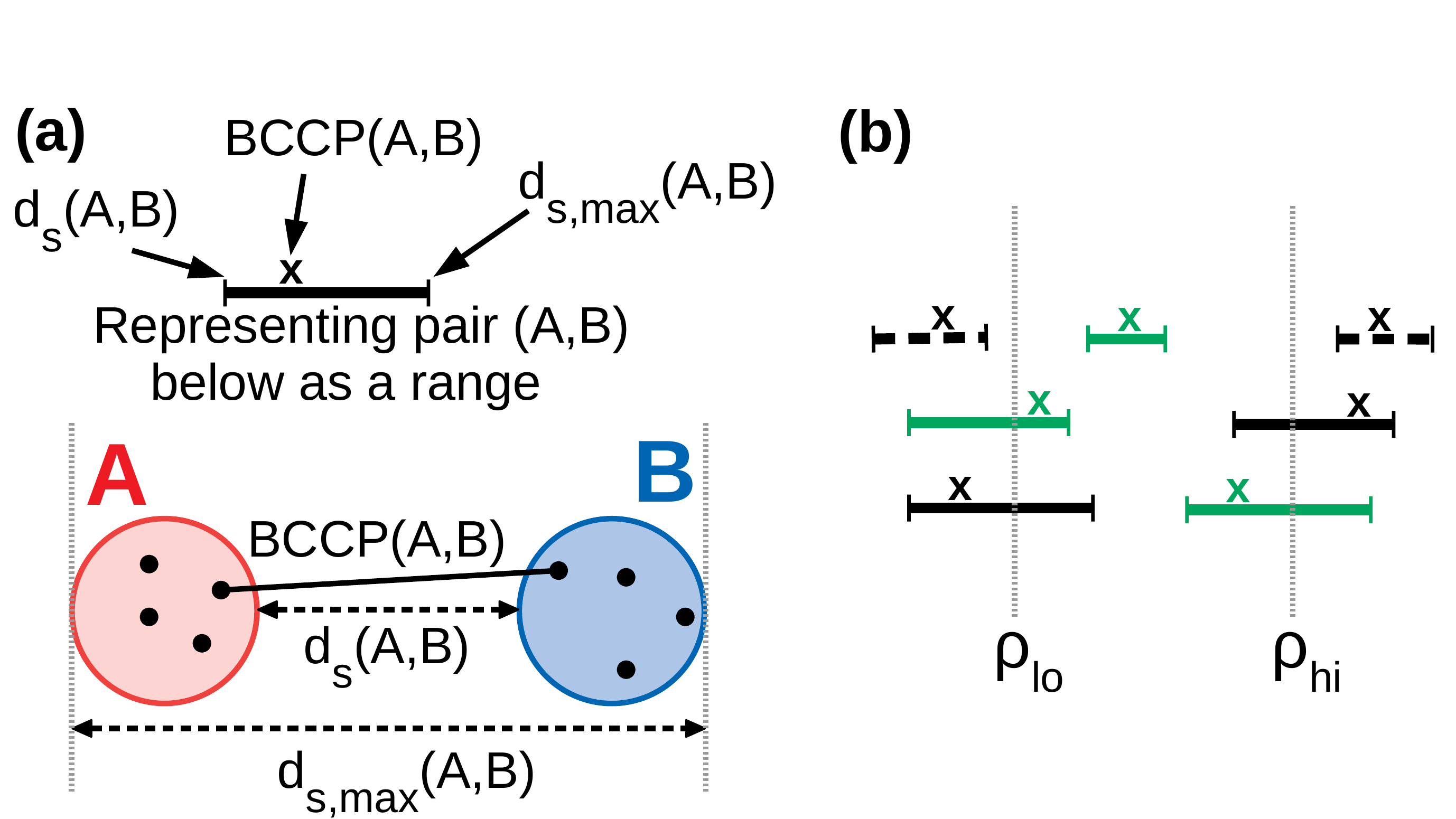}
  \caption{(a) shows a representation of a well-separated pair $(A,B)$
    as a line segment, based on the values of its
    $d(A,B)$ and $d_{\codevar{\max}}(A,B)$, which
    serve as the lower and upper bounds, respectively, for their $\bccp$ and the $\bccp$ of their
    descendants.
    The "x"'s on the line marks the value of the $\bccp$.
    (b) shows a conceptual example of tree node pairs
    encountered during a pruned tree traversal on Line~\ref{line:memogetpairs}
    of Algorithm~\ref{alg:memogfk},
    where the pairs are represented the same way as in (a).
    The pairs in solid green lines, if well-separated, will be retrieved and
    materialized because their $\bccp$s are within the
    $[\rho_{\codevar{lo}},\rho_{\codevar{hi}})$ range, whereas
    those in solid black lines will not as their $\bccp$s are
    out of range (although their $\bccp$s will still be computed, since their lower and upper bounds do not immediately put them out of range).
    The traversal will be pruned when encountering
    a pair represented by dotted lines as  their $\bccp$ and the $\bccp$ of their
    descendants will be out of range.
  } \label{fig:memogfk}
\end{figure}

Now we discuss the implementation details of the two-pass tree
traversal on Line~\ref{line:memogetrho}--\ref{line:memogetpairs}.  The
\textsc{GetRho} subroutine, which computes $\rho_{\codevar{hi}}$, does
so by finding the lower bound on the minimum separation of pairs whose
cardinality is greater than $\beta$ and are not yet connected in the
MST.
We traverse the \kdt starting at the root, in a similar way as when
computing the WSPD in \cref{alg:wspd}.
During the process, we update a global copy of $\rho_{\codevar{hi}}$
using \textsc{WriteMin} whenever we encounter a well-separated pair in
\textsc{FindPair}, with cardinality greater than $\beta$.  We can
prune the traversal once $|A|+|B|\leq\beta$, as all pairs that
originate from $(A,B)$ will have cardinality at most $\beta$.  We also
prune the traversal when the two children of a tree node are already
connected in the union-find structure, as these edges will not need to
be considered by Kruskal's algorithm.  In addition, we prune the
traversal when the distance between the bounding spheres of $A$ and
$B$, $d(A,B)$, is larger than $\rho_{\codevar{hi}}$, as its
descendants cannot produce a smaller distance. 

The \textsc{GetPairs} subroutine then retrieves all pairs whose points
are not yet connected in the union-find structure and have $\bccp$
distances in the range $[\rho_{\codevar{lo}}, \rho_{\codevar{hi}})$.
It does so also via a pruned traversal on the \kdt{}
starting from the root, similarly to \cref{alg:wspd}, but only
retrieves the useful pairs.  For a pair of nodes encountered in
the \textsc{FindPair} subroutine, we estimate the minimum and maximum
possible $\bccp$ between the pair using bounding sphere calculations,
an example of which is shown in Figure~\ref{fig:memogfk}a.  We prune
the traversal when $d_{\codevar{\max}}(A,B)<\rho_{\codevar{lo}}$, or
when $d(A,B)\geq \rho_{\codevar{hi}}$, 
in which case $\bccp(A,B)$
(as well as those of its recursive calls on descendant nodes) will be
outside of the range.  An example is shown in
Figure~\ref{fig:memogfk}b.  In addition, we also prune the traversal
if $A$ and $B$ are already connected in the MST, as an edge between
$A$ and $B$ will not be part of the MST.

We evaluate MemoGFK in Section~\ref{sec:experiment}.  We also use the memory
optimization for \hdbscan, which will be described
next.

\subsection{\hdbscan} \label{sec:hdbscanwspd}

\subsubsection{Baseline} \label{sec:hdbscan_wspd:existing}

Inspired by a previous sequential approximate algorithm to solve the OPTICS problem
by Gan and Tao~\cite{Gan2018}, we  modified and parallelized their algorithm
to compute the exact \hdbscan as our baseline.
First, we perform \knn{} queries using Euclidean distance
with $k = \minpts$ to compute the core distances.
Gan and Tao's original algorithm creates a mutual reachability graph of size $O(n\cdot \minpts^2)$, using an approximate notion of $\bccp$ between each WSPD pair,
and then computes its MST using Prim's algorithm.
Our exact algorithm parallelizes their algorithm, and instead uses the exact $\bccpstar$ computations based on the mutual reachability distance to form the mutual reachability graph.
In addition, we also compute the MST on the generated edges using the MemoGFK optimization described in Section~\ref{sec:impl:memogfk}.
Summed across all well-separated pairs,
the $\bccp$ computations take quadratic work and constant depth.
Therefore, our baseline algorithm takes $O(n^2)$ work and $O(\log^2 n)$ depth, and computes the exact \hdbscan.
\iffullversion
In Section~\ref{section:approx} of the Appendix,
we also describe a work-efficient parallel approximate algorithm based on~\cite{Gan2018}.
\fi

\subsubsection{Improved Algorithm}\label{sec:hdbscan_wspd:our}
Here, 
we present a more space-efficient 
algorithm that is also faster in practice. The idea is to use a
different definition of well-separation for the WSPD in
\hdbscan.
We denote the maximum and minimum core distances of the points in node $A$ as $\CDmax(A)$ and $\CDmin(A)$, respectively.
Consider a pair $(A,B)$ in the WSPD. We
define $A$ and $B$ to be \defn{geometrically-separated} if
$d(A,B)\geq \max\{ \field{A}{diam}, \field{B}{diam} \}$ and
\defn{mutually-unreachable} if $\max\{\allowbreak d(A,B),\CDmin(A),\CDmin(B)\}
\geq \max\{\field{A}{diam},
\field{B}{diam},\CDmax(A),\allowbreak\CDmax(B)\}$. We consider $A$ and $B$ to
be well-separated if they are geometrically-separated,
mutually-unreachable, or both.
Note that the original definition of well-separation with only includes the first condition. 

This leads to space savings because in \cref{alg:wspd}, recursive
calls to procedure \textsc{FindPair}$(A,B)$ on \cref{line:findpair}
will not terminate until $A$ and $B$ are well-separated.
Since our new definition is a disjunction between
mutual-unreachability and geometric-separation,
the calls to
\textsc{FindPair} can terminate earlier, leading to fewer pairs
generated.
When constructing the mutual reachability subgraph to pass to MST, we add only a single  edge between the
$\bccpstar$ ($\bccp$ with respect to mutual reachability distance) of each well-separated pair.
With our new definition, the total number of edges generated is upper bounded by the size of the WSPD, which is $O(n)$~\cite{CallahanK95}.
In contrast, Gan and Tao's approach generates $O(n\cdot \minpts^2)$ edges.

\begin{theorem}
Under the new definition of well-separation, our algorithm computes an MST of the mutual reachability graph.
\end{theorem}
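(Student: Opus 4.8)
The plan is to show that the graph $G'$ whose edge set is, for each well-separated pair $(A,B)$ produced by \cref{alg:wspd} under the new definition, the single edge $\bccpstar(A,B)$ weighted by its mutual reachability distance $d_m$, contains a minimum spanning tree of $G_\codevar{MR}$; then a parallel Kruskal-based MST routine run on $G'$ returns an MST of $G_\codevar{MR}$. The cleanest route is the ``light-path'' characterization of MSTs: I will prove that for every edge $(u,v)$ of $G_\codevar{MR}$ there is a $u$--$v$ path in $G'$ all of whose edges have weight at most $d_m(u,v)$. Granting this, $G'$ is connected and spanning (apply the claim to any edge of the complete graph $G_\codevar{MR}$), and for any edge $e=(u,v)$ of $G_\codevar{MR}$ not lying on a fixed MST $T'$ of $G'$, the $u$--$v$ path in $T'$ is a minimum-bottleneck $u$--$v$ path of $G'$: if $e\in G'$ this is because $T'$ is an MST of $G'$, and if $e\notin G'$ it is because the light path of the claim lies in $G'$. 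Either way that path has all edges of weight at most $d_m(u,v)$, so by the cycle-property characterization $T'$ is also an MST of $G_\codevar{MR}$.

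To prove the light-path claim I would induct on $|A|+|B|$, where $(A,B)$ is the unique well-separated pair whose interaction product contains $\{u,v\}$ (uniqueness is conditions (3)--(4) of a WSPD; rename so $u\in A$, $v\in B$). Let $(p,q)=\bccpstar(A,B)$ with $p\in A$, $q\in B$; by minimality $d_m(p,q)\le d_m(u,v)$. The base case $|A|=|B|=1$ forces $(p,q)=(u,v)\in G'$. Otherwise the candidate path is $u\to\dots\to p\to q\to\dots\to v$, where the segment $u\to\dots\to p$ (empty if $u=p$) and the segment $q\to\dots\to v$ (empty if $q=v$) come from applying the induction hypothesis to the $G_\codevar{MR}$-edges $(u,p)$ and $(q,v)$; these have both endpoints inside the \kdt node $A$ (respectively $B$), so by the nesting structure of the Callahan--Kosaraju construction the well-separated pair containing them lies inside $A\times A$ (respectively $B\times B$) and, being a pair of disjoint sets, has total size at most $|A|<|A|+|B|$ (respectively $|B|<|A|+|B|$). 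It then remains to check three weight bounds: \textbf{(i)} $d_m(u,p)\le d_m(u,v)$, \textbf{(ii)} $d_m(p,q)\le d_m(u,v)$, \textbf{(iii)} $d_m(q,v)\le d_m(u,v)$; bound (ii) is immediate from the line above.

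The heart of the argument --- and the step I expect to be the main obstacle --- is establishing (i) and (iii) for a pair satisfying the new \emph{disjunctive} definition, because the geometrically-separated and mutually-unreachable cases must be handled separately and reconciled with the core-distance terms in $d_m$. In the \emph{geometrically-separated} case, $d(A,B)\ge\max\{\field{A}{diam},\field{B}{diam}\}$ gives $d(u,p)\le\field{A}{diam}\le d(A,B)\le d(u,v)\le d_m(u,v)$; also $\CD(u)\le d_m(u,v)$ trivially; and crucially $\CD(p)\le d_m(p,q)\le d_m(u,v)$, so $d_m(u,p)=\max\{\CD(u),\CD(p),d(u,p)\}\le d_m(u,v)$, with (iii) symmetric via $\CD(q)\le d_m(p,q)$. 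In the \emph{mutually-unreachable} case, put $M=\max\{\field{A}{diam},\field{B}{diam},\CDmax(A),\CDmax(B)\}$: every point $x\in A$ has $\CD(x)\le M$ and every pair of points of $A$ is at distance $\le\field{A}{diam}\le M$, hence $d_m(u,p)\le M$ and likewise $d_m(q,v)\le M$; meanwhile the defining inequality $\max\{d(A,B),\CDmin(A),\CDmin(B)\}\ge M$ forces $d_m(u,v)\ge M$ in each of its three cases, via $d(u,v)\ge d(A,B)\ge M$, or $\CD(u)\ge\CDmin(A)\ge M$, or $\CD(v)\ge\CDmin(B)\ge M$ respectively. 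This closes the induction and hence the theorem. Finally I would remark that, since $\minpts=1$ makes all core distances $0$, collapsing $d_m$ to Euclidean distance and the new separation condition to the classical one, this argument subsumes the EMST correctness proof as a special case.
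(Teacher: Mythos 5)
Your proposal is correct, but it proves the theorem by a genuinely different route than the paper. The paper fixes an MST $T$ of $G_\codevar{MR}$ and an MST $T'$ of the $\bccpstar$ subgraph and argues by induction over the \kdt/WSPD hierarchy that every edge of $T$ has a weight-optimal \emph{replacement} edge in $T'$, via a cut/exchange argument: it distinguishes the scenario where the replacement is the crossing $\bccpstar$ edge (handled by the cut property) from the scenario where the $\bccpstar$ pair lies entirely on one side of the cut, and only in the latter does it split into the mutually-unreachable and geometrically-separated cases. You instead prove a bottleneck-path (cycle-property) lemma --- for every pair $(u,v)$ the $\bccpstar$ graph $G'$ contains a $u$--$v$ path all of whose edges have weight at most $d_m(u,v)$ --- by induction on $|A|+|B|$ for the unique covering pair, using the nesting of the Callahan--Kosaraju recursion to recurse on $(u,p)$ inside $A$ and $(q,v)$ inside $B$, and then invoke the standard facts that MST paths are minimum-bottleneck paths and that the cycle property certifies minimality in $G_\codevar{MR}$. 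The two arguments rest on essentially the same inequalities (your geometrically-separated bound mirrors the paper's Case~2 manipulation of the $\max$ terms, and your quantity $M$ plays exactly the role of $\max\{\field{A}{diam},\field{B}{diam},\CDmax(A),\CDmax(B)\}$ in the paper's Case~1), but your decomposition is different and arguably cleaner: it is the classical WSPD-spanner style argument, avoids reasoning about cuts of a partially known MST and about which component the $\bccpstar$ endpoints fall into, and makes the reduction to the EMST case ($\minpts=1$) transparent; the paper's exchange argument, in turn, directly certifies that $T'$ has the same weight as $T$ edge by edge. Your steps all check out: uniqueness of the covering pair is WSPD properties (3)--(4), the measure $|A'|+|B'|\le|A|<|A|+|B|$ strictly decreases because the sub-pair is a disjoint pair of subsets of $A$, and the three weight bounds (i)--(iii) are verified correctly in both branches of the disjunctive definition.
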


\begin{proof}

Under our new definition, well-separation is defined as
the disjunction between being geometrically-separated and
mutually-unreachable.
We connect an edge between each well-separated pair $(A,B)$ with the mutual-reachability distance
$\max\{d(u^*,v^*),\allowbreak\CD(u^*),\allowbreak\CD(v^*)\}$ as the edge weight,
where $u^*\in A$, $v^*\in B$, and $(u^*,v^*)$ is the $\bccpstar$ of $(A,B)$.
We overload the notation $\bccpstar(A,B)$ to also denote
the mutual-reachability distance of $(u^*,v^*)$.

Consider the point set $\field{P}{root}$, which is contained in the root node of
the tree associated with its WSPD.
Let $T$ be the MST of the full mutual reachability graph $G_{MR}$.
Let $T'$ be the MST of the mutual reachability subgraph $G'_{MR}$, computed by
connecting the $\bccpstar$ of each well-separated pair.
To ensure that $T'$ produces the correct \hdbscan clustering,
we prove that it has the same weight as $T$---in other words, $T'$
is a valid MST of $G_{MR}$.

We prove the optimality of $T'$ by induction on each tree node $P$.
Since the WSPD is hierarchical, each node $P$ also has a valid WSPD consisting of
a subset of pairs of the WSPD of $\field{P}{root}$.
Let $(u,v)$ be an edge in $T$. There exists an edge $(u',v')\in T'$ that connects
the same two components as in $T$ if we were to remove $(u,v)$.
We call $(u',v')$ the \defn{replacement}
of $(u,v)$, which is \defn{optimal} if $w(u',v')=w(u,v)$.
Let $T_P$ and $T'_P$  be subgraphs of $T$ and $T'$, respectively, containing points in $P$, but not necessarily spanning $P$.
We inductively hypothesize that all edges of $T'_P$ are optimal.
In the base case, a singleton tree node $P$ satisfies the hypothesis by having no edges.

Now consider any node $P$ and edge $(u,v)\in T_P$. The children of $P$ are optimal by our inductive hypothesis.
We now prove that the edges connecting the children of $P$ are optimal. 
Points $u$ and $v$ must be from a well-separated pair $(A,B)$,
where $A$ and $B$ are children of $P$ from the WSPD hierarchy.
Let $U$ and $V$ be a partition of $P$ formed by a cut in $T_P$
that separates point pair $(u,v)$, where $u\in U$ and $v \in V$. We want to prove that the replacement of $(u,v)$ in $T_P'$ is optimal.

We now discuss the first scenario of the proof, shown in Figure~\ref{fig:dwspd_proof}a,
where the replacement edge between $U$ and $V$ is
$(u',v')=\bccpstar(A,B)=(u^*,v^*)$, and we assume without loss of generality
that $u'\in A\cap U$ and $v'\in B\cap V$.
Since $(u,v)$ is the closest pair of points connecting $U$ and $V$ by the cut property, then $(u',v')$, the $\bccpstar$ of $(A,B)$, must be optimal; otherwise,
$(u,v)$ has smaller weight than $\bccpstar(A,B)$, which is a contradiction.
This scenario easily generalizes to the case where $A$ and $B$ happen to be completely within
$U$ and $V$, respectively.

\begin{figure}[t]
  \includegraphics[trim=0 0 0 100,clip,width=0.5\linewidth]{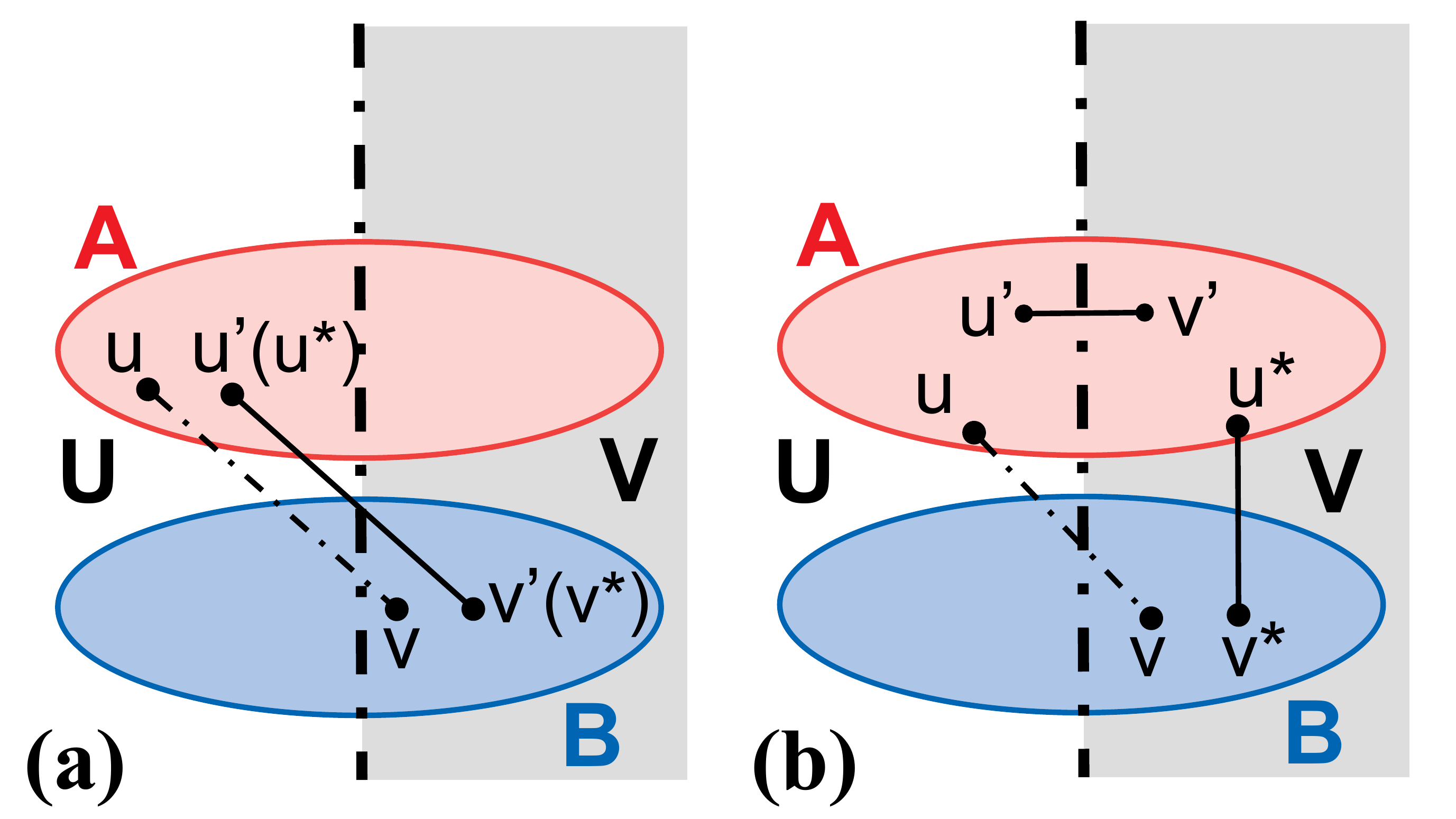}
  \caption{In this figure, we show the two proof cases for \hdbscan. We use an oval to represent each node in the WSPD, and solid black dots to represent data points. We represent the partition of the space to $U$ and $V$ using a cut represented by a dotted line.} \label{fig:dwspd_proof}
\end{figure}

We now discuss the second scenario, shown in Figure~\ref{fig:dwspd_proof}b,
where $\bccpstar(A,B)=(u^*,v^*)$ is internal to either $U$ or $V$.  We
assume without loss of generality that $u^*\in A\cap V$ and $v^*\in B\cap V$, and
that  $U$ and $V$ are connected by some intra-node edge $(u',v')$ of $A$ in $T'_P$. 
We want to prove that $(u',v')$ is an optimal replacement edge.
We consider two cases based on the relationship between $A$ and $B$ under our new definition of well-separation.

\myparagraph{Case 1}
Nodes $A$ and $B$ are mutually-unreachable, and may or may not
be geometrically-separated.
The weight of $(u',v')$ is
$\max\{d(u',v'),\allowbreak\CD(u'),\CD(v')\}\leq \max\{\field{A}{diam},\CDmax(A)\}$.
Consider the $\bccpstar$ pair $(u^*,v^*)$ between $A$ and $B$. Based on the fact that
$A$ and $B$ are mutually-unreachable, we have
\begin{align*}
  \bccpstar(A,B)&=\max\{d(u^*,v^*),\CD(u^*),\CD(v^*)\}\\
  &\geq \max\{d(A,B),\CDmin(A),\CDmin(B)\}\\
  &\geq \max\{\field{A}{diam}, \field{B}{diam}, \CDmax(A), \CDmax(B)\}\\
  &\geq \max\{\field{A}{diam},\CDmax(A)\},
\end{align*}
where the inequality from the second to the third line above comes from the definition of mutual-unreachability.
Therefore, $w(u',v')$ is not larger than $\bccpstar(A,B)=w(u^*,v^*)$,
and by definition of $\bccpstar$, $w(u^*,v^*)$ is not larger than $w(u,v)$.
Hence, $w(u',v')$ is not larger than $w(u,v)$.
On the other hand, $w(u',v')$ is not smaller than $w(u,v)$, since otherwise
we could form a spanning tree with a smaller weight than $T_P$, contradicting
the fact that it is an MST.
Thus, $(u',v')$ is optimal.

\myparagraph{Case 2}
Nodes $A$ and $B$ are geometrically-separated and not mutually-unreachable.
By the definition of $\bccpstar$, we know that $w(u^*,v^*)\leq w(u,v)$, which implies 
\begin{align*}
  \max\{\CD(u^*),\CD(v^*),d(u^*,v^*)\}\leq \max\{\CD(u),\CD(v),d(u,v)\}\\
  \max\{\CD(u^*),\CD(u),d(u,u^*)\}\leq \max\{\CD(u),\CD(v),d(u,v)\}.
\end{align*}
To obtain the second inequality above from the first, we replace $\CD(v^*)$ on the left-hand side with $\CD(u)$, since $\CD(u)$ is also on the right-hand side; we also replace $d(u^*,v^*)$ with $d(u,u^*)$ because of the geometric separation of $A$ and $B$.
Since $(u',v')$ is the lightest \bccpstar{} edge of some well-separated pair in $A$, $\max\{\CD(u'),\CD(v'),\allowbreak d(u',v')\} \leq \max\{\CD(u),\CD(u^*),d(u,u^*)\}$. We then have
\begin{align*}
  \max\{\CD(u'),\CD(v'),d(u',v')\}\
  \leq \allowbreak \max\{\CD(u),\CD(v),d(u,v)\}.
\end{align*}
This implies that $w(u',v')$ is not larger than $w(u,v)$.
Since $(u,v)$ is an edge of MST $T_P$, the weight of the replacement edge $w(u',v')$ is also not smaller than $w(u,v)$, and
hence $(u',v')$ is optimal.

Case 1 and 2 combined prove the optimality
of replacement edges in the second scenario.
Considering both scenarios,
we have shown that each replacement edge in $T_p'$ connecting the children of $P$ is optimal, which proves the inductive hypothesis.
Applying the inductive hypothesis to $\field{P}{root}$ completes the proof.
\end{proof}

Our algorithm achieves the following bounds.

\vspace{-3pt}
\begin{theorem}
  Given a set of $n$ points, we can compute the MST on the mutual
  reachability graph in $O(n^2)$ work, $O(\log^2 n)$ depth, and $O(n\cdot \minpts)$
  space. 
\end{theorem}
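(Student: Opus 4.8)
The plan is to charge the algorithm's cost to three phases---computing core distances, building the WSPD with $\bccpstar$ edges under the new disjunctive well-separation rule, and running the MemoGFK MST sweep of Section~\ref{sec:impl:memogfk}---and to observe that the $\bccpstar$ evaluations dominate the work while the all-points $k$-NN search dominates the space. Correctness I would inherit directly from the previous theorem, which shows that the MST of the subgraph obtained by joining the $\bccpstar$ of every pair that is geometrically-separated or mutually-unreachable is already an MST of $G_{MR}$.

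For work and depth, I would first invoke the all-points $k$-NN bound of Callahan and Kosaraju~\cite{callahan1993optimal} with $k = \minpts$: this yields the core distances in $O(\minpts \cdot n \log n)$ work and $O(\log n)$ depth, which is absorbed into the $O(n^2)$ term. Next I would argue that replacing the well-separation test by the disjunction of geometric separation and mutual-unreachability does not increase the number of pairs: geometric separation is exactly the original ($s=2$) well-separation condition restated in terms of bounding-sphere diameters, so every branch of \textsc{FindPair} now terminates no later than in the standard construction, and the resulting pair collection is a coarsening of an ordinary WSPD---hence still an exact cover of $\mathcal{P} \otimes \mathcal{P}$ with only $O(n)$ pairs. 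With this in hand, the time analysis mirrors that of our EMST algorithm: MemoGFK runs in $O(\log n)$ rounds (since $\beta$ doubles and cardinalities are at most $n$); each round performs two pruned \kdt{} traversals and one parallel Kruskal call, contributing $O(n \log n)$ work and $O(\log n)$ depth per round apart from the $\bccpstar$ computations; and each pair's $\bccpstar$ is computed once, then cached, by reducing its $|A||B|$ mutual reachability distances with \textsc{WriteMin} in $O(1)$ depth. Because the $O(n)$ WSPD pairs partition the distinct point pairs, $\sum_{(A,B)} |A||B| = O(n^2)$, so all $\bccpstar$s together cost $O(n^2)$ work and $O(\log n)$ depth spread over the rounds. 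Summing the phases gives $O(n^2)$ work and $O(\log^2 n)$ depth.

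For space, the $k$-NN search (which returns $\minpts$ neighbors per point) uses $O(n \cdot \minpts)$ space and the \kdt{} uses $O(n)$. The point I would emphasize is that MemoGFK never materializes the whole WSPD: on each round, \textsc{GetPairs} instantiates only the batch $S_{l1}$ of pairs whose $\bccpstar$ falls in the current window $[\rho_{\codevar{lo}}, \rho_{\codevar{hi}})$, and the traversal is pruned as soon as a subtree pair provably contributes nothing in range; since the entire WSPD has only $O(n)$ pairs, $S_{l1}$, its edge set $E_{l1}$, the cached $\bccpstar$ values, the union-find structure, and the output array $E_{\codevar{out}}$ are each of size $O(n)$. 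Hence the peak space is $O(n \cdot \minpts)$, matching the claim.

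The step I expect to be the main obstacle is the space argument: one has to verify carefully that the disjunctive well-separation rule still produces only $O(n)$ pairs---so that every MemoGFK batch stays within $O(n)$---and that the pruned \textsc{GetPairs} traversal genuinely never holds more than one such batch at a time. Once those two points are nailed down, the $O(n \cdot \minpts)$ contribution of the $k$-NN lists is clearly the only superlinear term, and the remaining bounds follow by reusing the EMST-style round analysis.
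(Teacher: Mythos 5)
Your proposal is correct and takes essentially the same route as the paper's proof: charge the $O(\minpts \cdot n\log n)$ work and $O(\log n)$ depth of the $k$-NN core-distance computation separately, observe that the disjunctive well-separation test only makes \textsc{FindPair} terminate earlier so the EMST analysis ($O(n)$ pairs, $\sum_{(A,B)}|A||B|=O(n^2)$ cached $\bccpstar$ work, $O(\log n)$ rounds) carries over, and note that the $O(n\cdot\minpts)$ $k$-NN storage dominates the space. The only cosmetic difference is that you lean on MemoGFK's batching for the space bound, while the paper simply observes that the new definition already yields only $O(n)$ well-separated pairs, so materializing them all is still within $O(n\cdot\minpts)$.
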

\vspace{-3pt}

\begin{proof}
Compared to the cost of GFK for EMST, GFK for \hdbscan has the additional cost of computing the core distances,
which takes $O(\minpts \cdot n \log n)$ work and
$O(\log n)$ depth using \knn~\cite{callahan1993optimal}.
With our new definition of well-separation, the WSPD computation will
only terminate earlier than in the original definition, and so the
bounds that we showed for EMST above still hold.  The new WSPD
definition also gives an $O(n)$ space bound for the well-separated
pairs.  The space usage of the \knn computation is $O(n\cdot
\minpts)$, which dominates the space usage.
Overall, this gives
$O(n^2)$ work, $O(\log^2 n)$ depth, and $O(n\cdot \minpts)$ space.
\end{proof}

Our algorithm gives a clear improvement in space usage over the naive
approach of computing an MST from the mutual reachability graph,  which takes $O(n^2)$ space, and our parallelization of the
exact version of Gan and Tao's algorithm, which takes $O(n \cdot
\minpts^2)$ space. We will also see that the smaller memory footprint
of this algorithm leads to better performance in practice.

\subsubsection{Implementation}
\iffullversion
We implement three algorithms for \hdbscan:
a parallel version of the approximate algorithm based
on Gan and Tao~\cite{Gan2018}, a parallel exact algorithm based
on Gan and Tao, and our space-efficient algorithm from Section~\ref{sec:hdbscan_wspd:our}.
Our implementations all use Kruskal's algorithm for MST and use the memory optimization introduced for
MemoGFK in Section~\ref{sec:impl:memogfk}.
\else
We implement two algorithms for \hdbscan:
a parallel exact algorithm based on Gan and Tao~\cite{Gan2018},
and our space-efficient algorithm from Section~\ref{sec:hdbscan_wspd:our}.
Our implementations both use Kruskal's algorithm for MST and use the memory optimization introduced for
MemoGFK in Section~\ref{sec:impl:memogfk}.
\fi
For our space-efficient
algorithm, we modify the WSPD and MemoGFK algorithm to use
our new definition of well-separation.

\section{Dendrogram and Reachability Plot}\label{sec:dendro}
We present a new parallel algorithm for generating a
dendrogram and reachability plot, given an unrooted tree with edge weights. Our
algorithm can be used for single-linkage clustering~\cite{gower1969minimum} by
passing the EMST as input, as well as for generating the \hdbscan
dendrogram and reachability plot (refer to Section~\ref{sec:prelims}
for definitions). In addition, our dendrogram
algorithm can be used in efficiently generating hierarchical
clusters using other linkage criteria (e.g.,~\cite{Yu2015a,Xu2001,Olson95}).

Sequentially, the dendrogram can be generated in a bottom-up
(agglomerative) fashion by sorting the edges by weight and processing
the edges in increasing order of
weight~\cite{mcinnes2017accelerated,BergGR17, mllner2011modern, gower1969minimum, hendrix2012parallel}.
Initially, all points are assigned their own clusters. Each edge
merges the clusters of its two endpoints, if they are in
different clusters, using a union-find data structure. The order of
the merges forms a tree structure, which is the dendrogram.
This takes $O(n\log n)$ work, but has little parallelism since the edges
need to be processed one at a time.  For \hdbscan,
we can generate the reachability plot directly from the input tree by
running Prim's algorithm on the tree edges starting from an arbitrary
vertex~\cite{Ankerst1999}. This approach takes $O(n\log n)$ work and is also
hard to parallelize efficiently, since Prim's algorithm is inherently sequential.

Our new parallel algorithm uses a top-down approach to generate the
dendrogram and reachability plot given a weighted tree. Our algorithm takes $O(n\log n)$ expected work and $O(\log^2n\log\log n)$ depth with high probability, and
hence is work-efficient. 

\subsection{Ordered Dendrogram}

We discuss the relationship between the dendrogram and \plot{}, which
are both used in \hdbscan.  It is known~\cite{sander2003automatic}
that a \plot{} can be converted into a dendrogram using a linear-work
algorithm for Cartesian tree construction~\cite{gabow1984scaling},
which can be parallelized~\cite{BlellochS14}.
However, converting in the other direction, which is what we need, is more challenging because
the children in dendrogram nodes are unordered, and can correspond to
many possible sequences, only one of which corresponds to the
traversal order in Prim's algorithm that defines the reachability
plot.

Therefore, for a specific starting point $s$, we define the
\defn{\newddg{}} of $s$, which is a dendrogram where its in-order traversal corresponds to the
\plot{} starting at point $s$.  With this definition, there is a
one-to-one correspondence between a \newddg{} and a \plot{}, and there
are a total of $n$ possible \newddg{}s and \plot{}s for an input of
size $n$.  Then, a \plot{} is just the in-order traversal of the leaves of an \newddg{}, and an \newddg{} is the corresponding Cartesian tree for the
\plot{}.

\subsection{A Novel Top-Down Algorithm}

We introduce a novel work-efficient parallel algorithm to compute
a dendrogram, which can be modified to compute an \newddg{} and its corresponding reachability plot.

\myparagraph{Warm-up} As a warm-up, we first propose a simple
top-down algorithm for constructing the dendrogram, which does not
quite give us the desired work and depth bounds.
We first generate an Euler tour on the input tree~\cite{JaJa92}. Then,
we delete the heaviest edge, which can be found in linear work and
$O(1)$ depth by checking all edges.  By definition, this edge will be
the root of the dendrogram, and removing this edge partitions the tree
into two subtrees corresponding to the two children of the root.  We
then convert our original Euler tour into two Euler tours, one for
each subtree, which can be done in constant work and depth by updating
a few pointers.  Next, we partition our list of edges into two lists,
one for each subproblem.
This can be done by applying list ranking on each Euler tour to
determine appropriate offsets for each edge in a new array associated
with its subproblem.
This step takes linear work and has $O(\log n)$ depth~\cite{JaJa92}.
Finally, we solve the two subproblems recursively.

Although the algorithm is simple, there is no guarantee that the
subproblems are of equal size. In the worst case, one of the
subproblems could contain all but one edges (e.g., if the tree is a
path with edge weights in increasing order), and the algorithm would
require $O(n)$ levels of recursion. The total work would then be
$O(n^2)$ and depth would be $O(n\log n)$, which is clearly
undesirable.

\myparagraph{An algorithm with $O(\log n)$ levels of recursion} We now describe a top-down approach that guarantees $O(\log n)$ levels of recursion. We define the \defn{heavy edges} of a tree with $n$ edges to be the $n/2$ (or any constant fraction of $n$) heaviest edges and the \defn{light edges} of a tree to be the remaining edges. Rather than using a single edge to partition the tree, we use the $n/2$ heaviest edges to partition the tree.
The heavy edges correspond to the part of the dendrogram closer to the root, which we refer to as the \emph{top} part of the dendrogram, and the light edges correspond to subtrees of the top part of the dendrogram. Therefore, we can recursively construct the dendrogram on the heavy edges and the dendrograms on the light edges in parallel.
Then, we insert the roots of the dendrograms for the light edges into the leaf nodes of the heavy-edge dendrogram. The base case is when there is a single edge, from which we can trivially generate a dendrogram.

An example is shown in Figure~\ref{fig:ddg}.
We first construct the Euler tour of the input tree
(Figure~\ref{fig:ddg}a). Then, we find the median edge based on edge weight, separate the heavy and light edges and compact them into a heavy-edge subproblem and multiple light-edge subproblems.
For the subproblems, we construct their Euler tours by adjusting pointers, and mark the position of each light-edge subproblem in the heavy-edge subproblem where it is detached.
Then, recursively and in parallel, we compute the dendrograms for each subproblem (Figure~\ref{fig:ddg}b).
After that, we insert the roots of the light-edge dendrograms to the appropriate leaf nodes in the heavy-edge dendrogram, as marked earlier (Figure~\ref{fig:ddg}c).

\begin{figure}
    \vspace{-12pt}
    \includegraphics[trim={0 0 65 100},clip,width = 0.82\columnwidth]{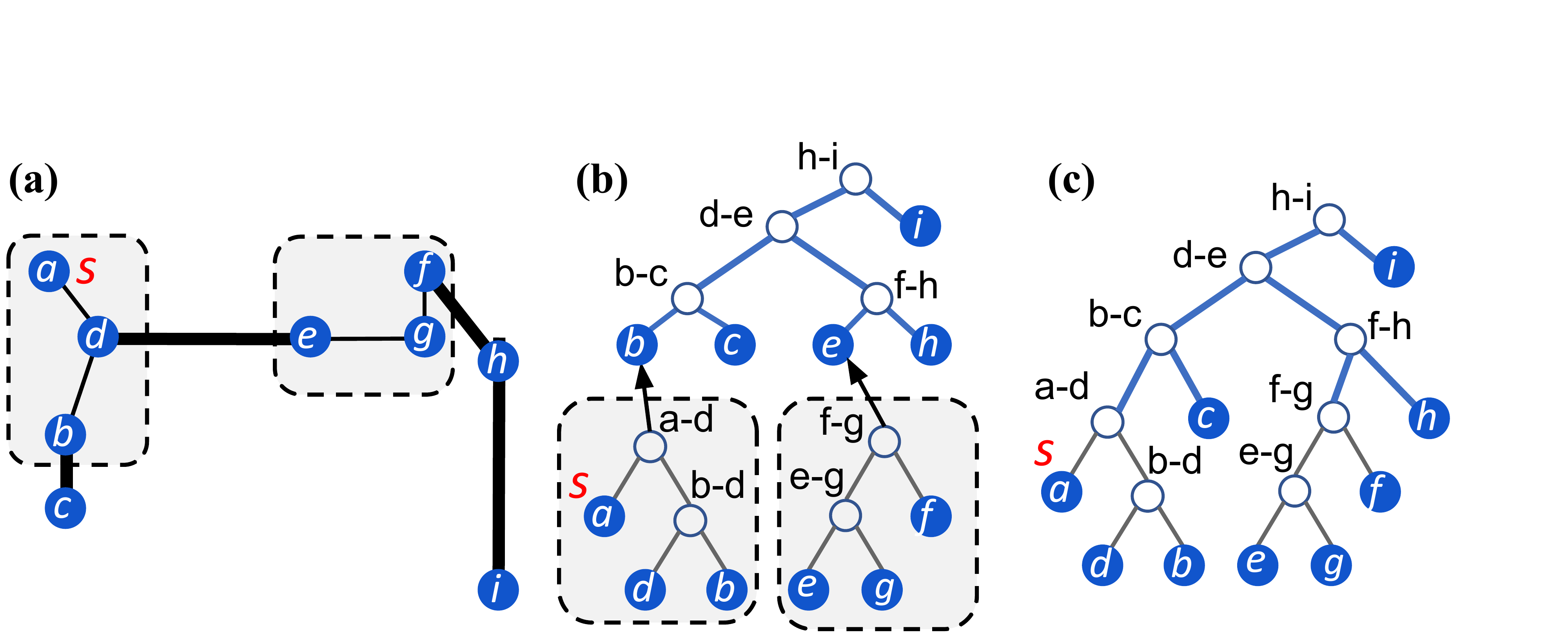}
    \caption{
      An example of the dendrogram construction algorithm on the tree from \cref{fig:hdbscan}. 
      The input tree is shown in
      (a). The 4 heavy edges are in bold. 
      We have three subproblems---one for the
      heavy edges and two for the light edges.  The dendrograms for the
      subproblems are generated recursively, as shown in (b).  The
      edge labeled on an internal node is the edge whose removal
      splits a cluster into the two clusters represented by its
      children.  As shown in (c), we insert the roots of the
      dendrograms for the light edges at  the corresponding
      leaf nodes of the heavy-edge dendrogram.  For the ordered
      dendrogram, the in-order traversal of the leaves 
      corresponds to the \plot{} shown in \cref{fig:hdbscan} when the starting point $s=a$.
    }     \vspace{-5pt}
    \label{fig:ddg}
\end{figure}

\cref{fig:ddg} shows how this algorithm applies to the input in \cref{fig:hdbscan} with
 source vertex $a$.
The four heaviest edges $(b,c)$, $(d,e)$, $(f,h)$, and $(h,i)$ divide the tree into two light subproblems, consisting of $\{(a,d),(d,b)\}$ and $\{(e,g),(g,f)\}$.
The heavy edges form another subproblem. We mark vertices $b$ and $e$, where the light subproblems are detached.
After constructing the dendrogram for the three subproblems, we insert the light dendrograms at leaf nodes $b$ and $e$,
as shown in Figure~\ref{fig:ddg}b. It forms the correct dendrogram in Figure~\ref{fig:ddg}c.

We now describe the details of the steps to separate the subproblems and re-insert them into the final dendrogram. 

\myparagraph{Subproblem Finding}
To find the position in the heavy-edge dendrogram to insert a light-edge dendrogram at, every light-edge subproblem will be associated with a unique heavy edge.
The dendrogram of the light-edge subproblem will eventually connect to the corresponding leaf node in the heavy-edge dendrogram associated with it. We first explain how to separate the heavy-edge subproblem and the light-edge subproblems.

First, we compute the unweighted distance from every point to the starting point $s$ in the tree, and we refer to them as the \defn{vertex distances}. For the \newddg{}, $s$ is the starting point of the reachability plot, whereas $s$ can be an arbitrary vertex if the ordering property is not needed.
We compute the vertex distances by performing list ranking on the tree's Euler tour rooted at $s$.
These distances can be computed by labeling each downward edge (away from $s$) in the tree with a value of $1$ and each upward edge (towards $s$) in the tree with a value of $-1$, and running list ranking on the edges.
The vertex distances are computed only once.

We then identify the light-edge subproblems in parallel by using the vertex distances.
For each light edge $(u,v)$, we find an adjacent edge $(w,u)$ such that $w$ has smaller vertex distance than both $u$ and $v$. We call $(w,u)$ the \defn{predecessor edge} of $(u,v)$.
Each edge can only have one predecessor edge (an edge adjacent to $s$ will choose itself as the predecessor).
In a light-edge subproblem not containing the starting vertex $s$, the predecessor of each light edge will either be a light edge in the same light-edge subproblem, or a heavy edge. The edges in each light-edge subproblem will form a subtree based on the pointers to predecessor edges.
We can obtain the Euler tour of each light-edge subproblem by adjusting pointers of the original Euler tour. 
The next step is to run list ranking to propagate a unique label (the root's label of the subproblem subtree) of each light-edge subproblem to all edges in the same subproblem.
To create the Euler tour for the heavy subproblem, we contract the subtrees for the light-edge subproblems: for each light-edge subproblem, we map its leaves to its root using a parallel hash table. Now each heavy edge adjacent to a light-edge subproblem leaf can connect to the heavy edge adjacent to the light-edge subproblem root by looking it up in the hash table. The Euler tour for the heavy-edge subproblem can now be constructed by adjusting pointers.
We assign the label of the heavy-edge subproblem root to all of the heavy edges in parallel.
Then, we semisort the labeled edges to group edges of the same light-edge subproblems and the heavy-edge subproblem.
Finally, we recursively compute the dendrograms on the light-edge subproblems and the heavy-edge subproblem.
In the end, we connect the light-edge dendrogram for each subproblem to the heavy-edge dendrogram leaf node corresponding to the shared endpoint between the light-edge subproblem and its unique heavy predecessor edge.
For the light-edge subproblem containing the starting point $s$, we simply insert its light-edge dendrogram into the left-most leaf node of
the heavy-edge dendrogram.

Consider the example in Figure~\ref{fig:ddg}a.
The heavy-edge subproblem contains edges $\{(b,c), (d,e), (f,h), (h,i)\}$, and its dendrogram is shown in Figure~\ref{fig:ddg}b.
For the light-edge subproblem  $\{(e,g), (g,f)\}$, $(e,g)$ has heavy predecessor edge $(d,e)$, and $(g,f)$ has light predecessor edge $(e,g)$.
The unique heavy edge associated with the light-edge subproblem is hence $(d,e)$, with which it shares vertex $e$.
Hence, we insert the light-edge dendrogram for the subproblem into leaf node $e$ in the heavy-edge dendrogram, as shown in Figure~\ref{fig:ddg}b.
The light-edge subproblem containing $\{(a,d),(d,b)\}$ contains the starting point $s=a$, and so we insert its dendrogram into the leftmost leaf node $b$ of the heavy-edge dendrogram, as shown in Figure~\ref{fig:ddg}b.

We first show that our algorithm correctly computes a dendrogram, and analyze its cost bounds (Theorem~\ref{thm:dendro}). Then, we describe and analyze additional steps needed to generate an ordered dendrogram and obtain a reachability plot from it (Theorem~\ref{thm:dendro-ordering}). 

\begin{theorem}\label{thm:dendro}
Given a weighted spanning tree with $n$ vertices, we can compute a dendrogram in $O(n\log n)$ expected work and $O(\log^2 n\log \log n)$ depth with high probability.
% Note: s is needed if dendrogram is ordered
\end{theorem}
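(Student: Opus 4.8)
The plan is to prove correctness and the cost bounds separately, following the top-down recursive structure described above. For \textbf{correctness}, I would argue by induction on the recursion. The key claim is that the $n/2$ heaviest edges (the heavy edges) form the part of the dendrogram closest to the root, and each connected component of the light edges hangs off a unique leaf of the heavy-edge dendrogram. Concretely: since every heavy edge is processed before every light edge in the agglomerative order, the clusters merged by heavy edges are exactly the internal nodes in the top portion of the dendrogram; a light-edge component, being a subtree attached to the rest of the tree through a single heavy ``predecessor'' edge, becomes a subtree of the dendrogram rooted at the leaf of the heavy-edge dendrogram corresponding to the shared endpoint of that component and its predecessor heavy edge. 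I would verify that the predecessor-edge construction (using vertex distances from $s$) indeed assigns each light-edge component a well-defined unique heavy edge, that the subtrees reconstructed from predecessor pointers are exactly the light-edge components, and that gluing the recursively-computed light-edge dendrograms onto the recursively-computed heavy-edge dendrogram at these marked leaves reproduces exactly the dendrogram that the sequential union-find algorithm would build. The base case of a single edge is immediate.

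For the \textbf{recursion-depth bound}, each recursive call either halves the number of edges (the heavy-edge subproblem has exactly $n/2$ edges) or is a light-edge subproblem, which also has at most $n/2$ edges since the light edges total $n/2$; hence every subproblem at the next level has size at most $n/2$, giving $O(\log n)$ levels of recursion.

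For the \textbf{work and depth per level}: within a level, the total number of edges summed over all subproblems is $n$, so I must show each subproblem of size $m$ is processed in $O(m \log m)$ expected work and $O(\log m \log\log m)$ depth, and these sum/max appropriately. The per-subproblem operations are: Euler tour construction, list ranking (to compute vertex distances, to propagate subproblem labels, and to compute array offsets for compaction), constant-work pointer surgery to split Euler tours, a parallel hash table to contract light-edge components for building the heavy-edge Euler tour, and a semisort to group edges by subproblem label. Euler tour and list ranking are $O(m)$ work and $O(\log m)$ depth; the hash table and semisort are $O(m)$ expected work and $O(\log m)$ depth \emph{with high probability}. Summing work over all subproblems at a level gives $O(n)$ expected, and over $O(\log n)$ levels gives $O(n\log n)$ expected work. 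For depth, each level costs $O(\log n)$ with high probability from the hash-table/semisort primitives; multiplying by $O(\log n)$ levels naively gives $O(\log^2 n)$, but the high-probability bounds for semisort on small subproblems degrade — a subproblem of size $m$ only gets its $O(\log m)$ depth bound with probability polynomial in $m$, not in $n$ — so to take a union bound over all $O(\log n)$ levels and all subproblems while keeping failure probability polynomially small in $n$, one pays an extra $O(\log\log n)$ factor (roughly, boosting the success probability of each small primitive by running it or its analysis at $O(\log\log n)$ granularity), yielding $O(\log^2 n \log\log n)$ depth with high probability.

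The \textbf{main obstacle} I anticipate is not the asymptotics but the correctness bookkeeping of the subproblem-finding and re-gluing step: precisely defining predecessor edges via vertex distances, proving that the predecessor pointers restricted to light edges induce exactly the forest of light-edge components (no cycles, right roots), showing the hash-table contraction yields a valid Euler tour for the heavy-edge subproblem, and checking that the leaf of the heavy-edge dendrogram where we attach a light component is the correct one (the one labeled by the shared endpoint), including the special handling of the component containing $s$ which attaches at the leftmost leaf. The depth argument's $\log\log n$ loss also requires care: one must state exactly which primitive's failure probability needs boosting and how, so that a single union bound over all recursive calls succeeds with high probability in $n$.
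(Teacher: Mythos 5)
Your correctness argument and the $O(\log n)$ recursion-depth argument follow the paper's route, but there is a concrete gap in the cost analysis: you never say how the heavy/light partition is computed at each level. Identifying the $n/2$ heaviest edges requires a rank/median computation; re-sorting the edges on every level would cost $O(n\log n)$ work per level and destroy work-efficiency, so the paper instead sorts only once at the top (to build the adjacency list/Euler tour) and then uses parallel \emph{selection} to find the median weight and partition the edges at each level, at $O(n)$ work and $O(\log n\log\log n)$ depth. This selection step is precisely the source of the $\log\log n$ factor in the depth bound: $O(\log n)$ levels times $O(\log n\log\log n)$ per level. Your proposal omits selection entirely, and also states that each subproblem of size $m$ takes $O(m\log m)$ expected work, which, if taken literally, sums to $O(n\log n)$ per level and $O(n\log^2 n)$ overall --- more than the claimed bound; the intended per-level total must be $O(n)$.

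Relatedly, your explanation of the $\log\log n$ factor --- boosting the success probability of semisort/hash-table calls on small subproblems so a union bound over all recursive calls goes through --- is not the paper's argument and, as written, is too vague to carry the bound: you do not specify which primitive is re-run at what granularity, nor show that the boosted version still has $O(n)$ expected work per level. The concern you raise (whp bounds in $m$ versus in $n$ for small subproblems) is a legitimate subtlety, but the clean way to handle it is to account the randomized primitives per level over all $n$ elements (or to note that the deterministic selection step already dominates the per-level depth), rather than to invent an amplification scheme whose cost is unanalyzed. In short: add parallel selection for the heavy/light split, charge $O(\log n\log\log n)$ depth per level to it, and tighten the per-subproblem work claim to linear; then your argument matches the theorem.
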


\begin{proof}
We first prove that our algorithm correctly produces a dendrogram.
In the base case, we have one edge $(u,v)$, and the algorithm produces a tree with a root representing $(u,v)$, and with $u$ and $v$ as children of the root, which is trivially a dendrogram.
We now inductively hypothesize that recursive calls to our algorithm correctly produce 
dendrograms.
The heavy subproblem recursively computes a top dendrogram consisting of all of the heavy edges, and the light subproblems form dendrograms consisting of light edges. We replace the leaf vertices in the top dendrogram associated with light subproblems by the roots of the  dendrograms on light edges.
Since the edges in the heavy subproblem are heavier than all edges in light subproblems, and are also ancestors of the light edges in the resulting tree, this gives a valid dendrogram.

We now analyze the cost of the algorithm.
To generate the Euler tour at the beginning, we first sort the edges and create an adjacency list representation, which takes $O(n\log n)$ work and $O(\log n)$ depth~\cite{Cole88}.
Next, we root the tree, which can be done by list ranking on the Euler tour of the tree.
Then, we compute the vertex distances to $s$ using another round of list ranking based on the rooted tree.

There are $O(\log n)$ recursive levels since the subproblem sizes are at most half of the original problem.
We now show that each recursive level takes linear expected work and polylogarithmic depth with high probability.
Note that we cannot afford to sort the edges on every recursive level, since that would take $O(n\log n)$ work per level.
However, we only need to know which edges are heavy and which are light, and so we can use parallel selection~\cite{JaJa92} to find the median and partition the edges into two sets. This takes $O(n)$ work and $O(\log n\log \log n)$ depth.
Identifying predecessor edges takes a total of $O(n)$ work and $O(1)$ depth:
first, we find and record for each vertex its edge where the other endpoint has a smaller vertex distance than it (using \textsc{WriteMin}); then, the predecessor of each edge can be found by checking the recorded edge for its endpoint with smaller vertex distance. 
We then use list ranking to assign labels to each subproblem, which takes $O(n)$ work and
$O(\log n)$ depth~\cite{JaJa92}. 
The hash table operations to contract and look up the light-edge subproblems
contribute $O(n)$  work and $O(\log n)$ depth with high probability.
The semisort to group the subproblems takes $O(n)$ expected work and $O(\log n)$ depth with high probability.
Attaching the light-edge dendrograms into the heavy-edge dendrogram takes $O(n)$ work and $O(1)$ depth across all subproblems.
Multiplying the bounds by the number of levels of recursion proves the theorem.
\end{proof}

\vspace{-3pt}
\begin{theorem}\label{thm:dendro-ordering}
Given a starting vertex $s$, 
we can generate an \newddg{} and reachability plot in the same cost bounds as in Theorem~\ref{thm:dendro}.
\end{theorem}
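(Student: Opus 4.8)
The plan is to reuse, essentially verbatim, the algorithm behind Theorem~\ref{thm:dendro}, but to root the Euler tour at the prescribed vertex $s$ and to be careful about the left/right order of every dendrogram node. Note that the algorithm of Theorem~\ref{thm:dendro} already computes the vertex distances to $s$ and the predecessor edge of every light edge, so the only genuinely new ingredients are bookkeeping. Concretely, I would make three changes. First, each light-edge subproblem $L$ that does not contain $s$ is solved recursively with its start vertex set to $e_L$, the endpoint that $L$ shares with its unique heavy predecessor edge (this vertex is already identified when we determine which heavy-dendrogram leaf $L$ attaches to), and its dendrogram is attached at that leaf. Second, the light-edge subproblem $L_0$ containing $s$ is solved recursively with start vertex $s$, the heavy-edge subproblem is solved recursively with its start vertex set to the contracted representative of $L_0$ so that this representative becomes its leftmost leaf, and $L_0$'s dendrogram is attached there (if $s$ is incident only to heavy edges, the heavy-edge subproblem simply uses $s$ directly and there is no distinguished $L_0$). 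Third, after the recursion returns the dendrogram, emit the reachability plot by an in-order traversal of its leaves, labeling each leaf with the weight of the edge at its parent; this is one Euler tour plus list ranking on the dendrogram tree.

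For correctness I would first prove a structural lemma about Prim's algorithm run on a weighted tree $T$ from a start vertex $r$ (we break ties among equal edge weights by a fixed rule so that Prim's, the dendrogram, and the reachability plot are all well defined). Split the edges of $T$ into the $n/2$ heaviest (heavy) and the rest (light) as in the algorithm, let the maximal connected blocks of light edges be contracted to points to form the heavy-edge tree $T_H$, and let $L_0$ be the block containing $r$. The lemma is that Prim's from $r$ visits $T$ one block at a time: it visits all of $L_0$ first in Prim's-from-$r$ order, and thereafter, whenever all reached light edges are exhausted, it crosses the minimum-weight heavy edge, which necessarily enters a new block $L$ at its top vertex $e_L$ and then absorbs all of $L$ in Prim's-from-$e_L$ order before any further heavy edge is used; consequently the order in which the blocks are visited equals the order in which Prim's on $T_H$ from the contraction of $L_0$ visits its vertices. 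The two points to argue are that while some light edge crosses the current cut Prim's cannot pick a heavy edge (every light edge is lighter than every heavy edge), and that, in $T$ rooted at $r$, the remainder of a block and everything hanging below it is reachable only through that block's top vertex, so a new block is always entered at $e_L$. Given this lemma, the reachability plot of $T$ from $s$ is the concatenation of the reachability plot of $L_0$ from $s$ with those of the remaining blocks in $T_H$-Prim's order, each started at its own $e_L$.

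The theorem then follows by induction on the recursion, with the base case of a single edge immediate. By the inductive hypothesis, the recursive call on the heavy-edge subproblem returns the ordered dendrogram of $T_H$ started at the contraction of $L_0$, whose in-order leaf traversal is exactly Prim's order on $T_H$; and each recursive call on a block $L$ returns the ordered dendrogram of $L$ started at $e_L$ (or at $s$ for $L_0$). Replacing each leaf of the heavy dendrogram by the corresponding block dendrogram therefore produces a tree whose in-order leaf traversal is precisely the reachability plot of $T$ from $s$ --- i.e., the ordered dendrogram --- which proves the inductive step. For the cost, every per-level primitive is already charged in the proof of Theorem~\ref{thm:dendro} (list ranking for distances and labels, \textsc{WriteMin} for predecessors, selection, hashing, semisorting), and the additions here --- passing start vertices down and attaching each block dendrogram at the right leaf, including the leftmost-leaf attachment of $L_0$ --- add only $O(n)$ work and $O(1)$ depth per level; the final in-order traversal is a one-time $O(n)$ work, $O(\log n)$ depth step. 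Hence the bounds match Theorem~\ref{thm:dendro}: $O(n\log n)$ expected work and $O(\log^2 n\log\log n)$ depth with high probability.

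I expect the main obstacle to be the structural lemma, namely showing that Prim's traversal order decomposes exactly along the heavy/light partition --- blocks visited contiguously, each entered at its top vertex $e_L$, and in the order Prim's visits the contracted tree. This requires combining the distinct-weights (consistent tie-breaking) observation with a careful analysis of $T$ rooted at $r$ to pin down entry vertices and to rule out entering a new block before the current ones are finished. Once that lemma is established, the inductive assembly of the ordered dendrogram and the cost accounting are routine.
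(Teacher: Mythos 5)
Your proposal is correct, and it implements exactly the paper's algorithm (recurse on the heavy contracted subproblem and on each light block, attach each light dendrogram at the leaf given by the endpoint it shares with its unique heavy predecessor edge, attach the block containing $s$ at the leftmost leaf, and finish with an Euler-tour/list-ranking in-order traversal), but your correctness argument takes a genuinely different route. The paper never proves a global statement about Prim's traversal of the whole tree: it fixes the left/right order of each internal node $(u,v)$ by comparing vertex distances to $s$, handles the block containing $s$ directly, and then argues on the dendrogram side that any two light-edge dendrograms appear in the correct relative order---easy when the two blocks lie on different sides of one of their predecessor-edge nodes, with the one remaining bad configuration ruled out by a lowest-common-ancestor contradiction inside the heavy dendrogram. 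You instead prove one structural lemma about Prim's on the tree itself: light blocks are absorbed contiguously (no heavy edge is chosen while a light edge crosses the cut), each block is first entered through its top vertex $e_L$ via its unique heavy predecessor edge (since the visited set is a connected subtree containing $s$, the subtrees hanging below an unvisited block are unreachable), and the block order equals Prim's order on the contracted heavy tree; the inductive assembly is then immediate. Your lemma is a clean cut-property argument and is arguably more self-contained than the paper's case analysis, at the price of having to set up the contiguity/entry-vertex claims carefully (which you do); the cost accounting is identical in both, since the extra bookkeeping is $O(n)$ work and $O(1)$ depth per level plus one final $O(n)$-work, $O(\log n)$-depth traversal. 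One small correction that does not affect the bounds: when emitting the reachability plot, the bar height of the $i$-th leaf is the weight stored at the internal node immediately preceding it in the in-order traversal (the LCA with the previous leaf), not in general the weight at the leaf's parent.
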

\vspace{-3pt}

\begin{proof}

We have computed the vertex distances of all vertices from $s$. 
When generating the \newddg{} and constructing each internal node of the dendrogram corresponding to an edge $(u,v)$, and without loss of generality let $u$ have a smaller vertex distance than $v$, our algorithm puts the result of the subproblem attached to $u$ in the left subtree, and that of $v$ in the right subtree. This additional comparison does not increase the work and depth of our algorithm.

Our algorithm recursively builds \newddg{}s on the heavy-edge subproblem and on each of the light-edge subproblems, which we assume to be correct by induction. The base case is a single edge $(u,v)$, and without loss of generality let $u$ have a smaller vertex distance than $v$. Then, the dendrogram will contain a root node representing edge $(u,v)$, with
$u$ as its left child and $v$ as its right child. Prim's algorithm would visit $u$ before $v$, and the in-order traversal of the dendrogram does as well, so this is an \newddg{}. 

We now argue that the way that light-edge dendrograms are attached to the leaves of the heavy-edge dendrogram correctly produces an \newddg{}. First, consider a light-edge subproblem that contains the source vertex $s$. In this case, 
its dendrogram is attached as the leftmost leaf of the heavy-edge dendrogram, and will be the first to be traversed in the in-order traversal. The vertices in the light-edge subproblem form a connected component $A$. They will be traversed before any other vertices in Prim's algorithm because all incident edges that leave $A$ are heavy edges, and thus are heavier than any edge in $A$. Therefore, vertices outside of $A$ can only be visited after all vertices in $A$ have been visited, which correctly corresponds to the in-order traversal.

Next, we consider the case where the light-edge subproblem does not contain $s$.
Let $(u,v)$ be the predecessor edge of the light-edge subproblem, 
and let $A$ be the component containing the edges in the light-edge subproblem ($v$ is a vertex in $A$).
Now, consider a different light-edge subproblem that does not contain $s$, whose predecessor edge is $(x,y)$, and let $B$ be the component containing the edges in this subproblem ($y$ is a vertex in $B$).
By construction, we know that $A$ is in the right subtree of the dendrogram node corresponding to edge $(u,v)$ and $B$ is in the right subtree of node corresponding to $(x,y)$.
The ordering between $A$ and $B$ is correct as long as they are on different sides of either node $(u,v)$ or node $(x,y)$.
For example, if $B$ is in the left subtree of node $(u,v)$, then its vertices appear before $A$ in the in-order traversal of the dendrogram. 
By the inductive hypothesis on the heavy-edge subproblem, in Prim's order, $B$ will be traversed before $(u,v)$, and $(u,v)$ is traversed before $A$. 
We can apply a similar argument to all other cases where $A$ and $B$ are on different sides of either node $(u,v)$ or node $(x,y)$.

We are concerned with the case where $A$ and $B$ are both in the right subtrees of the nodes representing their predecessor edges. We prove by contradiction that this cannot happen.
Without loss of generality, suppose node $(x,y)$ is in the right subtree of node $(u,v)$, and let both $A$ and $B$ be in the right subtree of $(x,y)$.
There exists a lowest common ancestor (LCA) node $(x',y')$ of $A$ and $B$. $(x',y')$ must be a heavy edge in the right subtree of $(x,y)$.
By properties of the LCA, $A$ and $B$ are in different subtrees of node $(x',y')$. Without loss of generality, let $A$ be in the left subtree. 
Now consider edge $(x',y')$ in the tree. By the inductive hypothesis on the heavy-edge dendrogram, in Prim's traversal order, we must first visit the leaf that $A$ attaches  to (and hence $A$) before visiting $(x',y')$, which must be visited before the leaf that $B$ attaches to (and hence $B$).
On the other hand, edge $(x,y)$ is also along the same path since it is the predecessor of $B$.
Thus, we must either have $(x',y')$ in $(x,y)$'s left subtree
or $(x,y)$ in $(x',y')$'s right subtree,
which is a contradiction to $(x',y')$ being in the right subtree of $(x,y)$.

We have shown that given any two light-edge subproblems, their relative ordering after being attached to the heavy-edge dendrogram is correct.
Since the heavy-edge dendrogram is an \newddg{} by induction, the order in which the light-edge subproblems are traversed is correct. Furthermore, each light-edge subproblem generates an \newddg{} by induction. Therefore, the overall dendrogram is an \newddg{}.

Once the \newddg{} is computed, we can use list ranking to
perform an in-order traversal on the Euler tour of the dendrogram
to give each node a rank, and write them out in order.
We then filter out the non-leaf nodes to
to obtain the \plot{}.
Both list ranking and filtering take $O(n)$ work and $O(\log n)$ depth.
\end{proof}

\myparagraph{Implementation}
In our implementation, we simplify the process of finding the subproblems
by using a sequential procedure rather than performing parallel list ranking,
because in most cases parallelizing over the different subproblems already
provides sufficient parallelism.
We set the number of heavy edges to $n/10$, which we found to give better performance in practice, and also preserves the theoretical bounds.
We switch to the sequential dendrogram construction algorithm when the problem size falls below $n/2$.

\section{Experiments}\label{sec:experiment}

\myparagraph{Environment} We perform experiments on
an Amazon EC2 instance with 2
$\times$ Intel Xeon Platinum 8275CL (3.00GHz) CPUs for a total of 48
cores with two-way hyper-threading, and 192 GB of RAM.  By default, we use all cores
with hyper-threading.  We use the \texttt{g++} compiler (version 7.4)
with the \texttt{-O3} flag, and use Cilk for parallelism~\cite{cilkplus}.  We
do not report
times for tests that exceed 3 hours.

We test the following implementations for EMST (note that the EMST problem does not include dendrogram generation):
\begin{itemize}[topsep=1pt,itemsep=0pt,parsep=0pt,leftmargin=10pt]
\item \defn{EMST-Naive}: The method of creating a graph
  with the $\bccp$ edges from all well-separated pairs and then
  running MST on it.

\item \defn{EMST-GFK}: The parallel GeoFilterKruskal algorithm described
  in Section~\ref{sec:gfk} (Algorithm~\ref{alg:gfk}).
\item \defn{EMST-MemoGFK}: The parallel GeoFilterKruskal algorithm
  with the memory optimization
  described in Section~\ref{sec:impl:memogfk} (Algorithm~\ref{alg:memogfk}).
\iffullversion
\item \defn{EMST-Delaunay}: The method of computing an MST on a Delaunay triangulation for 2D data sets described in \cref{sec:emst_2d}.
\fi
\end{itemize}

We test the following implementations for \hdbscan:
\begin{itemize}[topsep=1pt,itemsep=0pt,parsep=0pt,leftmargin=10pt]
\item \defn{\hdbscan-GanTao}: The modified algorithm of Gan and Tao for exact \hdbscan described in Section~\ref{sec:hdbscan_wspd:existing}.
\item \defn{\hdbscan-MemoGFK}: The \hdbscan algorithm using our new definition of well-separation described in Section~\ref{sec:hdbscan_wspd:our}.
\end{itemize}

Both  \defn{\hdbscan-GanTao} and \defn{\hdbscan-MemoGFK}  use the memory optimization described in Section~\ref{sec:impl:memogfk}.  All \hdbscan running times include constructing an MST of the mutual reachability graph and computing the ordered dendrogram.  We use a default value of $\minpts=10$ (unless specified otherwise), which is also adopted in previous work~\cite{Campello2015, mcinnes2017accelerated, Gan2018}.
% https://github.com/scikit-learn-contrib/hdbscan/blob/master/examples/plot_hdbscan.py
% min_cluster_size=10

Our algorithms are designed for multicores, as we found that
multicores are able to process the largest data sets in the literature
for these problems (machines with several terabytes of RAM can be
rented at reasonable costs on the cloud).  Our multicore
implementations achieve significant speedups over existing
implementations in both the multicore and distributed memory
contexts.

\myparagraph{Data Sets} We use the synthetic seed spreader data sets
produced by the generator in~\cite{GanT17}. It produces points
generated by a random walk in a local neighborhood (\defn{SS-varden}).
We also use \defn{UniformFill} that contains
points distributed uniformly at random inside a bounding hypergrid
with side length $\sqrt{n}$ where $n$ is the total number of points.
We generated the synthetic data sets with 10 million points (unless
specified otherwise) for dimensions $d=2,3,5,7$.

We use the following real-world data sets.
\defn{GeoLife}~\cite{Zheng2008, GeoLifeURL}
is a 3-dimensional data set with $24,876,978$ data points. This data set
contains user location data (longitude, latitude, and altitude), and
is extremely skewed.  \defn{Household}~\cite{UCI, HouseholdURL}
is a 7-dimensional data set with $2,049,280$ points representing electricity
consumption measurements in households.
\defn{HT}~\cite{Huerta2016OnlineHA, HTURL}
is a 10-dimensional data set with $928,991$ data points containing home
sensor data.  \defn{CHEM}~\cite{fonollosa2015reservoir, CHEMURL}
is a 16-dimensional data set with $4,208,261$ data points containing chemical sensor data.
All of the data sets fit in the RAM of our machine.

\myparagraph{Comparison with Previous Implementations}
For EMST, we tested the sequential Dual-Tree Boruvka 
algorithm of March
et al.~\cite{march2010fast} (part of \texttt{mlpack}), and
our single-threaded \textsc{EMST-MemoGFK} times are 0.89--4.17 (2.44 on average) times faster.
\iffullversion
Raw running times for \texttt{mlpack} are presented in Table~\ref{table:emst_mlpack}.
\fi
We also tested McInnes and Healy's sequential \hdbscan implementation
which is based on Dual-Tree Boruvka~\cite{mcinnes2017accelerated}.  We were unable
to run their code on our data sets with 10 million points in a
reasonable amount of time.  On a smaller data set with 1 million points
(2D-SS-varden-1M), their code takes around 90 seconds to compute the MST
and dendrogram, which is 10 times slower than our
\hdbscan-MemoGFK implementation on a single thread, due to their code using Python and having fewer optimizations.
We observed a similar trend on other data sets for McInnes and Healy's implementation.

The GFK algorithm implementation for EMST of~\cite{ChatterjeeCK10} in the
\texttt{Stann} library supports multicore execution using
OpenMP.  We found that, in parallel, their 
GFK implementation always runs much slower when using all 48
cores than running sequentially, and so we decided to not include their
parallel running times in our experiments.  In addition, our own
sequential implementation of the same algorithm is 0.79--2.43x
(1.23x on average) faster than theirs, and so we parallelize our own
version as a baseline.
We also tested the multicore implementation of the parallel OPTICS
algorithm in~\cite{Patwary2013} using all 48 cores on our machine.
Their code exceeded our 3-hour time limit for our
data sets with 10 million points.  On a smaller data set of 1 million
points (2D-SS-varden-1M), their code took 7988.52 seconds, whereas our
fastest parallel implementations take only a few seconds.
We also compared with the parallel \hdbscan code by Santos et al.~\cite{santos2019hdbscanmapreduce}, which mainly focuses on
approximate \hdbscan in distributed memory. As reported in their paper, for the HT data set with $\minpts = 30$, their code on 60 cores
takes 42.54 and 31450.89 minutes to build the approximate and exact
MST, respectively, and 124.82 minutes to build the dendrogram. In
contrast, our fastest implementation using 48 cores builds the MST in
under 3 seconds, and the dendrogram in under a second.

Overall, we found the fastest sequential methods for EMST and \hdbscan to be our EMST-MemoGFK and HDBSCAN*-MemoGFK methods running on 1 thread. Therefore, we also based our parallel implementations on these methods.

\iffullversion

\begin{figure*}
\begin{center}
\vspace{-1em}
\includegraphics[width=\textwidth]{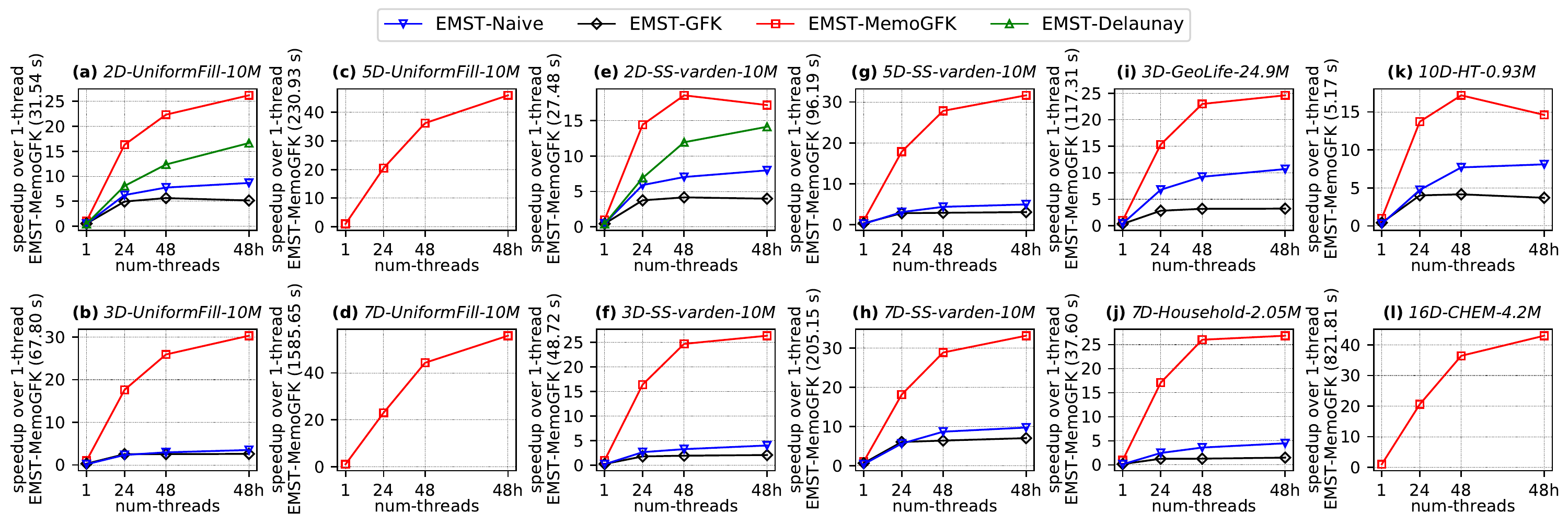}
%\vspace{.01em}
\caption{Speedup of EMST implementations over the \emph{best} serial baselines vs.
  thread count. The best serial baseline and its running time for each data set
  is shown on the $y$-axis label. ``48h''
  on the $x$-axis refers to 48 cores with hyper-threading.}
\label{plot:emst_speedup}
\end{center}
\end{figure*}

\else

\begin{figure*}
\begin{center}
\vspace{-5pt}
\includegraphics[width=0.95\textwidth]{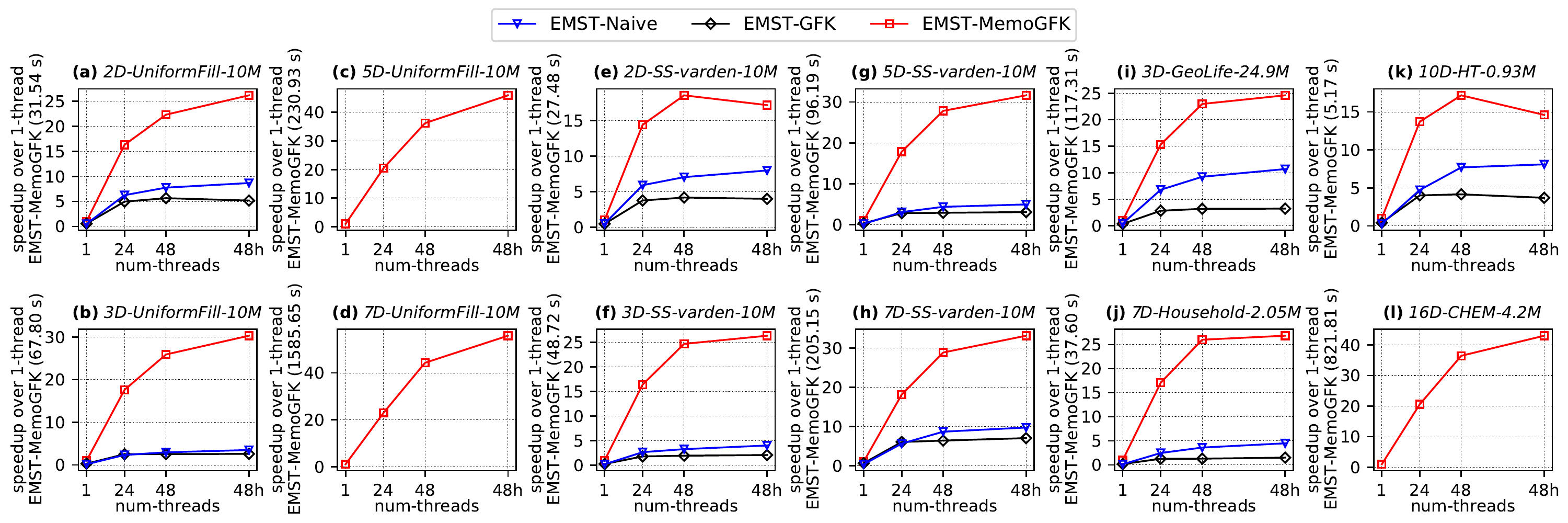}
%\vspace{.01em}
\caption{Speedup of EMST implementations over the \emph{best} serial baselines vs.
  thread count. The best serial baseline and its running time for each data set
  is shown on the $y$-axis label. ``48h''
  on the $x$-axis refers to 48 cores with hyper-threading. }
\label{plot:emst_speedup}
\end{center}
\end{figure*}

\fi
\begin{figure*}
\vspace{-5pt}
\begin{center}
\includegraphics[width=0.95\textwidth,height=2.3in]{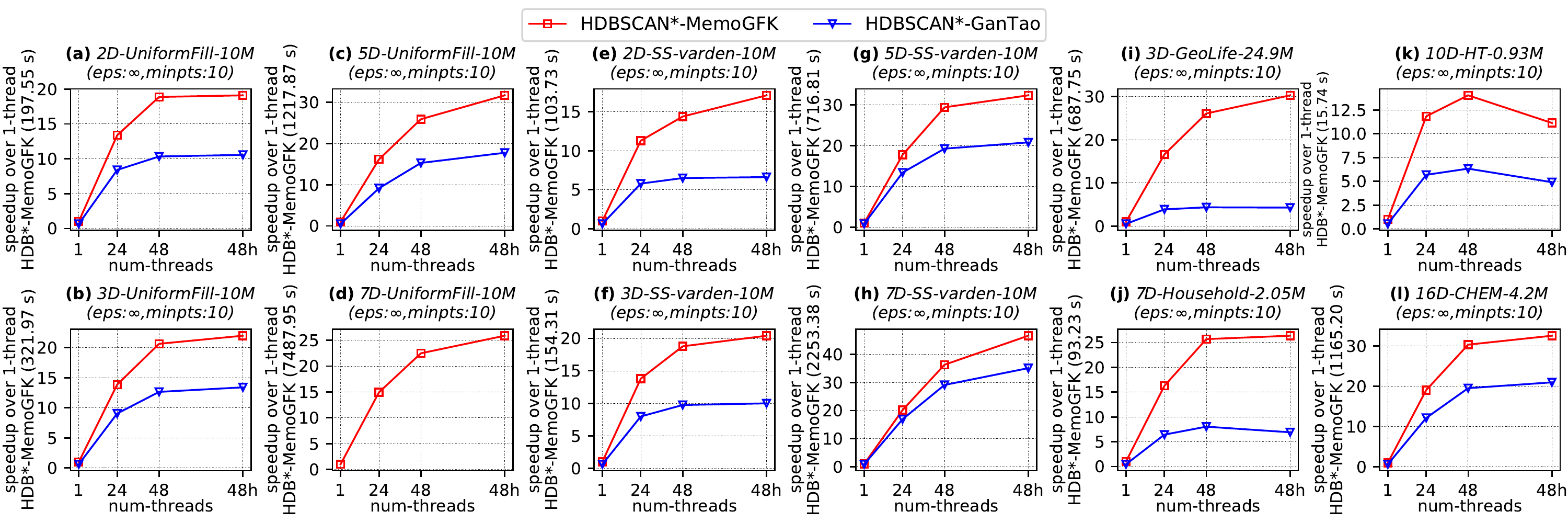}
\caption{Speedup of implementations for \hdbscan MST generation over the \emph{best} serial baselines vs.
  thread count using
   $\minpts=10$.
  The best serial baseline and its running time for each data set
  is shown on the $y$-axis label. ``48h''
  on the $x$-axis refers to 48 cores with hyper-threading.}
\label{plot:hdbscan_speedup}
\end{center}
\end{figure*}

\iffullversion
\begin{figure*}
  \begin{center}
    \begin{subfigure}[b]{3in}
      \center
      \includegraphics[trim=0 0 100 0,clip,width=0.8\textwidth]{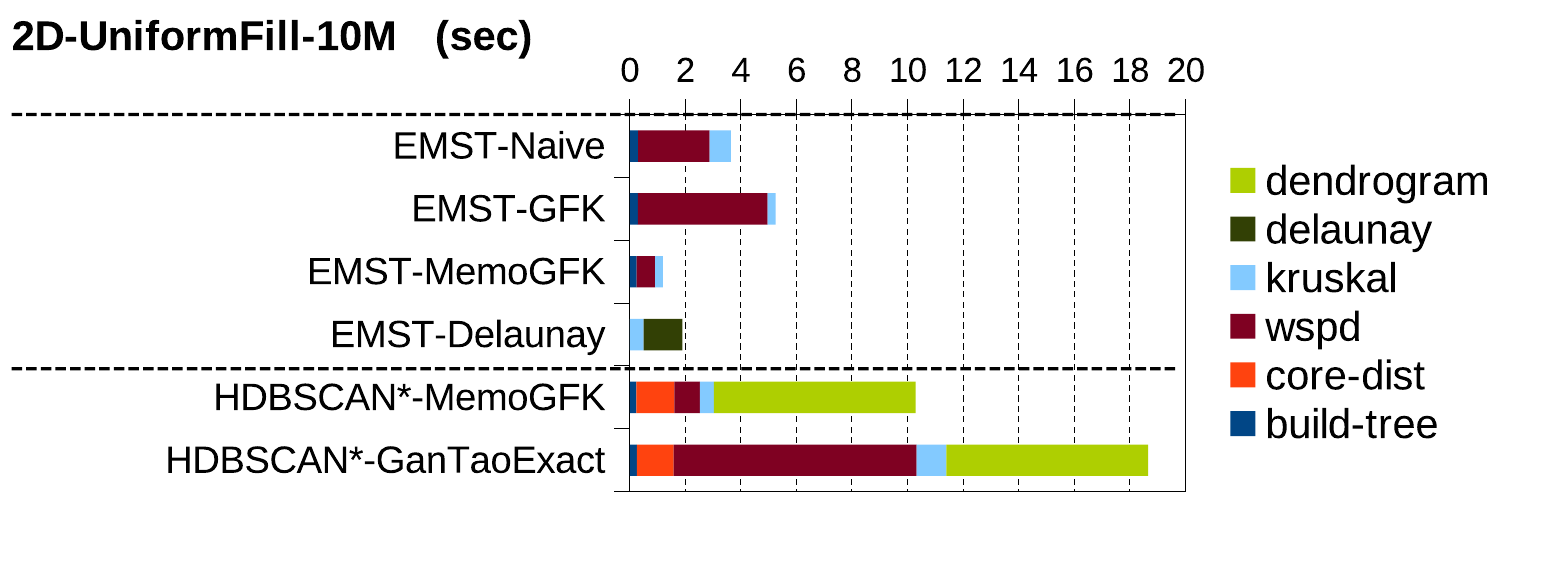}
      %\caption{}
    \end{subfigure} %
    \begin{subfigure}[b]{3in}
      \center
      \includegraphics[trim=0 0 0 0,clip,width=1.05\textwidth]{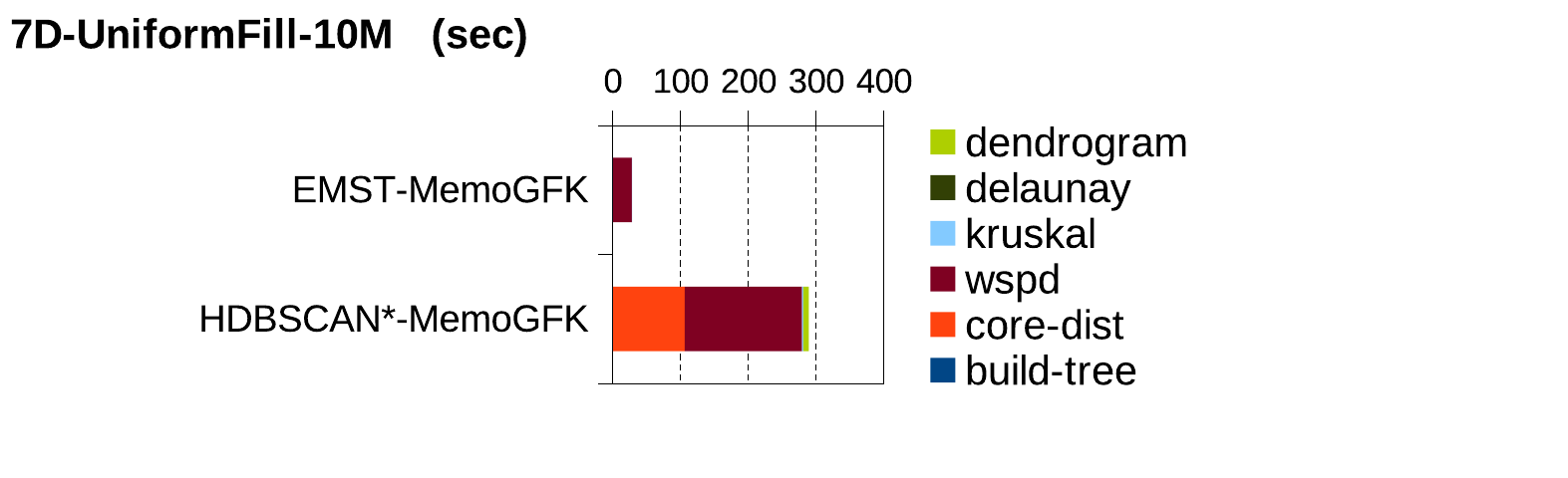}
      %\caption{}
    \end{subfigure} %
    \begin{subfigure}[b]{3in}
      \center
      \includegraphics[trim=0 0 110 0,clip,width=0.8\textwidth]{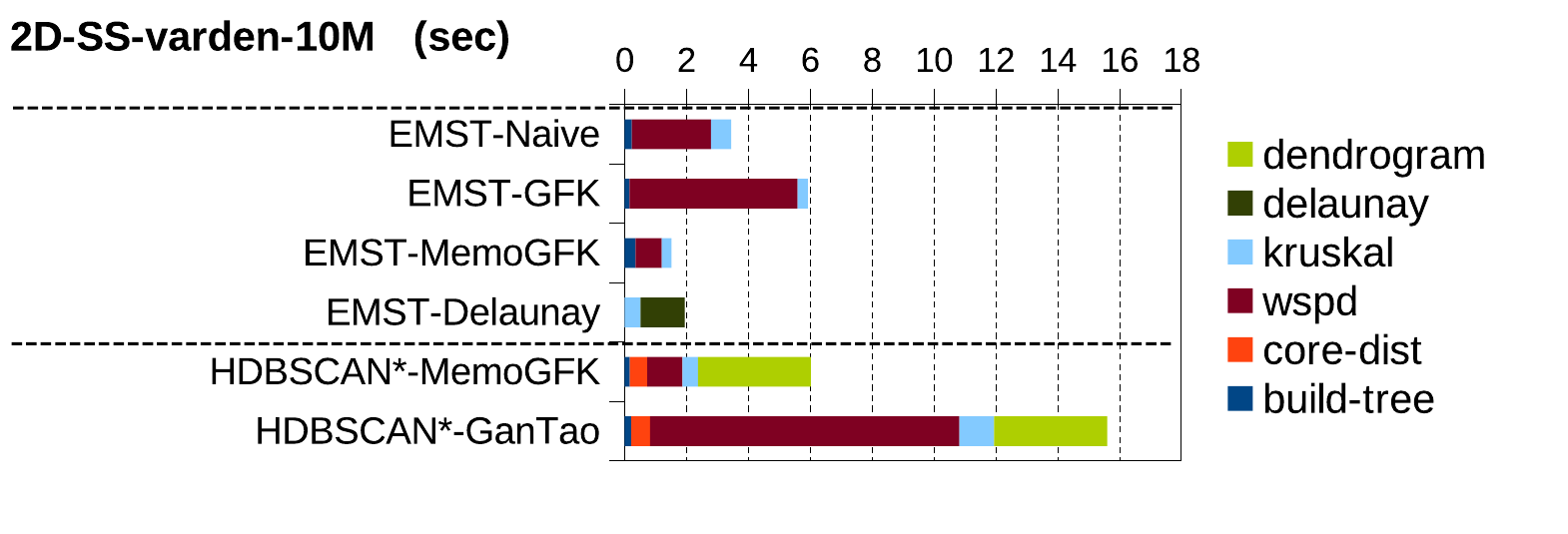}
      %\caption{}
    \end{subfigure} %
    \begin{subfigure}[b]{3in}
      \center
      \includegraphics[trim=0 0 106 0,clip,width=0.82\textwidth]{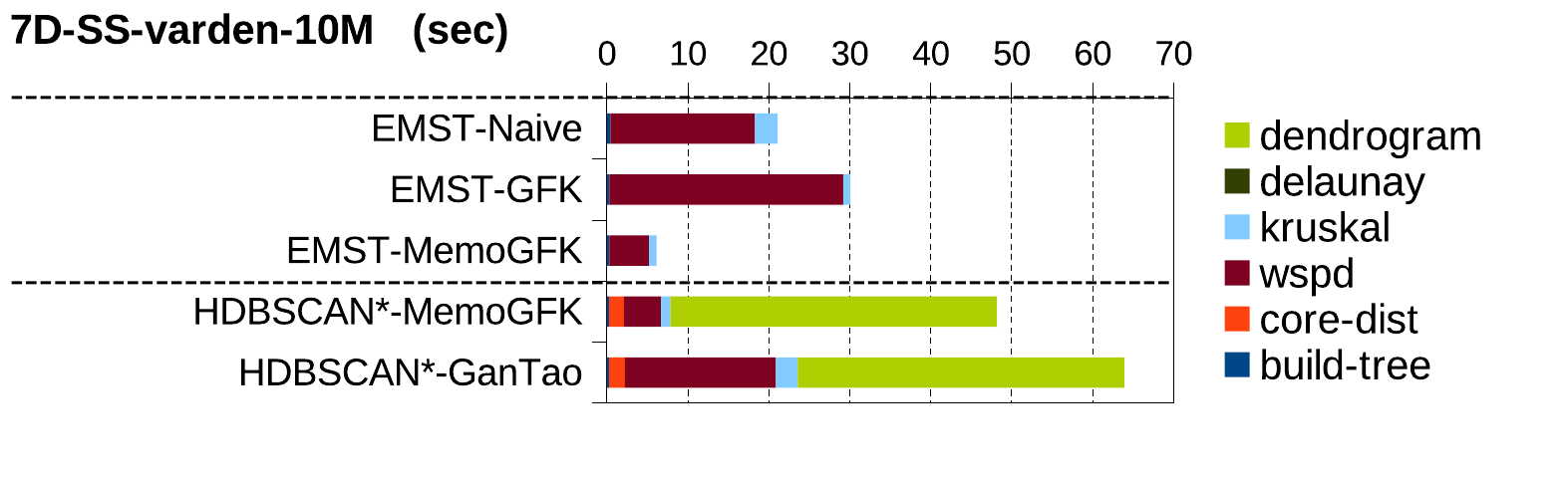}
      %\caption{}
    \end{subfigure} %
    \caption{
      Decomposition of running times for constructing the EMST and \hdbscan
      MST on various data sets  using all 48 cores with hyper-threading.
      $\minpts=10$ for \hdbscan.
      In the legend,
      "dendrogram" refers to computing the \newddg{};
      "delaunay" refers to computing the Delaunay triangulation;
      "kruskal" refers to Kruskal's MST algorithm;
      "wspd" refers to computing the WSPD decomposition, or the sum of WSPD tree traversal times across rounds;
      "core-dist" refers to computing core distances of all points; and
      "build-tree" refers to building a \kdt{} on all points.
    }
    \label{plot:bar_plots}
  \end{center}
\end{figure*}

\else
\begin{figure}
\vspace{-3pt}
  \begin{center}
    \begin{subfigure}[b]{3in}
      \includegraphics[trim=0 15 0 5, clip,width=0.9\textwidth]{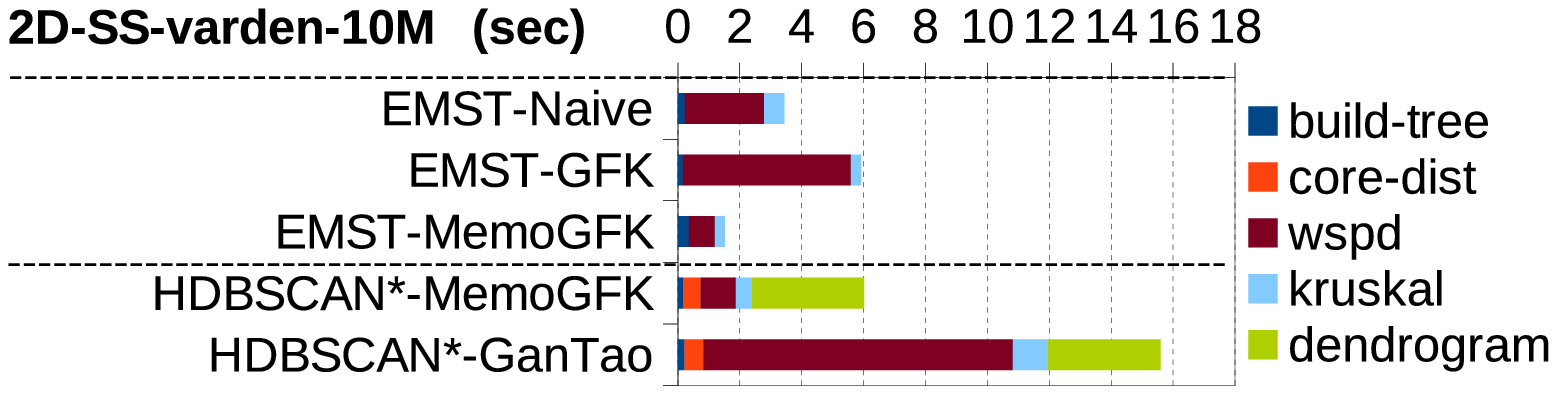}
      %\caption{}
    \end{subfigure} %
    \begin{subfigure}[b]{3in}
      \includegraphics[trim=0 15 88 5, clip,width=0.72\textwidth]{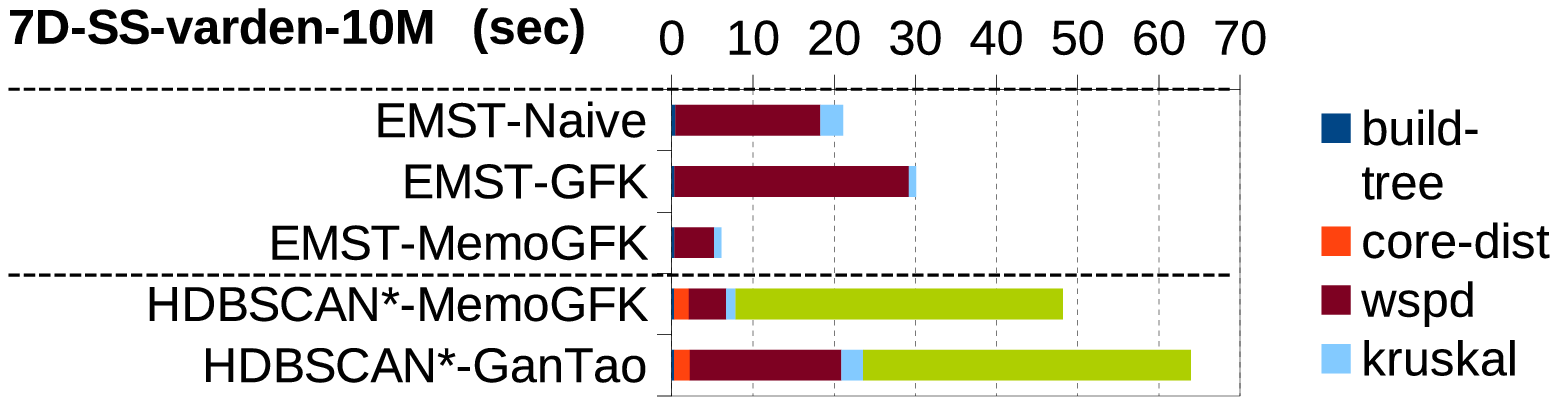}
      %\caption{}
    \end{subfigure} %
%    \medskip
    \caption{
      Decomposition of running times for  EMST and \hdbscan ($\minpts=10$)
      on two data sets using 48 cores with hyper-threading.
      "dendrogram" refers to computing the \newddg{};
      ``kruskal'' refers to Kruskal's MST algorithm;
      "wspd" refers to computing the WSPD decomposition, or the sum of WSPD tree traversal times across rounds;
      "core-dist" refers to computing core distances of all points; and
      "build-tree" refers to building a \kdt{} on all points.
    }
    \label{plot:bar_plots}
  \end{center}
\end{figure}

\fi

\myparagraph{Performance of Our Implementations}
\iffullversion
Raw running times for our implementations are presented in Tables~\ref{table:emst}
and~\ref{table:hdbscan} in the Appendix.
\fi
Table~\ref{tab:speedup} shows the self-relative speedups and speedups
over the fastest sequential time of our parallel implementations on 48 cores.  Figures~\ref{plot:emst_speedup} and~\ref{plot:hdbscan_speedup} show the parallel speedup as a function of
thread count for our implementations of EMST and \hdbscan with
$\minpts=10$, respectively, against the fastest sequential times.  For most data sets, we see additional speedups from using hyper-threading compared to just using a single thread per core.
A decomposition of parallel timings for all of our implementations on
several data sets is presented in Figure~\ref{plot:bar_plots}.

{\begin{table}[!t]
\setlength{\tabcolsep}{3pt}
\footnotesize
\begin{tabular}{|l|c|c|c|c|}
\hline
\multicolumn{1}{|c|}{}       & \multicolumn{2}{c|}{\begin{tabular}[c]{@{}c@{}}Speedup over Best Sequential\end{tabular}} & \multicolumn{2}{c|}{\begin{tabular}[c]{@{}c@{}}Self-relative Speedup\end{tabular}} \\ \hline
\multicolumn{1}{|c|}{Method} & Range                                            & Average                                  & Range                                         & Average                              \\ \hline
EMST-Naive                   & $3.51$-$10.69$x                                  & $6.90$x                                  & $16.79$-$33.47$x                              & $24.15$x                             \\ \hline
EMST-GFK                     & $1.52$-$7.01$x                                   & $3.60$x                                  & $8.11$-$11.51$x                               & $9.08$x                              \\ \hline
EMST-MemoGFK                 & $14.61$-$55.89$x                                 & $31.31$x                                 & $14.61$-$55.89$x                              & $31.31$x                             \\ \hline
\iffullversion
Delaunay                     & $14.12$-$16.64$x                                 & $15.38$x                                 & $33.11$-$34.54$x                              & $33.82$x                             \\ \hline
\fi
\hdbscan-MemoGFK                 & $11.13$-$46.69$x                                  & $26.29$x                                 & $11.13$-$46.69$x                              & $26.29$x                             \\ \hline
\hdbscan-GanTao                  & $4.29$-$35.14$x                                  & $13.76$x                                 & $8.23$-$40.32$x                               & $20.97$x                             \\ \hline
\end{tabular}
\caption{Speedup over the best sequential algorithm as well as the self-relative speedup on 48 cores.
}\label{tab:speedup}
\end{table}
}

\myparagraph{EMST Results} In Figure~\ref{plot:emst_speedup}, we see
that our fastest EMST implementations (EMST-MemoGFK) achieve
good speedups over the best sequential times, ranging from 14.61--55.89x on 48 cores with hyper-threading.
On the lower end, 10D-HT-0.93M has a speedup of 14.61x (Figure~\ref{plot:emst_speedup}k).
This is because for a small data set, the total work done is small and the parallelization overhead becomes prominent.

EMST-MemoGFK significantly outperforms
EMST-GFK and EMST-Naive by up to 17.69x and 8.63x, respectively, due
to its memory optimization, which reduces memory traffic.  We note that
EMST-GFK does not get good speedup, and is slower than EMST-Naive in
all subplots of Figure~\ref{plot:emst_speedup}.
This is because the WSPD input to
EMST-GFK ($S$ in Algorithm~\ref{alg:gfk}) needs to store references to the well-separated pair as well as the $\bccp$ points and distances,
whereas EMST-Naive only needs to store the $\bccp$ points and distances.
This leads to increased memory traffic for EMST-GFK for operations on $S$ and its subarrays, which outweighs its advantage of computing fewer $\bccp$s.
This is evident from Figure~\ref{plot:bar_plots}, which shows that EMST-GFK spends
more time in WSPD, but less time in Kruskal compared to
EMST-Naive. EMST-MemoGFK spends the least amount of time in WSPD due to its
pruning optimizations, while spending a similar amount of
time in Kruskal as EMST-GFK.
\iffullversion
Finally, the EMST-Delaunay implementation performs reasonably well, being only slightly (1.22--1.57x) slower than EMST-MemoGFK; however, it is only applicable for 2D data sets.
\fi

\myparagraph{\hdbscan Results} In
Figure~\ref{plot:hdbscan_speedup}, we see that our \hdbscan-MemoGFK
method achieves good speedups over the best sequential times, ranging from 11.13--46.69x on 48 cores.
Similar to EMST, we observe a similar lower speedup for 10D-HT-0.93M due to its small size, and observe higher speedups for larger data sets. The dendrogram construction takes at least 50\% of the total time for Figures 7a, b, and e--h, and hence has a large impact on the overall scalability. We discuss the dendrogram scalability separately.

We find that \hdbscan-MemoGFK consistently
outperforms \hdbscan-GanTao due to having a fewer number of
well-separated pairs (2.5--10.29x fewer) using the new definition of
well-separation.  This is also evident in Figure~\ref{plot:bar_plots},
where we see that \hdbscan-MemoGFK spends much less time than
\hdbscan-GanTao in WSPD computation.

We tried varying \minpts{} over a range from 10 to 50 for our \hdbscan
implementations and found just a moderate increase in the running time for increasing \minpts.

\myparagraph{MemoGFK Memory Usage}
Overall, the MemoGFK method for both EMST and \hdbscan reduces memory
usage by up to 10x compared to materializing all WSPD pairs in a naive
implementation.

\myparagraph{Dendrogram Results}
We separately report the performance of our parallel dendrogram algorithm in Figure~\ref{plot:dendro-bar}, which  shows the speedups and running times on all of our data sets. We see that the
 parallel speedup ranges from
5.69--49.74x (with an average of 17.93x) for the \hdbscan MST with \minpts=10, and 5.35--52.58x (with an average 20.64x) for single-linkage clustering, which is solved by generating a dendrogram on the EMST.
Dendrogram construction for single-linkage clustering shows higher scalability because the heavy edges are more uniformly
distributed in space, which creates a larger number of light-edge subproblems and increases parallelism.
In contrast, for \hdbscan, which has a higher value of $\minpts$, the sparse regions in the space tend to have clusters of edges with large weights even if some of them have small Euclidean distances,
since these edges have high mutual reachability distances.
Therefore, these heavy edges are less likely to divide up the edges into a uniform distribution of subproblems in the space, leading to lower parallelism.
On the other hand, we observe that across all data sets, the dendrogram for single-linkage clustering takes
an average of 16.44 seconds, whereas the dendrogram for \hdbscan takes an average of 9.27 seconds.
This is because the single-linkage clustering generates more light-edge subproblems and hence requires more work.
While it is possible to tune the fraction of heavy edges for different values of \minpts, we found that using $n/10$ heavy edges works reasonably well in all cases.

\begin{figure}[!t]
  \begin{center}
\vspace{-8pt}
\includegraphics[width=0.4\textwidth]{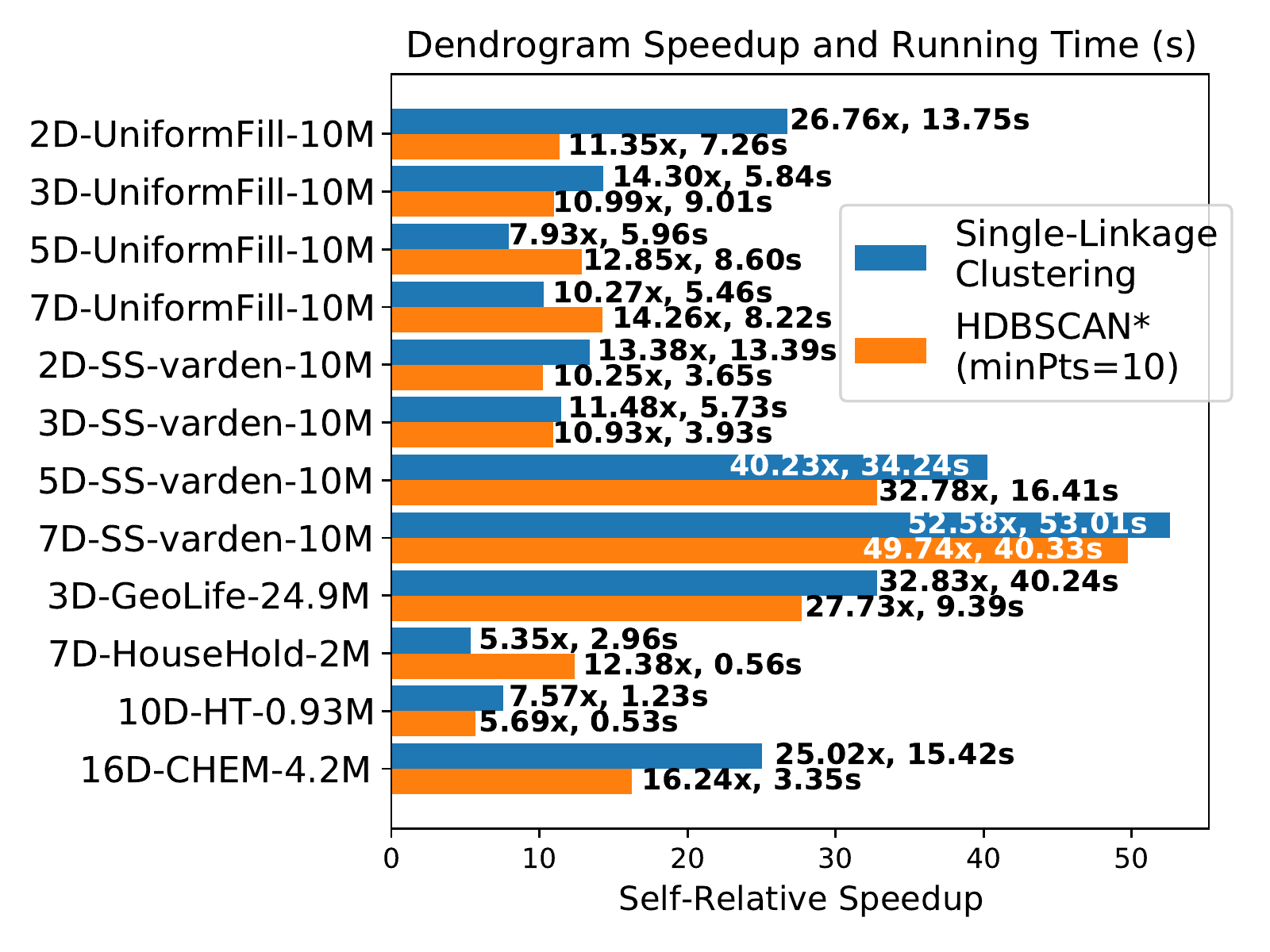}
\vspace{-5pt}
\caption{Self-relative speedups and times for \newddg{}
  computation for single-linkage clustering and \hdbscan ($\minpts=10$). The
  $x$-axis indicates the self-relative speedup on 48 cores with
  hyper-threading. The speedup and time is shown at the
  end of each bar.}
\label{plot:dendro-bar}
% \vspace{-5mm}
\end{center}
\end{figure}

%% \myparagraph{Influence of $\minpts$ on Parallel Running Time} In this
%% experiment, we vary \minpts{} over a range from 10 to 50 for our
%% \hdbscan implementations.
%% %% Figures~\ref{plot:hdbscan_minpts}(a) and
%% %% \ref{plot:hdbscan_minpts}(b) show that
%% The running time of our
%% implementations increases with increasing $\minpts$.  With a higher
%% value of \minpts, the time for $k$-NN queries for finding the core
%% distances of points increases.  Furthermore, more edges will have the
%% same edge weights, as their mutual reachability distances are more
%% likely to be defined by core distances, which leads to less pruning.
%% %% On the other hand, the dendrogram running time for
%% %% Figures~\ref{plot:hdbscan_minpts}(b) increases slightly with higher
%% %% \minpts due to less parallelism.
%% %% \julian{we can drop Figure 11 if we need space.}

%\input{results_hdbscan_minpts}

% \myparagraph{Varying $\bm\epsilon_{max}$} In this experiment, we vary
% $\epsilon_{max}$ over a range from $100$ to $\infty$ for the \hdbscan
% implementations.  Figure~\ref{plot:hdbscan_minpts} (b),(d) shows that our
% implementations have a decreasing trend in running time as
% $\epsilon_{max}$ decreases due to being able to terminate early.

\iffullversion
\section{Conclusion}\label{sec:conclusion}
We presented practical and theoretically-efficient parallel algorithms
of EMST and \hdbscan. We also presented a work-efficient
parallel algorithm for computing an ordered dendrogram and
reachability plot. Finally, we showed that our optimized implementations achieve
good scalability and outperform state-of-the-art implementations for
EMST and \hdbscan.

\fi

\myparagraph{Acknowledgements}
We thank Pilar Cano for helpful discussions.
This research was supported by DOE
Early Career Award \#DE-SC0018947, NSF CAREER Award \#CCF-1845763,
Google Faculty Research Award, DARPA SDH Award \#HR0011-18-3-0007, and
Applications Driving Architectures (ADA) Research Center, a JUMP
Center co-sponsored by SRC and DARPA.

%\clearpage
%\balance

\newpage

\bibliographystyle{ACM-Reference-Format}
\bibliography{ref}  % vldb_sample.bib is the name of the Bibliography in this case

\iffullversion
\newpage
\begin{appendix}

        \section{Parallel EMST and \hdbscan in 2D}\label{sec:plane_algo}
\subsection{Parallel EMST in 2D}\label{sec:emst_2d}

The \defn{Delaunay triangulation} on a set of points in 2D contains
triangles among every triple of points $p_1$, $p_2$, and $p_3$ such
that there are no other points inside the circumcircle defined by
$p_1$, $p_2$, and $p_3$~\cite{BCKO}.

In two dimensions, Shamos and Hoey~\cite{Shamos1975Closest} show that the EMST
can be computed by computing an MST on the Delaunay triangulation of the points.
Parallel Delauny triangulation can be computed in
$O(n \log n)$ work and $O(\log n)$ depth~\cite{Reif1992}, and has $O(n)$ edges, and so the MST computation requires the same work and depth.
We provide an implemenation of this algorithm using the parallel
Delaunay triangulation and parallel implementation of Kruskal's
algorithm from the Problem Based Benchmark Suite~\cite{BFGS12}.

\subsection{Parallel \hdbscan in 2D} \label{sec:2dhdbscan}

The \defn{ordinary Voronoi
  diagram} is a planar subdivision of a space where points in each
cell share the same nearest neighbor.  The \defn{$k$-order Voronoi
  Diagram} is a generalization of the ordinary Voronoi diagram, where
points in each cell share the same $k$-nearest
neighbors~\cite{aurenhammer1991voronoi}.  A \defn{$k$-order edge} is a
closely related concept, and defined to be an edge where there exists
a circle through the two edge endpoints, such that there are at most
$k$ points inside the circle~\cite{gudmundsson2002higher}. 

De Berg et al.~\cite{BergGR17} show that in two dimensions, the MST on
the mutual reachability graph can be computed in $O(n\log n)$ work.
Their algorithm computes an MST on a graph containing the $k$-order
edges, where $k = \minpts-3$, and where the edges are weighted by the
mutual reachability distances between the two endpoints. They prove
that the MST returned is an MST on the mutual reachability graph.  In
this section, we extend their result to the parallel setting.

To parallelize the algorithm, we need to compute the $k$-order edges
of the points in parallel.  This can be done by first computing the
$(k+1)$-order Voronoi diagram, and then converting the edges in the
Voronoi diagram to $k$-order edges, as shown by Gudmundsson et al.~\cite{gudmundsson2002higher}.
Specifically, we convert each Voronoi edge into a $k$-order edge by
connecting the two points that induce the two cells sharing the Voronoi edge.

Meyerhenke~\cite{meyerhenke2005constructing} shows that the family of
the order-$j$  Voronoi diagrams for all $1\leq j\leq k$ can be computed
in $O(k^2 n\log n)$ work and  $O(k\log^2 n)$ depth.
The algorithm
works by first computing the ordinary Voronoi diagram on
the input points. Then for each Voronoi cell, it computes the ordinary Voronoi
diagram again on the points that induce the neighboring cells. This
ordinary Voronoi diagram divides the Voronoi cell into multiple
subcells,
each of which corresponds to a cell in the Voronoi diagram of one higher order.
This process is repeated until obtaining the order-$k$ Voronoi diagram.
Lee~\cite{lee1982k} proves that the number of $k$-order edges is $O(nk)$, and so we can run parallel MST on these edges in $O(nk\log n)$ work and $O(\log n)$ depth.
This gives us the following theorem.

\begin{theorem}
  Given a set of $n$ points in two dimensions, we can compute
the MST on the mutual reachability graph in $O(\minpts^2 \cdot n\log n)$ work and
  $O(\minpts \cdot \log^2 n)$ depth.
\end{theorem}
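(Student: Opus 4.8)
The plan is to follow the sequential two-dimensional algorithm of De Berg et al.~\cite{BergGR17} and parallelize each of its stages, relying on known parallel subroutines for the geometric primitives. Recall their structural fact: in the plane, an MST of the mutual reachability graph can be recovered as an MST of the much sparser graph whose edge set consists of the $k$-order edges for $k = \minpts - 3$, each weighted by the mutual reachability distance of its two endpoints. I would take this correctness statement as given and focus on computing the required objects work-efficiently with low depth.

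First I would compute the core distance of every point, which is just the distance to its $\minpts$-nearest neighbor; the parallel $k$-NN procedure of Callahan and Kosaraju~\cite{callahan1993optimal} does this in $O(\minpts \cdot n\log n)$ work and $O(\log n)$ depth, comfortably within budget. Next I would generate all $k$-order edges. Following Gudmundsson et al.~\cite{gudmundsson2002higher}, it suffices to build the order-$(k{+}1)$ Voronoi diagram and then, for each Voronoi edge, emit the edge joining the two sites whose cells share it. The order-$(k{+}1)$ diagram is obtained as the last member of the family of order-$j$ diagrams for $1 \le j \le k{+}1$, which I would construct by parallelizing Meyerhenke's recursive scheme~\cite{meyerhenke2005constructing}: start from the ordinary Voronoi diagram (computable in $O(n\log n)$ work and $O(\log n)$ depth~\cite{Reif1992}), and at each of the $O(\minpts)$ levels, in parallel over all current cells, recompute an ordinary Voronoi diagram on the sites inducing the neighboring cells to refine each cell into the subcells of the next-order diagram. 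Running the per-level work in parallel yields $O(\minpts^2 \cdot n\log n)$ total work and $O(\minpts \cdot \log^2 n)$ depth, the extra logarithmic factor coming from prefix-sum compaction between levels (or from the depth of the Voronoi subroutine itself). By Lee's bound~\cite{lee1982k}, the number of $k$-order edges produced is $O(\minpts \cdot n)$.

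Finally, I would run a parallel MST algorithm (parallel Kruskal or Boruvka~\cite{JaJa92}) on this set of $O(\minpts \cdot n)$ edges weighted by mutual reachability distances, costing $O(\minpts \cdot n\log n)$ work and $O(\log n)$ depth. Summing the three stages, the Voronoi-family construction dominates, giving the claimed $O(\minpts^2 \cdot n\log n)$ work and $O(\minpts \cdot \log^2 n)$ depth.

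The main obstacle I expect is the parallelization of Meyerhenke's order-$j$ Voronoi construction: one must check that the per-level refinement of all cells is genuinely independent and can be done in parallel, that the total size of all the local sub-instances at a single level is $O(\minpts \cdot n)$ (so that the per-level work is $O(\minpts \cdot n\log n)$ and the overall work telescopes to $O(\minpts^2 \cdot n\log n)$), and that the bookkeeping---identifying neighboring cells, routing each site to the correct refined subcell, and compacting the level's output---fits within $O(\log^2 n)$ depth. By contrast, the reduction to $k$-order edges and the weighting by mutual reachability distances are inherited verbatim from De Berg et al., so no new geometric argument is needed there.
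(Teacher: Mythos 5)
Your proposal matches the paper's argument essentially step for step: reduce to an MST over the $k$-order edges ($k=\minpts-3$) via De Berg et al., obtain those edges from the order-$(k{+}1)$ Voronoi diagram following Gudmundsson et al., bound their number by $O(\minpts \cdot n)$ using Lee, and finish with a parallel MST on that sparse graph, with the Voronoi construction dominating at $O(\minpts^2 \cdot n\log n)$ work and $O(\minpts \cdot \log^2 n)$ depth. The one obstacle you flag---parallelizing Meyerhenke's level-by-level refinement---is not actually an open step, since Meyerhenke's result is itself a parallel algorithm achieving exactly the $O(k^2 n\log n)$ work and $O(k\log^2 n)$ depth bounds that the paper cites, so no new parallelization argument is required there.
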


For computing the ordinary Voronoi diagrams on each step of
Meyerhenke's algorithm, we use the parallel Delaunay triangulation
implementation from the Problem Based Benchmark Suite~\cite{BFGS12} and take the dual of the resulting triangulation. 
However, we found it to be significantly slower than our other methods
due to high work of the Voronoi diagram computations.

        \section{Subquadratic-work Parallel EMST}\label{sec:subquadratic-emst}

Callahan and Kosaraju's algorithm~\cite{CallahanK93} first constructs a fair-split tree $T$
and its associated WSPD in $O(n\log n)$ work and $O(\log n)$ depth~\cite{callahan1993optimal},
which is improved from a previous version with $O(n \log n)$ work and $O(\log^2 n)$ depth~\cite{CallahanK95}.
Then, the algorithm runs Boruvka's steps for $\lceil \log_2 n \rceil$ rounds.
In particular, in each round,
the algorithm finds the lightest outgoing edges only for the components with size at most
$2^{i+1}$, and merges the components connected by these edges.
To do so, the algorithm constructs for every component a set of candidate
points that contains the nearest point outside the component.
The algorithm searches for the candidates top-down on $T$, and maintains for each node
in the tree,
a list of all the component s that can have candidates in the subtree of that node.
They ensure the size of each list is $O(1)$ using the WSPD in a manner
identical to the all-nearest-neighbors algorithm of~\cite{CallahanK95}.
In this process, they push the lists down to the leaves of $T$,
so that the candidates corresponding to a component will be the leaves
that contain that component in their lists.

Let $P_j$ be the set of candidates for the $j$'th component.
$P_j$ is split into $\lceil |P_j|/2^{i+1} \rceil$ subsets of
size at most $2^{i+1}$ each, and the $\bccp$ is found between each subset
and the $j$'th component.
At round $i$, there are $n/2^i$ components, and
the $\bccp$ routine is invoked $\sum_{j=1}^{n/2^i} \lceil |P_j|/2^{i+1} \rceil = O(n/2^i)$ times, each with size at most $2^{i+1}$. 
Therefore, the work for $\bccp$ on each round is
$O((n/2^i) T_d(2^{i+1},2^{i+1}))$.
Since $T_d$ is at least linear, this dominates the work for each phase.
The total work for $\bccp$ computations is $O(T_d(n,n)\log n)$.

In our parallel algorithm, on each round, we perform both the
candidate listing step and $\bccp$ computations in parallel.  Listing
candidates for all components can be computed in parallel given a WSPD.
In particular, this uses the top-down computation used for the all-nearest-neighbor
search, parallelized using rake and compress operations~\cite{CallahanK95}, and takes logarithmic depth.
Wang et al.~\cite{wang2019dbscan} show that $\bccp$ can be computed in
parallel in $O(n^{2-(2/(\lceil d/2\rceil + 1))+\delta})$ expected work and
$O(\log^2 n\log^* n)$ depth \whp{}. Both the work and depth at each round is
therefore dominated by computing the $\bccp$s. With $O(\log n)$ rounds,
this results in $O(T_d(n,n) \log n)$ expected work and $O(\log^3 n\log^* n)$ depth \whp{},
where $T_d(n,n) = O(n^{2-(2/(\lceil d/2\rceil + 1))+\delta})$.  This
is also the work and depth of the overall EMST algorithm, as WSPD
construction only contributes lower-order terms to the complexity.

\begin{theorem}
We can compute the EMST on a set of $n$ points in $d$ dimensions in $O(n^{2-(2/(\lceil d/2\rceil + 1))+\delta}\log n)$ expected work and polylogarithmic depth \whp{}.
\end{theorem}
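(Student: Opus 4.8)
The plan is to parallelize Callahan and Kosaraju's subquadratic sequential EMST algorithm~\cite{CallahanK93} while preserving its structure. First I would build, once, a fair-split tree $T$ together with its associated WSPD using Callahan's parallel construction~\cite{callahan1993optimal}, which costs $O(n\log n)$ work and $O(\log n)$ depth and is therefore only a lower-order term. Then I would run $\lceil\log_2 n\rceil$ rounds of Boruvka's algorithm in which round $i$ contracts only the connected components whose current size is at most $2^{i+1}$. Correctness of this size-restricted schedule is the standard size-doubling argument: every component surviving round $i$ has size at least $2^{i+1}$, so $\lceil\log_2 n\rceil$ rounds suffice, and contracting a lightest outgoing edge of each small component is always safe by the cut property.

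The heart of each round is to compute, for every active component $j$, the lightest edge leaving $j$. Following Callahan and Kosaraju, I would first compute a candidate set $P_j$ that provably contains the point nearest to $j$ among the points outside $j$: traverse $T$ top-down, maintaining at each node a list of the components that can own a candidate in that subtree, and push these lists down to the leaves; the WSPD separation property guarantees every such list has length $O(1)$, so $\sum_j|P_j| = O(n)$. The sequential push-down is replaced by a parallel version based on rake-and-compress tree contraction, as in the parallel all-nearest-neighbors routine of~\cite{CallahanK95}; this takes $O(n)$ work and $O(\log n)$ depth per round. Given the $P_j$'s, I would split each $P_j$ into $\lceil|P_j|/2^{i+1}\rceil$ blocks of size at most $2^{i+1}$ and invoke a $\bccp$ between each block and component $j$; the true lightest outgoing edge of $j$ is recovered by minimizing over its blocks, after which the components are merged with a standard parallel graph-contraction step in $O(n)$ work and $O(\log n)$ depth. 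For the $\bccp$ subroutine I would plug in the parallel algorithm of Wang et al.~\cite{wang2019dbscan}, which runs in $T_d(m,m)=O(m^{2-(2/(\lceil d/2\rceil+1))+\delta})$ expected work and $O(\log^2 n\log^* n)$ depth \whp{} on inputs of size $m$.

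It remains to account for the cost. In round $i$ there are $n/2^i$ active components and the number of $\bccp$ calls is $\sum_j\lceil|P_j|/2^{i+1}\rceil = O(n/2^i)$, each on at most $2^{i+1}$ points; since the relevant $T_d(m,m)=\Theta(m^\alpha)$ with $\alpha\ge1$ is superadditive, distributing $2n$ points into these blocks costs at most $T_d(2n,2n)=O(T_d(n,n))$ per round, and summing over $\lceil\log_2 n\rceil$ rounds gives $O(T_d(n,n)\log n)$ expected work; candidate listing, tree contraction, and component merging contribute only $O(n)$ per round, which is dominated. For depth, each round costs $O(\log n)$ for the tree contraction, $O(\log^2 n\log^* n)$ \whp{} for the $\bccp$s, and $O(\log n)$ for the merge, i.e.\ $O(\log^2 n\log^* n)$ \whp{}; over $O(\log n)$ rounds this is $O(\log^3 n\log^* n)$ \whp{}, which is polylogarithmic. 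Substituting $T_d(n,n)=O(n^{2-(2/(\lceil d/2\rceil+1))+\delta})$ yields the claimed work bound, and the WSPD construction adds only lower-order terms.

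The step I expect to be the main obstacle is the parallel candidate-listing argument: proving that the top-down propagation keeps every per-node component list of constant size, and that the rake-and-compress realization faithfully reproduces the sequential push-down while respecting the component structure that changes from round to round. This mirrors the most delicate part of the all-nearest-neighbors analysis in~\cite{CallahanK95}, and re-deriving it for a dynamically shrinking family of components is where the write-up needs the most care; by comparison, plugging in the parallel $\bccp$ bound and the geometric-series work accounting are routine.
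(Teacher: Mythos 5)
Your proposal is correct and follows essentially the same route as the paper's own proof: parallel fair-split tree/WSPD construction, Boruvka rounds restricted to components of size at most $2^{i+1}$, candidate listing via the rake-and-compress parallelization of the all-nearest-neighbors push-down, the parallel $\bccp$ of Wang et al., and the per-round work bound $O((n/2^i)\,T_d(2^{i+1},2^{i+1}))$ summed over $O(\log n)$ rounds with $O(\log^3 n \log^* n)$ depth. The extra details you supply (cut-property correctness of the size-restricted schedule and the superadditivity accounting) are consistent with, and only elaborate on, the paper's argument.
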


        \section{Parallel Approximate OPTICS}\label{section:approx}

%% The OPTICS algorithm takes in the same set of parameters as \hdbscan.
%% It similarly constructs a base graph on the input points.
%% It has a different definition for the edge weights, which rather than being
%% weighted by the mutual reachability, are weighted by the \defn{reachability distance},
%% where $d_m(p,q) = max\{\CD(p),d(p,q)\}$ represents a directed edge from $p$ to $q$.
%% The OPTICS algorithm traverses the complete directed base graph
%% in a Prim-like order starting from an arbitrary vertex to output
%% a reachability plot based on the edge weights encountered.
%% In addition, Gan and Tao~\cite{Gan2018} defined the approximate OPTICS
%% in \textsc{Lemma 4.2} of their paper.

\myparagraph{Parallel Algorithm}
Gan and Tao~\cite{Gan2018} propose a sequential algorithm to solve
the approximate OPTICS problem, defined in \textsc{Lemma 4.2} of their paper~\cite{Gan2018} (this also gives an approximation to \hdbscan).
The algorithm takes in an additional parameter $\rho \geq 0$, which is related to the approximation factor.
The algorithm makes use of the WSPD and uses $O(n\cdot \minpts^2)$ space, with the separation constant $s=\sqrt{8/\rho}$. 
They construct a base graph by adding $O(\minpts^2)$ edges between each
well-separated pair,
and then
compute an MST on the resulting graph. Their algorithm takes $O(n\log n)$
work (where the dominant cost is computing the WSPD). We observe that
their algorithm can be parallelized by plugging in parallel WSPD and
MST routines, resulting in an $O(n\log n)$ work
and $O(\log^2 n)$ depth algorithm.

In parallel for all well-separated pairs, we compute an approximation to
the $\bccp$ for each pair. 
The algorithm uses the rake-and-compress algorithm of Callahan and Kosaraju~\cite{CallahanK95}
to obtain a set of coordinates used for approximation for every subset of the decomposition tree,
taking $O(n)$ work and $O(\log n)$ depth.
Then, for every pair in parallel, our algorithm computes the approximate $\bccp$ in constant work and depth,
similar to the sequential algorithm described in~\cite{CallahanK93}. Overall, this step takes linear work and $O(\log n)$ depth~\cite{CallahanK93,callahan1993optimal}. 

Similarly to Gan and Tao, we call the pair of points in the approximate $\bccp$ of each pair the \defn{representative points}.
For each well-separated pair $(A,B)$, there are four cases for generating edges between $A$ and $B$:
(a) if $|A|< \minpts$ and $|B|< \minpts$, then all pairs of points
between $A$ and $B$ are connected; (b) if $|A|\geq \minpts$ and $|B|< \minpts$, then the representative point
of $A$ is connected to all points in $B$; (c)
if $|A|< \minpts$ and $|B|\geq \minpts$, then the representative point
of $B$ is connected to all points in $A$; and (d)
if $|A|\geq \minpts$ and $|B|\geq \minpts$,
then only the  representative points are connected.
The weight of the edges are
\begin{align*}
  w(u,v) = \max\{\CD(u),\CD(v),\frac{d(u,v)}{1+\rho}\}
\end{align*}
given representative points $u$ and $v$.
In our implementation, we simplify the approximate \bccp{}
by simply picking a random pair of points from each well-separated pair, and also use the parallel MST algorithm introduced in Sections~\ref{sec:gfk} and~\ref{sec:impl:memogfk}, which computes the approximate \bccp{} on the fly for well-separated pairs that are not yet connected.
This will take $O(\minpts^2\cdot n)$ work
due to $O(\minpts^2)$ edges produced for each pair, and $O(\log^2 n)$ depth.
This gives us \cref{thm:approx}.

\begin{theorem}\label{thm:approx}
Given a set of $n$ points, we can compute the MST required for
approximate OPTICS in $O(n\log n)$ work, $O(\log^2 n)$ depth, and
$O(n\cdot \minpts^2)$ space.
\end{theorem}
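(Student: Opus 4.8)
The plan is to take Gan and Tao's sequential algorithm for approximate OPTICS~\cite{Gan2018} essentially line by line and replace each of its sequential building blocks with a work-efficient, low-depth parallel counterpart, and then argue that the base graph---and hence its MST---produced in parallel is identical to the one the sequential algorithm builds, so the approximation guarantee is inherited unchanged.

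First I would construct a fair-split tree (or a spatial-median \kdt{}) together with its WSPD, using the larger separation constant $s=\sqrt{8/\rho}$ required by Gan and Tao rather than the default $s=2$; the parallel construction of Callahan~\cite{callahan1993optimal} does this in $O(n\log n)$ work and $O(\log n)$ depth and yields $O(n)$ well-separated pairs. Concurrently I would run the parallel \knn{} routine of Callahan and Kosaraju with $k=\minpts$ to obtain every point's core distance in $O(\minpts\cdot n\log n)$ work and $O(\log n)$ depth, using $O(n\cdot\minpts)$ space. Next, for each well-separated pair I need a representative pair of points realizing an approximate \bccp{}; here I would use the rake-and-compress computation of~\cite{CallahanK95} to compute, for every node of the decomposition tree, its $O(1)$ approximation coordinates in $O(n)$ work and $O(\log n)$ depth, and then in parallel over all pairs read off the approximate \bccp{} of each pair in $O(1)$ work. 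Given the representative points, each pair $(A,B)$ contributes edges by the four cases on the sizes of $A$ and $B$ relative to \minpts{} (all pairs when both are small; a star from a representative when exactly one is small; a single edge when both are large), with weight $\max\{\CD(u),\CD(v),d(u,v)/(1+\rho)\}$ for representative points $u,v$; since each pair yields $O(\minpts^2)$ edges, the materialized edge set has size $O(n\cdot\minpts^2)$. Finally I would run a parallel MST algorithm on that edge set---parallel Boruvka, or the batched Kruskal/MemoGFK variant of Section~\ref{sec:gfk}---in $O(n\cdot\minpts^2\log n)$ work and $O(\log^2 n)$ depth.

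For the accounting: treating \minpts{} as a constant, as Gan and Tao do, the work is dominated by WSPD construction at $O(n\log n)$; the depth is dominated by the MST step at $O(\log^2 n)$; the space is dominated by the materialized edge set at $O(n\cdot\minpts^2)$. Correctness is inherited: the parallel primitives compute exactly the same tree, WSPD, and core distances, and---because rake-and-compress reproduces the sequential top-down propagation of coordinates---the same representative points, so the base graph (edge set and weights) is identical to Gan and Tao's, and their analysis certifies that an MST of it solves approximate OPTICS for the parameter $\rho$. The implementation shortcut of instead choosing a uniformly random pair of points from each well-separated pair only substitutes different representatives; since any two points within a well-separated pair lie within the separation slack, the same approximation bound still applies.

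The step I expect to be the main obstacle is the representative-point computation: turning Gan and Tao's sequential top-down push of approximation coordinates down the decomposition tree into a work-efficient parallel procedure of $O(\log n)$ depth. This is precisely the structure that the rake-and-compress machinery of Callahan and Kosaraju handles for all-nearest-neighbors, so the route is clear, but checking that the parallel version produces the identical per-node coordinate sets---and therefore preserves Gan and Tao's specific approximation proof rather than merely giving some approximation---is the delicate part. A secondary concern is keeping the depth of the MST phase polylogarithmic despite the inflated $O(n\cdot\minpts^2)$ edge count; batching edges by weight as in the GFK algorithm keeps this at $O(\log^2 n)$.
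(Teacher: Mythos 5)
Your proposal matches the paper's own argument essentially step for step: parallelize Gan and Tao by plugging in the parallel WSPD with separation constant $s=\sqrt{8/\rho}$, compute representative points via the Callahan--Kosaraju rake-and-compress approximate-\bccp{} routine in $O(n)$ work and $O(\log n)$ depth, generate the same four-case edge set of size $O(n\cdot\minpts^2)$ with weights $\max\{\CD(u),\CD(v),d(u,v)/(1+\rho)\}$, and finish with the parallel (GFK-style) MST in $O(\log^2 n)$ depth, with correctness inherited because the base graph is the same one Gan and Tao analyze. The cost accounting (work dominated by WSPD, depth by MST, space by the edge set, with \minpts{} treated as a constant in the work bound) and even the random-representative implementation shortcut coincide with the paper, so there is nothing to add.
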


\begin{figure}[!t]
\begin{center}
\includegraphics[width=0.49\textwidth]{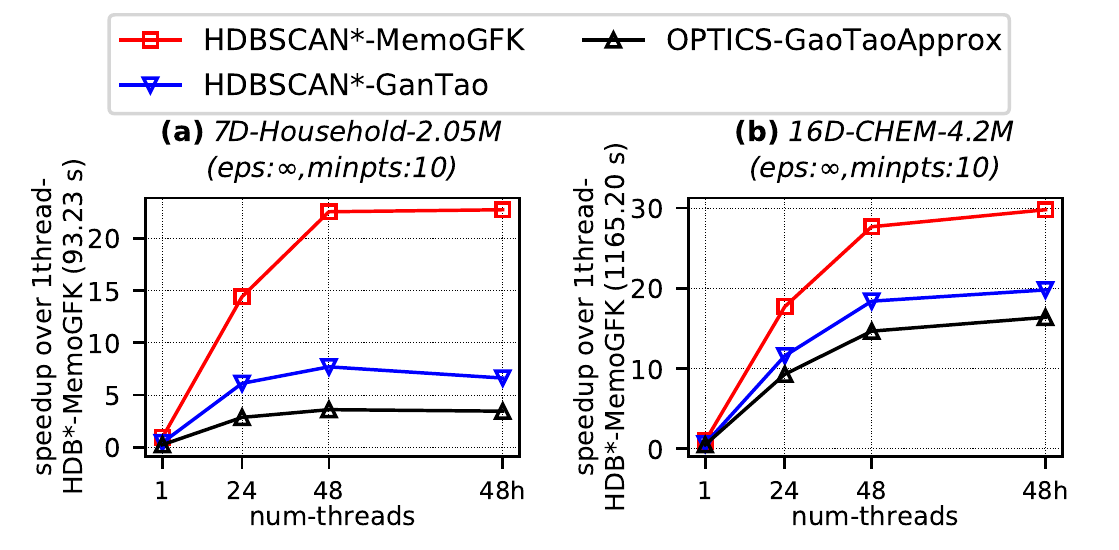}
\caption{Speedup of \hdbscan MST implementations over the \emph{best} serial baselines vs.
  thread count.
The best serial baseline and its running time for each data set
  is shown on the $y$-axis label. ``48h''
  on the $x$-axis refers to 48 cores with hyper-threading.
}
\label{plot:hdbscan_approx}
% \vspace{-5mm}
\end{center}
\end{figure}

\myparagraph{Experimental Results}
We study the performance of our parallel implementation for
the approximate OPTICS problem, which we call \defn{OPTICS-GanTaoApprox}.
It uses the
MemoGFK optimization described in Section~\ref{sec:impl:memogfk}.
We found that when run with a reasonable parameter of $\rho$ 
that leads to good clusters,
OPTICS-GanTaoApprox is usually slower than our exact version of the
algorithm (\hdbscan-GanTao, described in Section~\ref{sec:hdbscan_wspd:existing}).
The primary reason is that a
reasonable $\rho$ value requires a high separation constant in the
WSPD, which produces a very large number of well-separated pairs,
leading to poor performance.  In contrast, in the exact algorithm, a
small separation constant ($s=2$) is sufficient for correctness.
Figure~\ref{plot:hdbscan_approx} shows the speedups on two data sets
for OPTICS-GanTaoApprox with $\rho = 0.125$ (corresponding to a
separation constant of $8$) compared with other methods.
Across all of the data sets, we found 
OPTICS-GanTaoApprox to be slower than 
\hdbscan-GanTao by a factor of 1.00--1.96x, and slower than \hdbscan-MemoGFK by a factor of 1.72--7.48x.

        \section{Relationship between EMST and \hdbscan MST}\label{sec:minpts_ex}

We now show that for $\minpts \leq 3$, the EMST is always an MST of the \hdbscan base graph by having the same set of edges,
but for higher values of \minpts it is possible that this is not the case.
For example, \cref{fig:minpts_ex} gives an example
where EMST is not an MST of the \hdbscan base graph when $\minpts = 4$.

\begin{theorem}
An EMST is always an MST for the \hdbscan mutual reachability graph when $\minpts \leq 3$. 
\end{theorem}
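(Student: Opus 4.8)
The plan is to show that when $\minpts \leq 3$, the mutual reachability distance $d_m(p,q) = \max\{\CD(p), \CD(q), d(p,q)\}$ produces exactly the same minimum spanning tree as the Euclidean distance $d(p,q)$. Since an MST depends only on the relative order of edge weights (more precisely, the cut and cycle properties only care about comparisons between edge weights), it suffices to argue that the EMST $T$ is also an MST of $G_{MR}$. I would do this by verifying that $T$ satisfies the cycle property in $G_{MR}$: for every non-tree edge $(p,q)$ and every tree edge $(x,y)$ on the tree path between $p$ and $q$, we have $d_m(x,y) \leq d_m(p,q)$.

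The key geometric observation is this: for $\minpts \leq 3$, the core distance $\CD(p)$ of a point $p$ is the distance to its nearest or second-nearest neighbor (counting $p$ itself, so $\minpts=1$ gives $\CD(p)=0$, $\minpts=2$ gives the nearest-neighbor distance, $\minpts=3$ gives the second-nearest-neighbor distance). The first step is to reduce to the only nontrivial case $\minpts = 3$, since $\minpts=1$ gives $d_m = d$ outright and $\minpts=2$ is easily handled (the core distance of $p$ is $d(p, \mathrm{NN}(p))$, and the edge from $p$ to its nearest neighbor is in the EMST, so $\CD(p)$ never exceeds the weight of an incident EMST edge). For $\minpts=3$, the central claim I would establish is: for any point $p$, its two nearest neighbors $a,b$ satisfy that at least one of the edges $(p,a)$, $(p,b)$ lies on the EMST, and in fact I want the stronger statement that $\CD(p) \le w(e)$ for every EMST edge $e$ incident to $p$ — equivalently, that $\CD(p)$ is at most the weight of the heaviest EMST edge incident to $p$ is not enough; I need it bounded by comparison to the specific tree edges. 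The cleanest route: show $\CD(p) \le d_m(p,q)$ whenever $(p,q)$ is relevant, by noting $\CD(p)$ is the distance to $p$'s second-nearest neighbor, and use the fact that among any three points forming a "path" through $p$ in the tree, the EMST edge weights incident to $p$ dominate.

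Concretely, I would take a non-tree edge $(p,q)$ and a tree edge $(x,y)$ on the $p$–$q$ path, and bound $d_m(x,y) = \max\{\CD(x),\CD(y),d(x,y)\}$ term by term against $d_m(p,q) = \max\{\CD(p),\CD(q),d(p,q)\}$. The term $d(x,y) \le d(p,q)$ follows from the cycle property of the EMST $T$ with respect to Euclidean distance (this is where we use that $T$ is an EMST). The terms $\CD(x)$ and $\CD(y)$ require the geometric lemma: I claim $\CD(x) \le \max\{d(x,y'), d(x,y'')\}$ where $y', y''$ are the two EMST-neighbors of $x$ chosen appropriately, and since one of these neighbors lies on the tree path toward $p$ or toward $q$, its edge weight is $\le d(p,q) \le d_m(p,q)$; actually the subtle point is when $x$ has EMST-degree 1, but then $x \in \{p,q\}$ or the path is short, handled directly. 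The **main obstacle** is precisely this lemma — proving that for $\minpts=3$ the core distance of a point is always dominated by the Euclidean weights of its incident EMST edges (or the edges on the relevant tree path). This needs a packing/angle argument: if $p$'s two nearest neighbors were both much closer than every EMST edge at $p$, one could shortcut the EMST, contradicting minimality — but making this rigorous requires care about whether both nearest neighbors connect to $p$ through the tree on the same side, which is exactly the kind of case analysis the counterexample at $\minpts=4$ (Figure~\ref{fig:minpts_ex}) shows becomes impossible to push through for larger $\minpts$.
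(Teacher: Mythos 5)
Your overall strategy---verify directly that the EMST $T$ satisfies the path-optimality (cycle) condition in $G_{MR}$, i.e., $d_m(x,y)\le d_m(p,q)$ for every non-tree edge $(p,q)$ and every tree edge $(x,y)$ on the $T$-path between $p$ and $q$---is sound and is genuinely different from the paper's proof, which instead runs an exchange argument transforming an arbitrary MST $T'$ of $G_{MR}$ into $T$ edge by edge without changing the total weight. The gap is in the step you yourself flag as the ``main obstacle,'' and the problem is that you have mis-stated what is needed. The lemma you propose to prove---that for $\minpts=3$ the core distance of a point is dominated by the Euclidean weights of its incident EMST edges---is false in general: a point $x$ of EMST-degree $1$ whose unique tree edge goes to a very close nearest neighbor, with all other points far away, has $\CD(x)$ (its second-nearest-neighbor distance) much larger than every incident tree-edge weight. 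No packing or angle argument can rescue that statement, and none is needed. What the cycle condition actually requires is only a bound relative to $d_m(p,q)$ for vertices \emph{on the tree path}, and there the argument is a two-line counting step: if $x$ is an interior vertex of the path, it has two distinct path neighbors, \emph{both} of whose edges lie on the path and hence have Euclidean length at most $d(p,q)$ by the Euclidean cycle property of $T$; since $\minpts\le 3$ means $\CD(x)$ is at most the larger of the distances from $x$ to any two distinct other points, $\CD(x)\le d(p,q)\le d_m(p,q)$. If instead $x\in\{p,q\}$, then $\CD(x)\le d_m(p,q)$ holds trivially because $\CD(p)$ and $\CD(q)$ appear inside $d_m(p,q)$. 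Note also that your sketch only asserts that \emph{one} of the two chosen neighbors lies on the path, which would bound only one of the two distances and is insufficient; you need both, and for interior vertices both are available. This also explains cleanly why the threshold is $\minpts=3$: an interior path vertex is only guaranteed two nearby neighbors, which bounds the second-nearest-neighbor distance but not the third, consistent with the paper's $\minpts=4$ counterexample.

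With that repair, your route goes through and is arguably shorter and more transparent than the paper's: the paper adds an EMST edge $(u,v)$ to an MST $T'$ of $G_{MR}$, shows using the $\minpts=3$ structure that $(u,v)$ cannot be the unique heaviest edge of the resulting cycle and that some heaviest cycle edge outside $T$ carries its Euclidean distance as its $d_m$-weight, and then swaps repeatedly without changing the weight; your approach avoids the tie-handling and the existence argument for the swap partner entirely, at the cost of invoking the path-optimality characterization of MSTs (valid with ties). Your reduction of $\minpts\le 2$ is also more complicated than necessary: for $\minpts\le 2$ one has $\CD(p)\le d(p,q)$ for every pair, so $d_m\equiv d$ and there is nothing to prove; you do not need the claim that the nearest-neighbor edge belongs to the EMST.
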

\begin{proof}
For $\minpts \leq 2$, all edges in the \hdbscan mutual reachability graph have edge weights defined by
Euclidean distances, and so the edge weights are identical. We now discuss the case when $\minpts = 3$.

Let $T$ be an EMST, and $T' \neq T$ be an MST in $G_{MR}$.  We show that we can convert $T'$ to $T$ without changing the total weight. 
Consider any edge $(u,v) \in T$, but not in $T'$. If we add $(u,v)$ to $T'$, then we get a cycle $C$.

First, we show that $(u,v)$ cannot be the unique heaviest edge in $C$  under $d_m$. 
Recall that
$\CD(p)$ is the core distance of a point $p$ and $d_m(p,q) =
\max\{\CD(p),\CD(q), d(p,q)\}$. 
Assume by contradiction that $(u,v)$ is the unique heaviest edge in $C$ under $d_m$.

If $d_m(u,v) = d(u,v)$, then $(u,v)$ is also the unique heaviest edge 
in $C$ in the Euclidean complete graph, and so it cannot be in $T$, which is the EMST. This is a contradiction.
 
Now we consider the case where $d_m(u,v) > d(u,v)$. Without loss of generality, suppose that $d_m(u,v) = \CD(u)$. 
Then $v$ must be $u$'s unique nearest neighbor; otherwise, $d_m(u,v)=\CD(u)=d(u,v)$ because we have $\minpts = 3$. However, then all other points have larger distance to $u$ than $d(u,v)$, and $u$ must have an edge to one of these other points in the cycle $C$. Thus, $(u,v)$ cannot be the unique heaviest edge in $C$. This is a contradiction. 

Now, given that $(u,v)$ is not the unique heaviest edge in $C$, we can replace one of the heaviest edges $e$ that is in $C$, but not in $T$, with $(u,v)$, and obtain another MST in $G_{MR}$ with the same weight. 

Below we show that there is always such an edge $e$ in $C$. 
We first argue that there must be some heaviest edge in $C$ that has its Euclidean distance as its weight in $G_{MR}$. 
Consider a heaviest edge $(a,b)$ in $C$, and without loss of generality, suppose that $d_m(a,b) = \CD(a)$.
If $(a,b)$ does not have its Euclidean distance as its edge weight, then $b$ must be $a$'s unique nearest neighbor. 
Besides $(a,b)$, $a$ must be incident to another edge in $C$, which we denote as $(a,c)$. $d_m(a,c)$ must equal $d_m(a,b)$: we have
$d_m(a,c) \geq \CD(a) = d_m(a,b)$ because $b$ is $a$'s unique nearest neighbor, but we also have $d_m(a,c) \leq d_m(a,b)$ because $(a,b)$ is a heaviest edge in $C$. Therefore, $d_m(a,c)=d_m(a,b)$, and $(a,c)$ is one of the heaviest edges in $C$ under $d_m$. Furthermore, $d_m(a,c)=d(a,c)$ because $d(a,c) \leq d_m(a,c)$ by definition and $d(a,c) \geq \CD(a) = d_m(a,b) = d_m(a,c)$ because $\minpts = 3$ and $b \neq c$ is $a$'s unique nearest neighbor. 
Thus, we have shown that $(a,c)$ is a heaviest edge in $C$ that has its Euclidean distance as its weight in $G_{MR}$.

All heaviest edges that have the Euclidean distance as their weight must also be  the heaviest edges in $C$ in the Euclidean complete graph, and thus they cannot all be in the EMST $T$. Therefore, there must exist some heaviest edge $e \in C$ that is in $T'$ but not in $T$.
We can always find such an edge in $T'$ and swap it with the edge $(u,v)$ in $T$ to make $T'$ share more edges with $T$, without changing the total weight of $T'$ in $G_{MR}$, as both edges are heaviest edges in $C$ under $d_m$. 
We can repeat this process until we obtain $T$. Therefore, $T$ is also an MST in $G_{MR}$.
\end{proof}

\begin{figure}
  \includegraphics[width = 0.5\columnwidth]{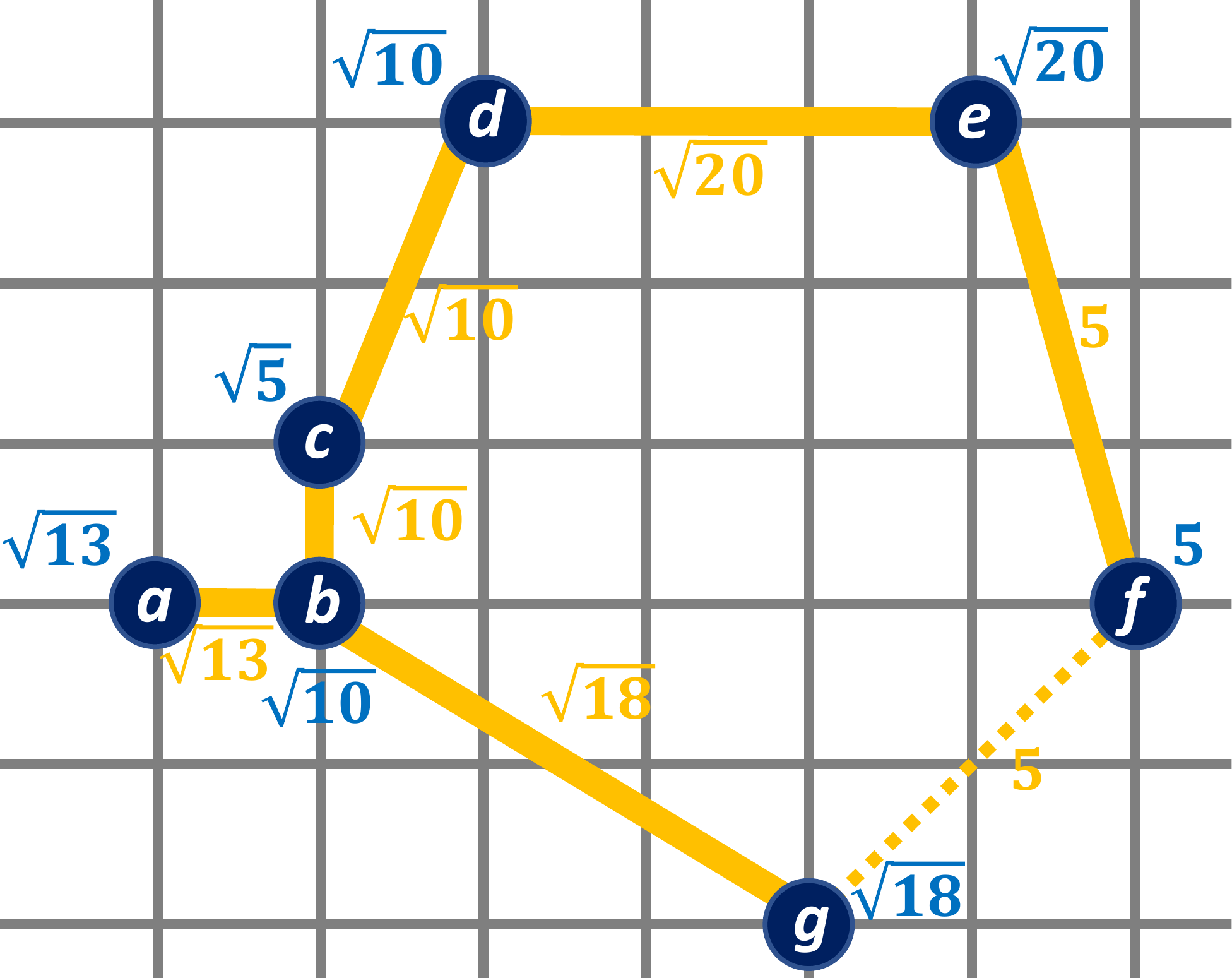}
  \caption{An example where the EMST is not an MST of \hdbscan base graph (call it MST$^*$), when $\minpts = 4$.
The blue values are core distances of the points.
The yellow values are weights of edges according to their mutual reachability distances.
The solid edges form MST$^*$. Both edge $(f,g)$ and $(e,f)$ are in the EMST, but cannot both be in MST$^*$ because they are the heaviest edges in the $g\mhyphen b\mhyphen  c\mhyphen  d\mhyphen e\mhyphen f\mhyphen g$ cycle.
}
    \label{fig:minpts_ex}
\end{figure}

        \section{Additional Data from Experiments}
\cref{table:emst_mlpack} shows the running times for \texttt{mlpack}. \cref{table:emst} shows the running times of our implementations for EMST. \cref{table:hdbscan} shows the running times of our implementations for \hdbscan.

        \begin{table*}[!t]
\small
\begin{tabular}{|c|c|}
   \hline
& \texttt{mlpack} (1 thread)    \\\hline
2D-UniformFill-10M & 90.09 \\\hline
3D-UniformFill-10M & 211.04 \\\hline
5D-UniformFill-10M & 964.13 \\\hline
7D-UniformFill-10M & 4777.29 \\\hline
2D-SS-varden-10M & 84.79 \\\hline
3D-SS-varden-10M & 139.18 \\\hline
5D-SS-varden-10M & 184.08 \\\hline
7D-SS-varden-10M & 233.28 \\\hline
3D-GeoLife-10M & 211.37 \\\hline
7D-Household-2.05M & 59.15 \\\hline
10D-HT-0.93M & 14.85 \\\hline
16D-CHEM-4.2M & 732.6 \\\hline
\end{tabular}
\caption{Table of running times in seconds for the sequential EMST implementation from \texttt{mlpack}.}\label{table:emst_mlpack}
\end{table*}

        \begin{table*}[!htbp]
   \small

\begin{tabular}{|c|c|c|c|c|c|c|c|c|}
\hline
\multicolumn{9}{|c|}{EMST}                                                                                                                               \\ \hline
                   & \multicolumn{2}{c|}{EMST-Naive} & \multicolumn{2}{c|}{EMST-GFK} & \multicolumn{2}{c|}{EMST-MemoGFK} & \multicolumn{2}{c|}{Delaunay} \\ \hline
                   & 1 thread      & 48 cores      & 1 thread     & 48 cores     & 1 thread      & 48 cores        & 1 thread     & 48 cores     \\ \hline
2D-UniformFill-10M & $62.51$       & $3.64$          & $57.93$      & $6.11$         & $31.54$       & \textbf{1.20}     & $65.46$      & $1.90$         \\ \hline
3D-UniformFill-10M & $400.57$      & $19.30$         & $218.02$     & $26.07$        & $67.80$       & \textbf{2.24}     & --            & --              \\ \hline
5D-UniformFill-10M & --             & --               & --            & --              & $230.93$      & \textbf{5.03}     & --            & --              \\ \hline
7D-UniformFill-10M & --             & --               & --            & --              & $1585.65$     & \textbf{28.37}    & --            & --              \\ \hline
2D-SS-varden-10M   & $57.84$       & $3.45$          & $60.64$      & $6.90$         & $27.48$       & \textbf{1.60}     & $64.42$      & $1.95$         \\ \hline
3D-SS-varden-10M   & $240.24$      & $12.13$         & $189.52$     & $23.37$        & $48.72$       & \textbf{1.85}     & --            & --              \\ \hline
5D-SS-varden-10M   & $478.40$      & $19.41$         & $278.10$     & $31.19$        & $96.19$       & \textbf{3.04}     & --            & --              \\ \hline
7D-SS-varden-10M   & $626.78$      & $21.10$         & $336.62$     & $29.26$        & $205.15$      & \textbf{6.18}     & --            & --              \\ \hline
3D-GeoLife-10M     & $271.95$      & $10.97$         & $328.76$     & $36.31$        & $117.31$      & \textbf{4.77}     & --            & --              \\ \hline
7D-Household-2.05M & $280.28$      & $8.37$          & $214.08$     & $24.77$        & $37.60$       & \textbf{1.40}     & --            & --              \\ \hline
10D-HT-0.93M       & $19.28$       & $0.64$          & $12.36$      & $1.40$         & $5.17$        & \textbf{0.35}     & --            & --              \\ \hline
16D-CHEM-4.2M      & --             & --               & --            & --              & $821.81$      & \textbf{19.11}    & --            & --              \\ \hline
\end{tabular}

\caption{Table of running times in seconds for EMST. The fastest parallel time for each data set is in bold.
  The tests that do not complete within 3 hours or that run out of memory are shown as ``--".
The data sets with dimensionality greater than 2 are not applicable to Delaunay, and also shown as ``--''.}\label{table:emst}
\end{table*}

        \begin{table*}[!htbp]
   \small

\begin{tabular}{|c|c|c|c|c|}
\hline
\multicolumn{5}{|c|}{\hdbscan ($\minpts=10$)}                                                           \\ \hline
                   & \multicolumn{2}{c|}{\hdbscan-MemoGFK} & \multicolumn{2}{c|}{\hdbscan-GanTao} \\ \hline
                   & 1 thread     & 48 cores           & 1 thread            & 48 cores           \\ \hline
2D-UniformFill-10M & $197.55$     & \textbf{10.34}     & $298.03$            & $18.71$            \\ \hline
3D-UniformFill-10M & $321.97$     & \textbf{14.66}     & $517.71$            & $24.04$            \\ \hline
5D-UniformFill-10M & $1217.87$    & \textbf{38.41}     & $2395.68$           & $68.54$            \\ \hline
7D-UniformFill-10M & $7487.95$    & \textbf{289.27}    & --                   & --                  \\ \hline
2D-SS-varden-10M   & $103.73$     & \textbf{6.07}      & $163.37$            & $15.66$            \\ \hline
3D-SS-varden-10M   & $154.31$     & \textbf{7.56}      & $253.06$            & $15.44$            \\ \hline
5D-SS-varden-10M   & $716.81$     & \textbf{22.20}     & $885.92$            & $34.51$            \\ \hline
7D-SS-varden-10M   & $2253.38$    & \textbf{48.26}     & $2585.83$           & $64.13$            \\ \hline
3D-GeoLife-10M     & $687.75$     & \textbf{22.70}     & $1320.15$           & $160.48$           \\ \hline
7D-Household-2.05M & $93.23$      & \textbf{3.54}      & $204.75$            & $13.51$            \\ \hline
10D-HT-0.93M       & $15.74$      & \textbf{1.41}      & $29.75$             & $3.21$             \\ \hline
16D-CHEM-4.2M      & $1165.20$    & \textbf{35.77}     & $1820.61$           & $55.52$            \\ \hline
\end{tabular}

\caption{Table of running times in seconds for \hdbscan with $\minpts=10$. The fastest parallel time for each data set is in bold.
  The tests that do not complete within 3 hours, or that run out of memory are shown as ``--''.
}\label{table:hdbscan}
\end{table*}

\end{appendix}
\fi

\end{document}
\endinput